\newtheorem{theorem}{Theorem}[section]
\newtheorem{lemma}[theorem]{Lemma}
\newtheorem{proposition}[theorem]{Proposition}
\newtheorem{corollary}[theorem]{Corollary} 
\theoremstyle{definition}
\newtheorem{remark}[theorem]{Remark} 
\numberwithin{equation}{section}
\def\FF{\mathscr{F}}
\def\psiht{\widehat{\psi}^{\hbar}}
\def\psih{\psi^{\hbar}}
\def\phih{\phi^{\hbar}}
\def\upsilonh{\Upsilon^{\hbar}}
\def\HE{\theta}
\def\vfi{\varphi}
\def\z*{\bar z}
\def\uno{\mathsf 1}
\def\H{\mathsf H}
\def\C{\mathcal C}
\def\dom{\text{\rm dom}}
\def\ran{\text{\rm ran}}
\def\FF{\mathscr F}
\def\RE{\mathbb R}
\def\CO{{\mathbb C}}
\def\o{$\bar{\text{\rm o}}$}
\def\ph*{\phi_\star}
\def\be{\begin{equation}}
\def\ee{\end{equation}}
\def\min{{\rm min}}
\def\max{{\rm max}}
\def\-{{\rm in}}
\def\+{{\rm ex}}
\newcommand{\sgn}{\operatorname{sgn}}
\def\Im{\operatorname{Im}}
\def\Re{\operatorname{Re}}
\def\sigmap{\breve{\sigma}}
\def\sigmaq{\sigma}
\def\sigmazero{\sigma_0}
\def\Szero{A}
\DeclareMathOperator*{\slim}{s-lim}
\title[The semi-classical limit with a delta-prime potential]{The semi-classical limit with a delta-prime potential}
\author{Claudio Cacciapuoti}
\address{DiSAT, Sezione di Matematica, Universit\`a dell'Insubria, via Valleggio 11, I-22100
Como, Italy}
\email{claudio.cacciapuoti@uninsubria.it}
\author{Davide Fermi}
\address{Classe di Scienze, Scuola Normale Superiore, Piazza dei Cavalieri 7, I-56126
Pisa, Italy}
\email{fermidavide@gmail.com}
\author{Andrea Posilicano}
\address{DiSAT, Sezione di Matematica, Universit\`a dell'Insubria, via Valleggio 11, I-22100
Como, Italy}
\email{andrea.posilicano@uninsubria.it}
\thanks{The authors acknowledge the support of the National Group of Mathematical Physics (GNFM-INdAM)}
\begin{document}

\begin{abstract} 
We consider the quantum evolution $e^{-i\frac{t}{\hbar}H_{\beta}}\psi_{\xi}^{\hbar}$ of a 
Gaussian coherent state $\psi_{\xi}^{\hbar}\in L^{2}(\RE)$ localized close to the classical state $\xi\equiv(q,p)\in\RE^{2}$, where $H_{\beta}$
denotes a self-adjoint realization of the formal Hamiltonian $-\frac{\hbar^{2}}{2m}\,\frac{d^{2}\,}{dx^{2}}\,{+}\,\beta\delta'_{0}$, with $\delta'_{0}$ the derivative of Dirac's delta distribution at $x=0$ and $\beta$ a real parameter. We show that in the semi-classical limit such a quantum evolution can be approximated (w.r.t. the $L^{2}(\RE)$-norm, uniformly for any $t\in\RE$ away from the collision time) by $e^{\frac{i}{\hbar}A_{t}}e^{it L_{B}}\phi^{\hbar}_{x}$, where $A_{t}=\frac{p^{2}t}{2m}$, $\phi_{x}^{\hbar}(\xi):=\psi^{\hbar}_{\xi}(x)$ and $L_{B}$ is a suitable self-adjoint extension of the restriction to $\C^{\infty}_{c}({\mathscr M}_{0})$, ${\mathscr M}_{0}:=\{(q,p)\!\in\!\RE^{2}\,|\,q\!\not=\!0\}$, of ($-i$ times) the generator of the free classical dynamics. While the operator $L_{B}$ here utilized is  similar to the one appearing in our previous work \cite{AMPA2020} regarding the semi-classical limit with a delta potential, in the present case the approximation gives a smaller error: it is of order $\hbar^{7/2-\lambda}$, $0\!<\!\lambda\!<\!1/2$, whereas it turns out to be of order $\hbar^{3/2-\lambda}$, $0\!<\!\lambda\!<\!3/2$, for the delta potential. We also provide similar approximation results for both the wave and scattering operators.
\end{abstract}

\maketitle

\begin{footnotesize}
\emph{Keywords: Semiclassical dynamics; delta prime interactions; coherent states; scattering theory.} 

\emph{MSC 2020:
81Q20; 
81Q10; 
47A40.
}  
\end{footnotesize}

\section{Introduction}

The close relation between coherent states and semi-classical analysis is well known and it goes back to the early days of Quantum Mechanics, see, e.g., \cite{CR} and references therein for a modern mathematical treatment. By Fourier transform, the classical-quantum correspondence is exact in the case of a free particle: 
defining, for any $\sigma\in\CO$,  the Gaussian coherent wave packet $\psi^{\hbar}_{\sigma,\xi}:\RE\to\CO$ centered at the classical phase space point $\xi\equiv(q,p)\in\RE^{2}$  by
\begin{equation}\label{initial}
\psi^{\hbar}_{\sigma,\xi}(x):=\frac{1}{(2\pi\hbar)^{1/4}\sqrt{\sigma}}\ \exp\left({-\,\frac{1}{4\hbar \sigmazero\sigma}\,(x-q)^{2}+\frac{i}{\hbar}\,p(x-q)}\right)
\end{equation}
and, for any $x\in\RE$, the phase space function  $\phi^{\hbar}_{\sigma,x}:\RE^{2}\to\CO$ by
\begin{equation}\label{phidef}
\phi^{\hbar}_{\sigma,x}(\xi):=\psi^{\hbar}_{\sigma,\xi}(x)\,,
\end{equation}
one has the relation\begin{equation}\label{free2}
\big(e^{-i\frac{t}{\hbar}\H_{0}}\,\psi^{\hbar}_{\sigmazero,\xi}\big)(x)=e^{\frac{i}{\hbar}\Szero_t}\,\big(e^{itL_{0}}
\phi^{\hbar}_{\sigma_{t},x}\big)(\xi)\,. 
\end{equation}
Here $$H_{0}: H^{2}(\RE)\subset L^{2}(\RE)\to L^2(\RE)\,,\quad H_{0}:=-\frac{\hbar^{2}}{2m}\,\frac{d^{2}\,}{dx^{2}}\,,$$ is the self-adjoint operator for a free quantum particle, 
$$
\sigma_{t}=\sigmazero+\frac{it}{2m\sigmazero}\,,\qquad A_{t}=\frac{p^{2}t}{2m}\,,
$$
and $e^{itL_{0}}$ is the realization in $L^{\infty}(\RE^2)$ of the strongly continuous (in $L^{2}(\RE^{2})$) group of evolution generated by the self-adjoint operator 
$$
L_{0}:=-\,i\,X_{0} \cdot \nabla\,,\qquad X_{0}(q,p):=\left(\frac{p}{m},0\right)\,,
$$ 
corresponding to the Hamiltonian vector field of a free classical particle, i.e.,  $$e^{itL_{0}}f(q,p) = f(q+\frac{p}{m}\,t,p)\,.
$$ 
Such an exact quantum-classical correspondence still holds for quadratic Hamiltonians and, for more general regular (at least $C^{2}$) potentials, an approximate relation, up to an error of order $\hbar^{1/2-\lambda}$, $0<\lambda<1/2$, is valid (see, e.g., \cite{Hag1}). In a previous paper, see \cite{AMPA2020}, we considered the semiclassical limit for a potential which is far from being a regular function, i.e. the case of  the Dirac delta distribution. Here we consider a still more singular case, that is we consider the case where the potential is given by the (distributional) derivative $\delta_{0}'$ of Dirac's delta.  Similarly to the case with a delta potential, the self-adjoint realization $H_{\beta}$, $\beta\in(\RE\backslash\{0\})\cup{\infty}$, of the formal Hamiltonian  $-\frac{\hbar^{2}}{2m}\,\frac{d^{2}\,}{dx^{2}}\,{+}\,\beta\delta'_{0}$  is described as a self-adjoint extension of the symmetric operator given by the restriction of the free Hamiltonian $\H_{0}$ to the set $\C^{\infty}_{c}(\RE\backslash\{0\})$ (see, e.g.,  \cite[Ch. I.4]{AGHH} and Section \ref{ss:deltaprime} below for more details). This fact, together with the free case relation \eqref{free2},
suggest (as in the case examined in \cite{AMPA2020}) how to proceede:
since $\H_{\beta}$ is a self-adjoint extension of the symmetric operator $\H^{\circ}_{0}:=\H_{0}\!\!\upharpoonright\!\C^{\infty}_{c}(\RE\backslash\{0\})$, one could try to approximate $e^{-i\frac{t}{\hbar}\H_{\beta}}\,\psi^{\hbar}_{\sigmazero,\xi}$ by replacing $L_{0}$ with $L_{B}$, a suitable self-adjoint extension   of $L^{\!\circ}_{0}:=L_{0}\!\!\upharpoonright\! \C^{\infty}_{c}({\mathscr M}_{0})$, ${\mathscr M}_{0}:=\RE^{2} \,\backslash\, \{(0,p) \,|\, p\!\in\!\RE\}$, and transforming $\phi^{\hbar}_{\sigma_{t},x}$ using the realization in $L^{\infty}(\RE^2)$, if any, of $e^{it L_{B}}$. 
Following the same reasonings as in \cite[Sec. 2]{AMPA2020}, in Section \ref{s:2} we provide the construction of $L_{B}$. In this introduction we content ourselves with giving the corresponding unitary group of evolution: 
for any $f\in L^{2}(\RE^{2})$ one has
\begin{equation}\label{Lbeta-group}
\big(e^{itL_{B}}f\big)(q,p)=\big(e^{itL_{0}}f\big)(q,p)-\frac{\HE (-tqp)\,\HE \!\left(\frac{|pt|}{m}-|q|\right)}{1-\sgn(t)\,\frac{2i|p|}{mB(p)}}\,\big(e^{itL_{0}}f_{odd}\big)(q,p)\,;
\end{equation}
here $\HE$ denotes the Heaviside function (namely, $\HE(x) = 1$ for $x > 0$ and $\HE(x) = 0$ for $x < 0$), $B(p)=bp^{2}$, $b\in(\RE\backslash\{0\})\cup\{\infty\}$, and $f_{odd}(\xi):=f(\xi)-f(-\xi)$. This shows that $e^{itL_{B}}$ is a group of evolution in $L^{\infty}(\RE^2)$. Notice that the case $b=\infty$ corresponds to complete reflection due to the infinite barrier at the origin, while the case $b\in\RE\backslash\{0\}$ allows transmission ($b=0$ gives the free generator $L_{0}$) thus introducing ``extra'' classical paths going beyond the singularity. 
\par  In Subsection \ref{ss:3.1} we prove the following 

\begin{theorem}\label{t:1} Let ${B}(p) := -\, (2  \beta/\hbar^3)\,p^2$. Then, there exists a constant $C>0$ such that, for any $\eta \in (0,1)$, for any $t\in\RE$ and for any $\xi\equiv(q,p)\in\RE^{2}$ with $qp\not=0$, there holds
\begin{equation}\label{t1}
\begin{aligned}
&\left\|e^{-i\frac{t}{\hbar}\H_{\beta}}\,\psi^{\hbar}_{\sigmazero,\xi}-
e^{\frac{i}{\hbar}\Szero_{t}} \big(e^{itL_{B}} \phi^{\hbar}_{\sigma_{t},(\cdot)}\big)(\xi)\right\|_{L^{2}(\RE)}\\
& \leq  C \Bigg[ {\eta \over 1- \eta}\, \Big({\hbar^3 \over m |\beta p|}\Big) + e^{-\,\eta^2 {\sigma_0^2 p^2 \over 2\hbar}} + e^{- {q^2 \over 4\hbar \sigma_0^2}} + e^{-{\sigma_0^2 p^2 \over \hbar}} \\
& \hspace{1cm} + \left({\hbar^{5} \sigma_0^2 \over m^2\,\beta^2}\right)^{\!\!1/4} e^{{\hbar^5 \sigma_0^2 \over m^2 \beta^2}} \left(e^{- {\sigma_0^2 p^2 \over \hbar}} + e^{- \frac{q^2}{4\hbar \sigma_0^2}}\right) + e^{- \frac{q_t^2}{4\hbar|\sigma_t|^2}}\!\Bigg] \,. 
\end{aligned}
\end{equation}
\end{theorem}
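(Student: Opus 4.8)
The plan is to reduce \eqref{t1} to a comparison between two explicitly computable reflected-wave corrections, exploiting that $\H_{\beta}$ is a self-adjoint extension of $\H_{0}\!\!\upharpoonright\!\C^{\infty}_{c}(\RE\backslash\{0\})$ acting only through a boundary condition at the origin. For the $\delta'_{0}$ interaction this condition is a Neumann condition on the \emph{even} part of the state (which is automatically satisfied, so the even component evolves exactly freely) and a Robin condition on the \emph{odd} part; this is precisely what is reflected in the appearance of $f_{odd}$ in \eqref{Lbeta-group}. Writing $\psi^{\hbar}_{\sigmazero,\xi}=\psi_{even}+\psi_{odd}$ and using that the free relation \eqref{free2} holds separately on each parity sector, the even contributions cancel exactly against the free part of $e^{itL_{B}}$, and the entire discrepancy is confined to the odd sector. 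There, passing to the momentum (Fourier) representation—in which $\psi^{\hbar}_{\sigmazero,\xi}$ is a Gaussian sharply peaked at wavenumber $k=p/\hbar$—I would write $e^{-i\frac{t}{\hbar}\H_{\beta}}\psi_{odd}$ as its free evolution plus a reflected correction governed by the momentum-dependent reflection amplitude $r(k)=\big(1+i\hbar^{2}/(m\beta k)\big)^{-1}$ of the Robin problem.

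The key algebraic observation, to be stated at the outset, is that the choice ${B}(p)=-(2\beta/\hbar^{3})p^{2}$ in the statement is exactly the one for which the classical coefficient $\big(1-\sgn(t)\,\tfrac{2i|p|}{m{B}(p)}\big)^{-1}$ in \eqref{Lbeta-group} coincides with $r(k)$ evaluated at the central wavenumber $k=p/\hbar$ (the signs and $|p|$ being accounted for by the $\sgn(t)$ factor). After using \eqref{free2} to eliminate the free/even contributions and aligning the reflected Gaussian wave packets, the whole difference is therefore reduced to two effects: the variation of $r(k)$ across the momentum support of the packet, i.e.\ an integral of $\big(r(k)-r(p/\hbar)\big)$ against the peaked Gaussian weight, and the error incurred in replacing the sharp classical collision indicator $\HE(-tqp)\,\HE\!\big(\tfrac{|pt|}{m}-|q|\big)$ by the smooth separation of the quantum reflected packet from the incident one.

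To estimate the reflected term I would split the $k$-integral at $|k-p/\hbar|\le\eta\,|p|/\hbar$. On the inner region a first-order expansion gives $|r(k)-r(p/\hbar)|\lesssim \tfrac{\eta}{1-\eta}\,\tfrac{\hbar^{3}}{m|\beta p|}$, producing the leading error term, the factor $(1-\eta)^{-1}$ arising from bounding $1/k$ on $k\ge(1-\eta)p/\hbar$ and from keeping the denominator of $r$ away from its pole. On the outer region the Gaussian weight is controlled by its tail $e^{-\eta^{2}\sigmazero^{2}p^{2}/(2\hbar)}$, the second error term. The factors $e^{-q^{2}/(4\hbar\sigmazero^{2})}$ and $e^{-\sigmazero^{2}p^{2}/\hbar}$ come, respectively, from the initial localization of the packet away from the origin and from the overlap between $\psi^{\hbar}_{\sigmazero,\xi}$ and its spatial reflection (equivalently, from the cross terms between the even and odd parts and the size of $f_{odd}$), while the final factor $e^{-q_{t}^{2}/(4\hbar|\sigma_{t}|^{2})}$, with $q_{t}:=q+\tfrac{p}{m}t$, measures the error of the collision indicator: it is large only when the evolved classical position $q_{t}$ is near the origin, i.e.\ near the collision time $t=-mq/p$, and is exponentially small otherwise, which is exactly how uniformity in $t$ away from collision is obtained.

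I expect the main obstacle to be the single Gaussian integral in which the \emph{full} reflection amplitude $r(k)$, rather than its linearization, must be integrated against the coherent-state weight. Since $r(k)=\big(1+i\hbar^{2}/(m\beta k)\big)^{-1}$ has a pole at the purely imaginary wavenumber $k\sim-i\hbar^{2}/(m\beta)$, completing the square and shifting the contour past this pole generates the factor $\big(\hbar^{5}\sigmazero^{2}/(m^{2}\beta^{2})\big)^{1/4}e^{\hbar^{5}\sigmazero^{2}/(m^{2}\beta^{2})}$, the exponent being the squared pole distance $\hbar^{4}/(m^{2}\beta^{2})$ multiplied by the wavenumber-space Gaussian width parameter $\hbar\sigmazero^{2}$. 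Controlling this contribution uniformly in $t$—so that the bad exponential remains a harmless prefactor in the semiclassical regime $\hbar^{5}\sigmazero^{2}\ll m^{2}\beta^{2}$—while simultaneously matching the center and the dispersion parameter $\sigma_{t}$ of the reflected packet against the classical ones, is the delicate part. The bookkeeping follows the scheme of \cite[Sec.~3]{AMPA2020}, but the extra factor of $k$ in the $\delta'_{0}$ reflection amplitude (which makes $r$ vanish linearly as $k\to0$) is responsible for the higher power of $\hbar$ in the final bound relative to the delta case.
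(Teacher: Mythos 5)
Your outline is correct and follows essentially the same route as the paper: expand the evolved state through the reflection amplitudes $R_{\pm}(k)$ of the point interaction, freeze them at the central wavenumber $k=p/\hbar$ (which is exactly what the choice $B(p)=-(2\beta/\hbar^{3})p^{2}$ encodes on the classical side, as you observe), split the momentum integral at $|\hbar k/p-1|=\eta$ to produce the first two error terms, and measure the mismatch with the sharp classical collision indicator by the tail $e^{-q_t^2/(4\hbar|\sigma_t|^2)}$ of the reflected packet (cf.\ Proposition \ref{propSemi}, Lemmata \ref{LemmaEpbis}--\ref{lemma: sgnpsi} and Proposition \ref{p:blacksun}); your even/odd splitting is only a repackaging of the eigenfunction expansion actually used. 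Two points where your bookkeeping deviates are worth recording. First, you locate the factor $(\hbar^{5}\sigma_0^{2}/(m^{2}\beta^{2}))^{1/4}e^{\hbar^{5}\sigma_0^{2}/(m^{2}\beta^{2})}$ inside the reflected-wave Gaussian integral, to be extracted by shifting the contour past the pole of $r(k)$; in the paper that integral is bounded by purely real-variable estimates and carries no such factor (Lemma \ref{LemmaEpbis}), while the factor comes entirely from the discrete-spectrum summand $E^{\hbar}_{\beta,t}=e^{-i t\lambda_\beta/\hbar}P_{\beta}\psi^{\hbar}$, i.e.\ the overlap of the coherent state with the bound state $\vfi_{\beta}$ (Lemma \ref{LemmaProj}) --- the same pole, of course, but isolating it as a spectral projector avoids a $t$-dependent contour deformation and makes uniformity in $t$ automatic. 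Second, attributing the $e^{-q^{2}/(4\hbar\sigma_0^{2})}$ contribution simply to ``localization away from the origin'' glosses over the one genuinely delicate estimate: replacing the half-line transform $\int dy\,\sgn(y)\,e^{i|k||y|}\psi^{\hbar}(y)$ by $\sgn(q)\,\widehat{\psi^{\hbar}}(-\sgn(q)|k|)$ leaves a remainder $E^{\hbar}_{1,t}$ whose $L^{2}$ bound requires an integration by parts in $y$ and the explicit kernel $\int dk\,\cos(k\xi)/(k^{2}+(\hbar^{2}/m\beta)^{2})$ to tame the singular denominator (Lemma \ref{LemmaE1bis}); this step, not a routine Gaussian tail, is where most of the technical work of the proof is concentrated.
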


Thus, whenever $t$ is not too close to the collision time $t_{coll}(\xi):=-\frac{mq}{p}$, $\xi\equiv(q,p)$  (look at  the last term in the above estimate), our approximation provides the following result (see Subsection \ref{proof-coroll}):

\begin{corollary}\label{c:1}
Let ${B}(p) := -\, (2  \beta/\hbar^3)\,p^2$.  Then, for any $0 < \lambda < 1/2$ there exit constants $0 < h_{*} < 1$ and $C_{*}, c_0 > 0$ such that
\begin{equation*}
\underline h := \max\left\{\frac{\hbar \sigmazero^2}{q^2}\,,\,{\hbar \over \sigmazero^2 p^2}\,,\,{\hbar \over (m |\beta p|)^{1/3}} \right\} < h_{*}
\end{equation*}
implies 
$$\left\|e^{-i\frac{t}{\hbar}\H_{\beta}}\,\psi^{\hbar}_{\sigmazero,\xi}-
e^{\frac{i}{\hbar}\Szero_{t}} \big(e^{itL_{B}} \phi^{\hbar}_{\sigma_{t},(\cdot)}\big)(\xi)\right\|_{L^{2}(\RE)}\le C_{*}\,\underline h^{\frac{7}{2}- \lambda}\,,
$$ 
for any $t\in\RE$, $\xi\equiv(q,p)\in\RE^{2}$ with $qp\not=0$, such that
\begin{equation}\label{ttcoll}
\big|t-t_{coll}(\xi)\big|\ge c_{0}\,|t_{coll}(\xi)|\,\sqrt{\Big(\frac{7}{2}-\lambda\Big)\,\underline h\,|\!\ln \underline h|}\;.
\end{equation}
\end{corollary}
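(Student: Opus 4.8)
The plan is to derive the corollary directly from the quantitative estimate \eqref{t1} of Theorem \ref{t:1}, by selecting the free parameter $\eta$ as a suitable power of $\underline h$ and then checking that each of the six summands on the right-hand side of \eqref{t1} is $O\big(\underline h^{7/2-\lambda}\big)$. The hypothesis $\underline h<h_{*}$ enters only through the three elementary inequalities
\[
\frac{q^{2}}{\hbar\sigmazero^{2}}\ge \frac{1}{\underline h}\,,\qquad
\frac{\sigmazero^{2}p^{2}}{\hbar}\ge \frac{1}{\underline h}\,,\qquad
\frac{\hbar^{3}}{m|\beta p|}\le \underline h^{3}\,,
\]
which are mere restatements of the three defining quantities of $\underline h$. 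Below, $C$ denotes a constant that may change from line to line.

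I would fix $\eta:=\underline h^{1/2-\lambda}$; since $0<\lambda<1/2$ and $\underline h<h_{*}<1$, this gives $\eta\in(0,1)$ with $\eta\to0$ as $\underline h\to0$. With this choice the first summand is
\[
\frac{\eta}{1-\eta}\,\frac{\hbar^{3}}{m|\beta p|}
\le \frac{1}{1-h_{*}^{1/2-\lambda}}\,\underline h^{1/2-\lambda}\,\underline h^{3}
= \frac{\underline h^{7/2-\lambda}}{1-h_{*}^{1/2-\lambda}}\,,
\]
which is exactly the claimed order and is in fact the dominant contribution; the exponent $7/2-\lambda$ is precisely what this choice of $\eta$ produces against the gain $\underline h^{3}$. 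The second summand is $e^{-\eta^{2}\sigmazero^{2}p^{2}/(2\hbar)}\le e^{-\underline h^{-2\lambda}/2}$, and the third and fourth are bounded by $e^{-1/(4\underline h)}$ and $e^{-1/\underline h}$; since $\lambda>0$, each of these decays faster than any power of $\underline h$ and is thus $\le C\,\underline h^{7/2-\lambda}$ for $h_{*}$ small.

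The fifth summand requires pairing its growing prefactor against the Gaussian factors. Setting $K:=\hbar^{5}\sigmazero^{2}/(m^{2}\beta^{2})$ and squaring the third inequality above gives the key identity
\[
K=\Big(\frac{\hbar^{3}}{m|\beta p|}\Big)^{2}\frac{\sigmazero^{2}p^{2}}{\hbar}\le \underline h^{6}\,\frac{\sigmazero^{2}p^{2}}{\hbar}\,,
\]
so in particular $K\le\tfrac12\,\sigmazero^{2}p^{2}/\hbar$ once $h_{*}$ is small. For the piece multiplying $e^{-\sigmazero^{2}p^{2}/\hbar}$ the Gaussian then beats the exponential, $K^{1/4}e^{K-\sigmazero^{2}p^{2}/\hbar}\le K^{1/4}e^{-\sigmazero^{2}p^{2}/(2\hbar)}\le C\,\underline h^{3/2}e^{-1/(4\underline h)}$, absorbing the algebraic prefactor into a fraction of the Gaussian; for the piece multiplying $e^{-q^{2}/(4\hbar\sigmazero^{2})}$ one controls $e^{K}$ through the same identity and is again left with a super-polynomially small factor. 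Hence the fifth summand is $\le C\,\underline h^{7/2-\lambda}$.

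The sixth summand, $e^{-q_{t}^{2}/(4\hbar|\sigma_{t}|^{2})}$ with $q_{t}=q+\tfrac{p}{m}t$, is where the separation hypothesis \eqref{ttcoll} enters, and this is the main obstacle. Using $q_{t}=\tfrac{p}{m}(t-t_{coll})$ and $|\sigma_{t}|^{2}=\sigmazero^{2}+t^{2}/(4m^{2}\sigmazero^{2})$, the bound \eqref{ttcoll} yields $q_{t}^{2}\ge c_{0}^{2}\,q^{2}\,(\tfrac72-\lambda)\,\underline h\,|\ln\underline h|$; the delicate part is to control the $t$-dependent denominator uniformly. For $|t|\lesssim|t_{coll}|$ one bounds $|\sigma_{t}|^{2}\le C\max\{\sigmazero^{2},\,q^{2}/(\sigmazero^{2}p^{2})\}$ and then, via the first two inequalities, $q^{2}/\big(\hbar\max\{\sigmazero^{2},q^{2}/(\sigmazero^{2}p^{2})\}\big)=\min\{q^{2}/(\hbar\sigmazero^{2}),\,\sigmazero^{2}p^{2}/\hbar\}\ge 1/\underline h$, so that $q_{t}^{2}/(4\hbar|\sigma_{t}|^{2})\ge c_{0}^{2}(\tfrac72-\lambda)|\ln\underline h|/C$; for $|t|\gg|t_{coll}|$ the ratio is bounded below by a fixed fraction of $\sigmazero^{2}p^{2}/\hbar\ge1/\underline h$ directly. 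Consequently $e^{-q_{t}^{2}/(4\hbar|\sigma_{t}|^{2})}\le\underline h^{\,c_{0}^{2}(7/2-\lambda)/C}$, and choosing $c_{0}$ large enough that $c_{0}^{2}/C\ge1$ makes this $\le\underline h^{7/2-\lambda}$. The precise interplay between the $\sqrt{\underline h\,|\ln\underline h|}$ scale in \eqref{ttcoll}, the Gaussian normalisation $1/\hbar\sim1/\underline h$, and the growth of $|\sigma_{t}|^{2}$ is the one point requiring genuine care; collecting the six bounds and taking $C_{*}$ to be the sum of the resulting constants (and $h_{*}$ small enough for all the above smallness requirements) completes the proof.
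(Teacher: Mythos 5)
Your proposal is correct and follows essentially the same route as the paper: the same choice $\eta=\underline h^{1/2-\lambda}$, the same term-by-term bounds (including the key identity $\hbar^{5}\sigma_0^{2}/(m^{2}\beta^{2})=(\hbar^{3}/(m|\beta p|))^{2}\,\sigma_0^{2}p^{2}/\hbar$ for the fifth summand), and the same reduction of the time-dependent term to showing that \eqref{ttcoll} forces $q_t^{2}/(4\hbar|\sigma_t|^{2})\ge(\tfrac72-\lambda)|\ln\underline h|$. The only cosmetic difference is that the paper resolves the resulting quadratic inequality in $y=1-t/t_{coll}$ exactly (yielding $c_0>\sqrt5$), whereas you argue by a two-regime case analysis with an unspecified constant.
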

Moreover, the constraint  $t\not=t_{coll}$ does not affect the semi-classical approximation for large times. Indeed, see Theorem \ref{thm: Ompm} below, we can handle the approximation of the wave operators: denoting with $\Omega^{\pm}_{\beta}$ the wave  operators defined, as usual, by the limits in 
$L^{2}(\RE^{2})$
$$
\Omega^{\pm}_{\beta}f:=\lim_{t\to\pm\infty}e^{i\frac{t}{\hbar}\H_{\beta}}e^{-i\frac{t}{\hbar}\H_{0}}f
$$
and by $W^{\pm}_{B}$ the corresponding classical objects (compare with \cite[Def. 3.4.4]{Th}, see also \cite[Rem. 2.7]{AMPA2020})
\begin{equation*}
W^{\pm}_{B}f:=\lim_{t\to\pm\infty}e^{i tL_{0}}e^{ - i  t L_{B}}f
\end{equation*}
(here the limits hold both pointwise in $\RE^{2}$ and, if $f = \psih_{\sigma,\xi}$ is a coherent state of the form \eqref{initial}, in $L^{2}(\RE,dx)$, see Proposition \ref{prop-wave-cl} below), one has the following (see Subsection \ref{convWops} for the proof) 
\begin{theorem}\label{thm: Ompm}
Let ${B}(p) := -\,(2  \beta/\hbar^3)\,p^2$. Then, for any $\psi^{\hbar}_{\sigmazero,\xi} \in L^2(\RE)$ of the form \eqref{psi0} with $qp\not=0$ and for any $\eta \in (0,1)$, there exists a constant $C>0$ such that
\begin{align}\label{t2_1}
\left\|\Omega_\beta^{\pm}\, \psi^{\hbar}_{\sigmazero,\xi} - \big(W_{{B}}^{\pm} \phi^{\hbar}_{\sigmazero,\,(\cdot)} \big)(\xi) \right\|_{L^2(\RE)} \! \leq C \left[ {\eta \over (1- \eta)}\, \Big({\hbar^3 \over m |\beta p|}\Big) + e^{-\,\eta^2 {\sigmazero^2 p^2 \over 2\hbar}}	+ e^{- \frac{q^2}{4\hbar \sigmazero^2}} + e^{- {\sigmazero^2 p^2 \over \hbar}} \right] .
\end{align}
Similarly, for the scattering operators $S_{\beta}:=(\Omega_{\beta}^{+})^{*}\Omega_{\beta}^{-}$ and $S_{B}^{cl}:=(W_{B}^{+})^{*}W_{B}^{-}$ there holds
\begin{align}\label{t2_2}
\left\|S_\beta\, \psi^{\hbar}_{\sigmazero,\xi} - \big(S^{cl}_{B} \phi^{\hbar}_{\sigmazero,\,(\cdot)} \big)(\xi) \right\|_{L^2(\RE)} \! & \leq 
C \left[ {\eta \over (1- \eta)}\, \Big({\hbar^3 \over m |\beta p|}\Big) + e^{-\,\eta^2 {\sigmazero^2 p^2 \over 2\hbar}}	+ e^{- \frac{q^2}{4\hbar \sigmazero^2}} + e^{- {\sigmazero^2 p^2 \over \hbar}} \right. \\
& \hspace{2cm}
\left. +\, \left({\hbar^{5} \sigma_0^2 \over m^2 \beta^2}\right)^{\!\!1/4} e^{{\hbar^5 \sigma_0^2 \over m^2 \beta^2}} \left( e^{- {\sigmazero^2 p^2 \over \hbar}} + e^{- \frac{q^2}{4\hbar \sigmazero^2}} \right) \right] . \nonumber
\end{align}
\end{theorem}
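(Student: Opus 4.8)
The plan is to read off both families of estimates from the finite-time bound of Theorem~\ref{t:1} by letting $t\to\pm\infty$, Proposition~\ref{prop-wave-cl} guaranteeing that all the relevant limits exist in $L^{2}(\RE,dx)$ on coherent states. The basic mechanism is that the free flow preserves the coherent-state form: by \eqref{free2} one has $e^{-i\frac{t}{\hbar}\H_{0}}\psi^{\hbar}_{\sigmazero,\xi}=e^{\frac{i}{\hbar}\Szero_{t}}\,\psi^{\hbar}_{\sigma_{t},\xi_{t}}$ with $\xi_{t}:=(q+\tfrac{p}{m}t,\,p)$, while the classical approximant $(e^{itL_{0}}e^{-itL_{B}}\phi^{\hbar}_{\sigmazero,(\cdot)})(\xi)$ is, by the explicit formula \eqref{Lbeta-group}, a finite superposition of coherent states of \emph{real} width $\sigmazero$ centred at the transported/reflected points. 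Writing $\Omega^{\pm}_{\beta}$ and $W^{\pm}_{B}$ as strong limits of the respective cocycles and matching the orderings via the $x\leftrightarrow\xi$ duality $\phi^{\hbar}_{\sigma,x}(\xi)=\psi^{\hbar}_{\sigma,\xi}(x)$ (together with the time-reversal symmetry of $\H_{\beta}$, which commutes with complex conjugation since $\beta\in\RE$), the difference at finite $t$ is reduced to an expression controlled by \eqref{t1}.

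Granting this reduction, the wave-operator bound \eqref{t2_1} follows by tracking the terms of \eqref{t1} as $t\to\pm\infty$: all of them except the last are $t$-independent and persist, while the collision term is absorbed, since $q_{t}/|\sigma_{t}|\to 2\sigmazero|p|$ gives $e^{-q_{t}^{2}/(4\hbar|\sigma_{t}|^{2})}\to e^{-\sigmazero^{2}p^{2}/\hbar}$. The genuinely new feature of \eqref{t2_1} is the \emph{absence} of the factor $\big(\hbar^{5}\sigmazero^{2}/(m^{2}\beta^{2})\big)^{1/4}e^{\hbar^{5}\sigmazero^{2}/(m^{2}\beta^{2})}$: this term controls, in Theorem~\ref{t:1}, the error accumulated while the packet transits the singularity, and in the purely asymptotic (scattering) regime that transit is resolved \emph{exactly} by the wave operator, so it should not contribute. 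I would establish this either by revisiting the single step of the proof of Theorem~\ref{t:1} that produces that factor and showing it is negligible as $t\to\pm\infty$, or --- more robustly --- by proving \eqref{t2_1} directly from the stationary-scattering representation of $\Omega^{\pm}_{\beta}$ (explicit reflection/transmission coefficients of the $\delta'$ point interaction) against the explicit $W^{\pm}_{B}$, the leading mismatch of the amplitudes over the momentum support of the state producing the term $\tfrac{\eta}{1-\eta}(\hbar^{3}/m|\beta p|)$ and the Gaussian tails the remaining terms.

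For the scattering operators I would start from the telescoping identity, with $g:=\big(W^{-}_{B}\phi^{\hbar}_{\sigmazero,(\cdot)}\big)(\xi)$,
\[
S_{\beta}\,\psi^{\hbar}_{\sigmazero,\xi}-\big(S^{cl}_{B}\phi^{\hbar}_{\sigmazero,(\cdot)}\big)(\xi)=(\Omega^{+}_{\beta})^{*}\big[\Omega^{-}_{\beta}\psi^{\hbar}_{\sigmazero,\xi}-g\big]+\big[(\Omega^{+}_{\beta})^{*}-(W^{+}_{B})^{*}\big]g .
\]
Since $(\Omega^{+}_{\beta})^{*}$ is a contraction, the first summand is bounded in $L^{2}(\RE)$ by the right-hand side of \eqref{t2_1} applied to $\Omega^{-}_{\beta}$, which reproduces the four clean terms of \eqref{t2_2}. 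The second summand is an adjoint version of the wave-operator comparison, but now applied to the \emph{scattered} vector $g$, which by \eqref{Lbeta-group} is not a single coherent state but the incoming one plus its reflected partner weighted by the classical amplitude $\big(1-\sgn(t)\tfrac{2i|p|}{mB(p)}\big)^{-1}$; re-expressing this reflected component as a coherent state and carrying out the ensuing Gaussian-times-amplitude momentum integral is what regenerates precisely the factor $\big(\hbar^{5}\sigmazero^{2}/(m^{2}\beta^{2})\big)^{1/4}e^{\hbar^{5}\sigmazero^{2}/(m^{2}\beta^{2})}$ distinguishing \eqref{t2_2} from \eqref{t2_1}.

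The main obstacle I foresee is exactly this second summand, that is, controlling the (adjoint) wave operator on a scattered state rather than on a bare coherent state. Two difficulties are entangled: handling the momentum-dependent scattering amplitude while integrating it against a Gaussian peaked near $p$ (the resulting contour shift is what generates the exponential $e^{\hbar^{5}\sigmazero^{2}/(m^{2}\beta^{2})}$), and correctly reconciling operator orderings and adjoints across the $x\leftrightarrow\xi$ duality so that the quantum composition $(\Omega^{+}_{\beta})^{*}\Omega^{-}_{\beta}$ is matched with $S^{cl}_{B}=(W^{+}_{B})^{*}W^{-}_{B}$. The removal of the transit factor in the cleaner estimate \eqref{t2_1}, though conceptually transparent, is a secondary delicate point, since that factor is $t$-independent in \eqref{t1} and its elimination requires an argument finer than a bare passage to the limit.
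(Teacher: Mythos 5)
Your fallback route for \eqref{t2_1} --- comparing the stationary representation $\Omega^{\pm}_{\beta}=\FF_{\pm}^{*}\FF$ (explicit reflection amplitudes $R_{\pm}$ of the $\delta'$ interaction) directly against the explicit $W^{\pm}_{B}$ --- is exactly what the paper does (Lemma \ref{lemma: Ompm} plus Lemmata \ref{LemmaEpbis}, \ref{lemma: sgnpsi}, \ref{LemmaE3}), and your telescoping identity with the contraction bound on $(\Omega^{+}_{\beta})^{*}$ is also the paper's first step for \eqref{t2_2}. Your \emph{primary} route for \eqref{t2_1}, however, cannot work as described: the factor $\big(\hbar^{5}\sigma_{0}^{2}/(m^{2}\beta^{2})\big)^{1/4}e^{\hbar^{5}\sigma_{0}^{2}/(m^{2}\beta^{2})}(\cdots)$ in \eqref{t1} is $t$-independent, so a passage to the limit $t\to\pm\infty$ retains it; you flag this yourself, but it means only the stationary route actually proves \eqref{t2_1} as stated. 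The reason it disappears there is structural: that term is the norm $\|P_{\beta}\psih\|$ of the projection onto the bound state of $H_{\beta}$ (present for $\beta<0$, Lemma \ref{LemmaProj}), and $\Omega^{\pm}_{\beta}=\FF_{\pm}^{*}\FF$ has range in $\ran(P_{ac})$, so no bound-state term ever enters the eigenfunction expansion of $\Omega^{\pm}_{\beta}\psih$. It has nothing to do with ``transit of the singularity being resolved by the wave operator.''

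The genuine gap is in your treatment of the second summand $(\Omega^{+}_{\beta})^{*}g-\big(S^{cl}_{B}\phi^{\hbar}_{\sigma_{0},(\cdot)}\big)(\xi)$. First, your proposed mechanism is misattributed: by \eqref{eq: waveopcl}, $g=\big(W^{-}_{B}\phi^{\hbar}_{\sigma_{0},(\cdot)}\big)(\xi)$ is \emph{already} an exact linear combination of the two coherent states $\psi^{\hbar}_{\sigma_{0},\xi}$ and $\psi^{\hbar}_{\sigma_{0},-\xi}$ with $x$-independent coefficients, so there is no ``Gaussian-times-amplitude momentum integral'' to perform and no contour shift in momentum; the exponential $e^{\hbar^{5}\sigma_{0}^{2}/(m^{2}\beta^{2})}$ arises from completing the square in the \emph{position-space} overlap $\int dy\,e^{-y^{2}/(4\hbar\sigma_{0}^{2})+\hbar^{2}y/(m|\beta|)}$ of a coherent state with the eigenfunction $\vfi_{\beta}(x)\propto\sgn(x)e^{-\hbar^{2}|x|/(m|\beta|)}$, i.e.\ again from $\|P_{\beta}(\cdot)\|$. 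Second, and more importantly, you do not supply the algebra that actually controls this summand. The paper's argument inserts $\uno=P_{ac}+P_{\beta}$, bounds the two $P_{\beta}$ contributions by Lemma \ref{LemmaProj} (this is where the extra factor in \eqref{t2_2} comes from), and on the $P_{ac}$ part uses the isometry of $\Omega^{+}_{\beta}$ to rewrite $\|(\Omega^{+}_{\beta})^{*}P_{ac}g-P_{ac}S^{cl}_{B}\phi\|$ as $\|P_{ac}g-\Omega^{+}_{\beta}P_{ac}S^{cl}_{B}\phi\|$, then exploits that $\Omega^{+}_{\beta}$ and $S^{cl}_{B}$ act on different variables (so they commute on $\phi^{\hbar}_{\sigma_{0},x}(\xi)$) together with the classical intertwining $S^{cl}_{B}W^{+}_{B}=W^{-}_{B}$ of Remark \ref{r:Wpm-unitary}, reducing everything to $\big\|S^{cl}_{B}\big(\Omega^{+}_{\beta}-W^{+}_{B}\big)\phi^{\hbar}_{\sigma_{0},(\cdot)}(\xi)\big\|$, which is controlled by the already-proven \eqref{t2_1} and the pointwise boundedness of $S^{cl}_{B}$. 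Without the decomposition $P_{ac}+P_{\beta}$, the isometry step, and the identity $S^{cl}_{B}W^{+}_{B}=W^{-}_{B}$, your ``adjoint wave-operator comparison on the scattered vector'' remains a description rather than a proof, and the stated right-hand side of \eqref{t2_2} would not be obtained.
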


\begin{corollary}\label{cor: Ompm}
For any $0 < \lambda < 1/2$ there exit constants $0 < h_{*} < 1$ and $C_{*} > 0$ such that
\begin{gather*}
\left\|\Omega_\beta^{\pm}\, \psi^{\hbar}_{\sigma_0,\xi} - \big(W_{{B}}^{\pm} \phi^{\hbar}_{\sigma_0,\,(\cdot)} \big)(\xi)  \right\|_{L^2(\RE)} \leq C_{*} \;\underline{h}^{7/2 -\lambda}\, , \\
\left\|S_\beta\, \psi^{\hbar}_{\sigma_0,\xi} - \big(S^{cl}_{B} \phi^{\hbar}_{\sigma_0,\,(\cdot)} \big)(\xi)  \right\|_{L^2(\RE)} \leq C_{*} \;\underline{h}^{7/2 -\lambda}\, .
\end{gather*}
\end{corollary}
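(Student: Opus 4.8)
The plan is to derive both estimates of Corollary \ref{cor: Ompm} from the quantitative bounds of Theorem \ref{thm: Ompm} by exactly the same optimization that turns Theorem \ref{t:1} into Corollary \ref{c:1}; the only genuinely new point is the growing prefactor $e^{\hbar^5\sigmazero^2/(m^2\beta^2)}$ in the scattering bound \eqref{t2_2}, which must be shown not to spoil the decay. First I would record the three inequalities encoded in $\underline h<h_{*}$: directly from the definition of $\underline h$ one has $\frac{q^2}{4\hbar\sigmazero^2}\ge\frac{1}{4\underline h}$, $\frac{\sigmazero^2 p^2}{\hbar}\ge\frac{1}{\underline h}$ and $\frac{\hbar^3}{m|\beta p|}\le\underline h^{3}$, and multiplying the last two yields the crucial estimate $\frac{\hbar^5\sigmazero^2}{m^2\beta^2}\le\frac{\sigmazero^2 p^2}{\hbar}\,\underline h^{6}$.

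Next I would fix the free parameter by setting $\eta:=\underline h^{1/2-\lambda}$, which lies in $(0,1)$ as soon as $h_{*}$ is small. With this choice the algebraic term satisfies $\frac{\eta}{1-\eta}\,\frac{\hbar^3}{m|\beta p|}\le\frac{\eta}{1-\eta}\,\underline h^{3}\le C\,\underline h^{7/2-\lambda}$, while the three exponentials are smaller than any power of $\underline h$: indeed $e^{-\eta^2\sigmazero^2 p^2/(2\hbar)}\le e^{-\underline h^{-2\lambda}/2}$ (using $\lambda>0$), $e^{-q^2/(4\hbar\sigmazero^2)}\le e^{-1/(4\underline h)}$ and $e^{-\sigmazero^2 p^2/\hbar}\le e^{-1/\underline h}$, each of which is therefore $\le\underline h^{7/2-\lambda}$ after shrinking $h_{*}$. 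Summing these contributions bounds the right-hand side of \eqref{t2_1} by $C_{*}\,\underline h^{7/2-\lambda}$, which establishes the wave-operator estimate.

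For the scattering estimate I would handle separately the extra term $v^{1/4}e^{v}\big(e^{-u}+e^{-w}\big)$, writing $v:=\hbar^5\sigmazero^2/(m^2\beta^2)$, $u:=\sigmazero^2 p^2/\hbar$ and $w:=q^2/(4\hbar\sigmazero^2)$. The bound $v\le u\,\underline h^{6}$ from the first step shows that, once $h_{*}$ is so small that $\underline h^{6}\le 1/2$, one has $v\le u/2$; hence $v^{1/4}e^{v}e^{-u}\le(u\underline h^{6})^{1/4}e^{-u/2}\le C\,\underline h^{3/2}e^{-1/(4\underline h)}$, again smaller than any power of $\underline h$. The delicate pairing is $v^{1/4}e^{v}e^{-w}$: here the decaying exponential lives in the $q$-variable whereas $v$ is an $(\hbar,\beta)$-quantity, so the direct domination of $e^{v}$ by $e^{-u}$ is unavailable. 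I expect this to be the main obstacle, and I would resolve it by exploiting that in the semi-classical regime the interaction strength $\beta$ is held fixed, so that $v=O(\hbar^{5})\to0$ and $v^{1/4}e^{v}=O(\hbar^{5/4})$; the product is then controlled by the super-polynomially small factor $e^{-w}\le e^{-1/(4\underline h)}$. With the growing factor $e^{v}$ thus dominated uniformly, the remaining terms reproduce verbatim the argument for Corollary \ref{c:1}, yielding $\big\|S_\beta\,\psi^{\hbar}_{\sigmazero,\xi}-\big(S^{cl}_{B}\phi^{\hbar}_{\sigmazero,(\cdot)}\big)(\xi)\big\|_{L^2(\RE)}\le C_{*}\,\underline h^{7/2-\lambda}$.
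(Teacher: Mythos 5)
Your proposal is correct and follows essentially the route the paper intends (the paper in fact omits this proof, saying only that it is ``similar to that of Corollary \ref{c:1}''): fix $\eta=\underline h^{1/2-\lambda}$ in Theorem \ref{thm: Ompm}, observe that the algebraic term is $O(\underline h^{7/2-\lambda})$ while the pure exponentials are super-polynomially small, and treat the extra prefactor in \eqref{t2_2} separately. The one point where you genuinely deviate is the pairing $v^{1/4}e^{v}e^{-w}$ with $v=\hbar^5\sigma_0^2/(m^2\beta^2)$ and $w=q^2/(4\hbar\sigma_0^2)$, and your instinct that this is the delicate spot is exactly right: $\underline h<h_*$ alone does not force $v$ to be small, since $v\le\underline h^{6}\,\sigma_0^2p^2/\hbar$ and $\sigma_0^2p^2/\hbar$ is unbounded. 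The paper's template (the footnote in the proof of Corollary \ref{c:1}) rewrites $v^{1/4}e^{v}\le\underline h^{3/2}\big(\sigma_0^2p^2/\hbar\big)^{1/4}e^{\underline h^{6}\sigma_0^2p^2/\hbar}$ and absorbs the growing exponential into the Gaussian $e^{-w}$, at the price of the condition $\underline h^{6}\cdot 4\sigma_0^4p^2/q^2\le 1/2$ and a constant $\max\{1,4\sigma_0^4p^2/q^2\}$, both of which depend on the classical data; you instead invoke the fixed-$\beta$ semiclassical regime to get $v^{1/4}e^{v}=O(\hbar^{5/4})$, which is likewise a statement whose uniformity is only in $\hbar$ for fixed $(q,p,\sigma_0,\beta,m)$. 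Both devices are legitimate and carry exactly the same implicit dependence of $h_*$ and $C_*$ on the classical data, so your argument is no less rigorous than the paper's own; the rest of your proof reproduces the intended computation verbatim.
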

\begin{remark} Theorems \ref{t:1} and \ref{thm: Ompm} (and the relative Corollaries) parallel the analogous ones in \cite{AMPA2020} (see Theorems 1.1 and 1.3 therein) which provide semi-classical approximations for the quantum evolutions, wave and scattering operators for the operator $H_{\alpha}$ providing a self-adjoint realization of the formal Hamiltonian $-\frac{\hbar^{2}}{2m}\,\frac{d^{2}\,}{dx^{2}}\,{+}\,\alpha\,\delta_{0}$ (see, e.g., \cite[Ch. I.3]{AGHH}). The classical approximating self-adjoint operator $L_{\beta}$ used there (see \cite[Sec. 2]{AMPA2020}) is not too much different form the operator $L_{B}$ used here: the group of evolution generated by $L_{\beta}$ is given by (compare with \eqref{Lbeta-group})
$$
\big(e^{itL_{\beta}}f\big)(q,p)=\big(e^{itL_{0}}f\big)(q,p)-\frac{\HE (-tqp)\,\HE \!\left(\frac{|pt|}{m}-|q|\right)}{1-\sgn(t)\,\frac{2i|p|}{m\beta}}\,\big(e^{itL_{0}}f_{ev}\big)(q,p)\,,
$$
where (see \cite[Prop. 2.4]{AMPA2020}) $f_{ev}(\xi):=f(\xi)+f(-\xi)$. However, the mentioned results in \cite{AMPA2020} give an error of different order: $\underline{\underline h}^{3/2-\lambda}$ for $0\!<\!\lambda\!<\!3/2$, where 
$$
\underline{\underline h}:=\max\left\{\frac{\hbar \sigmazero^2}{q^2}\,,\,{\hbar\, \over \sigma_0^2 p^2}\,,\,{\hbar \over (m |\alpha|\sigma_{0})^{2/3}} \right\}\,.
$$
By techniques similar to the ones used here and in \cite{AMPA2020}, analogous semiclassical estimates can also be obtained for the case of a quantum evolution on graphs, see \cite{AAMP}. 
\end{remark}

\section{Singular perturbations of the free classical dynamics\label{s:2}}

By the same kind of reasonings as in \cite[Sec. 2]{AMPA2020}, in this section we introduce a suitable self-adjoint extension of the restriction to functions vanishing on the line $\{(q,p)\in\RE^{2}: q=0\}$ of the self-adjoint operator 
$-i\frac{p}{m}\partial_{q}$.
At variance with the self-adjoint operator $L_{\beta}\equiv L_{\Pi,\beta}$ provided in \cite{AMPA2020}, the operator $L_{B}\equiv L_{\Pi',B}$ here defined corresponds to different choices of both the extension parameters: in \cite{AMPA2020} $\Pi$ is the projector onto the subspace of ($p$-dependent) even functions and the operator $\beta$ identifies with the multiplication by the constant $\beta\in(\RE\backslash\{0\})\cup\{\infty\}$, while here we use the projection $\Pi'$ onto the subspace of odd functions and the operator $B$ identifies with the multiplication by the function $B(p):=b p^{2}$, $b\in (\RE\backslash\{0\})\cup\{\infty\}$. Notwithstanding such differences, the proofs of the results presented in this section follow almost verbatim the ones of the corresponding results in \cite[Sec. 2]{AMPA2020} and therefore are not reproduced here.    
\par
Let $X_{0}(q,p)=(p/m,0)$ be the Hamiltonian vector field  of a classical free particle in $\RE$ and let
$$
L_{0}:\dom(L_{0})\subseteq L^{2}(\RE^{2})\to L^{2}(\RE^{2})\,,\qquad L_{0}f  = -\,i\,X_{0}\!\cdot\!\nabla f\,, 
$$
defined on the maximal domain $\dom( L_{0}) := \{ f \in L^2(\RE^2) \;|\; X_{0}\!\cdot\!\nabla f\in L^2(\RE^2) \} $, be the corresponding self-adjoint operator in $L^{2}(\RE^{2})$; one has $\sigma(L_{0})=\sigma_{ac}(L_{0})=\RE$. 
\par 
The linear map $(\gamma f)(p):=f(0,p)$ extends to a bounded operator $\gamma :\dom(L_{0}) \to  L^2(\RE, |p|\,dp)$ (here $\dom(L_{0}) \subset L^2(\RE)$ is endowed with the graph norm) such that 
$\ker(\gamma)$ is dense in $L^{2}(\RE^{2})$ (see \cite[Lem. 2.1]{AMPA2020}).\par
Denoting by $R^{0}_z := (L_{0} -z)^{-1}$ for $z\in\CO\backslash\RE$ the resolvent of $L_{0}$, one gets
$$
(R^{0}_z f)(q,p) = \int_{\RE}\!  dq'\, g_z(q - q',p)\, f(q',p)\,,
$$
where 
$$
g_z(q,p) = \HE(q\,p \Im z)\;\sgn(\Im z)\;\frac{i\,m}{|p|}\,{e^{i m z q/p}}
$$
(recall that  $\HE$ indicates the Heaviside step function). For any  $z\in\CO\backslash\RE$, we define the bounded linear map
$$
G_{z}:L^{2}(\RE,|p|^{-1}dp)\to L^{2}(\RE^{2})\,,\qquad 
(G_{z}\phi)(q,p) :=((\gamma\, R^0_{\bar z})^{*}\phi)(q,p) \equiv g_{z}(q,p)\,\phi(p)\,;
$$
here $L^{2}(\RE,|p|^{-1}dp)$ and $L^{2}(\RE,|p|dp)$ are considered as a dual couple with respect to the duality induced by the scalar product in $L^{2}(\RE)$. 
\par
For any $b\in(\RE\backslash\{0\})\cup\{\infty\}$ we define the function 
\be\label{b}
B:\RE\to \RE\cup\{\infty\}\,,\qquad B(p):=\begin{cases}b p^{2}& b\in \RE\backslash\{0\}\\ \infty& b=\infty
\end{cases}
\ee
and then, for any $z\in\CO\backslash\RE$, we define the bounded linear map 
$$
\Lambda^{B}_{z}:L^{2}(\RE,|p|dp)\to L^{2}(\RE,|p|^{-1}dp) \,,\quad
(\Lambda^{B}_{z}\phi)(p):=\frac{\phi(p)}{\frac1{B(p)} - \sgn(\Im z)\,\frac{i\,m}{2\,|p|}}
$$
(here we set $\frac{1}{\infty}:=0$). Finally, we introduce the projector on odd functions (here either $\rho(p)=|p|$ or $\rho(p)=|p|^{-1}$)
$$
\Pi' : L^2(\RE,\rho dp) \to L^2(\RE,\rho dp)\,, \qquad (\Pi' f)(p) := {1 \over 2}\,\big( f(p) - f(-p) \big)
$$
and notice that
$$ 
\Pi' \,\Lambda^{B}_{z} = \Lambda^{B}_{z}\, \Pi' \,.
$$
Then, by \cite[Thm 2.1]{P2001} here employed with $\tau:=\Pi'\gamma$, we obtain the following (compare with \cite[Thm. 2.2]{AMPA2020})
\begin{theorem}\label{resolvent}
For any $b\in (\RE \backslash \{0\}) \cup \{\infty\}$, and $B$ defined as in \eqref{b}, the linear bounded operator
\be\label{krein}
R^{B}_{z} := R_{z}^{0} + G_{z}\, \Pi'\, \Lambda^{B}_{z}\, \Pi'\, G^{*}_{\bar z}\quad \mbox{with\, $z\in\CO\backslash\RE$}\,,
\ee
is the resolvent of a self-adjoint extension $L_{B}$ of the densely defined, closed symmetric operator $L_{0}\!\upharpoonright\!\ker(\gamma)$. It acts on its domain 
$$
\dom(L_{B}):= \big\{f\in L^{2}(\RE^{2}) \;\big|\; f=f_{z}+G_{z}\,\Lambda^{B}_{z}\,\Pi'\,\gamma f_{z}\,,\; f_{z}\in\dom(L_{0})\big\}\,,
$$
by
$$
(L_{B}-z)f=(L_{0}-z)f_{z}\,.
$$
\end{theorem}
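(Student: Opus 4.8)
The plan is to obtain \eqref{krein} as a direct instance of the abstract Krein-type construction of \cite[Thm.~2.1]{P2001}, applied to the self-adjoint operator $L_{0}$ with the trace map $\tau:=\Pi'\gamma$, and then to read off the domain and the action from the abstract output. First I would check the structural hypotheses of that theorem. By assumption $\gamma$ is bounded from $\dom(L_{0})$ (graph norm) into $L^{2}(\RE,|p|\,dp)$ with $\ker(\gamma)$ dense in $L^{2}(\RE^{2})$; since $\Pi'$ is a bounded projection, $\tau=\Pi'\gamma$ is bounded with $\ker(\tau)\supseteq\ker(\gamma)$, so $\ker(\tau)$ is dense and $L_{0}\!\upharpoonright\!\ker(\tau)$ is a densely defined, closed (kernel of a bounded map) symmetric operator. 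Any self-adjoint extension produced with this $\tau$ then extends $L_{0}\!\upharpoonright\!\ker(\tau)$ and, a fortiori, the finer restriction $L_{0}\!\upharpoonright\!\ker(\gamma)$ of the statement. Next I would identify the abstract building blocks: the map $(\tau R^{0}_{\bar z})^{*}$ equals $G_{z}\,\Pi'$ (using that $\Pi'$ is symmetric for the duality pairing between $L^{2}(\RE,|p|\,dp)$ and $L^{2}(\RE,|p|^{-1}dp)$), so the auxiliary space is the odd subspace $\Pi'\,L^{2}(\RE,|p|^{-1}dp)$ and the two outer factors $G_{z}\Pi'$ and $\Pi'\,G^{*}_{\bar z}$ of \eqref{krein} are exactly those delivered by the abstract scheme.

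The core computation is the identification of the $z$-dependent (Weyl) function. Since $G_{z}\phi\notin\dom(L_{0})$, the trace cannot be applied to $G_{z}$ directly; instead I would use that the difference $G_{z}\phi-G_{w}\phi$ lies in $\dom(L_{0})$ (being $(z-w)$ times a product with a resolvent) and define $\Gamma_{z}$, up to the additive extension parameter, by $\tau(G_{z}-G_{w})=\Gamma_{z}-\Gamma_{w}$. Using the explicit kernel $g_{z}$ and computing the two-sided averaged boundary value at $q=0$ gives $\Gamma_{z}=\sgn(\Im z)\,\tfrac{i\,m}{2|p|}$ as a multiplication operator on the odd subspace. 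I would then verify the two consistency conditions of \cite[Thm.~2.1]{P2001}: the adjointness relation $\Gamma_{z}^{*}=\Gamma_{\bar z}$ (immediate, since $\overline{\sgn(\Im z)\,i}=\sgn(\Im\bar z)\,i$) and the difference identity $\Gamma_{z}-\Gamma_{w}=(z-w)\,\Pi'\,G^{*}_{\bar w}\,G_{z}\,\Pi'$ between the relevant weighted spaces, which follows from the composition law of the kernels $g_{z}$.

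With $\Gamma_{z}$ in hand, the extension is parametrised by the self-adjoint boundary operator $\Theta:=1/B$, that is multiplication by $1/(bp^{2})$ on the odd subspace (with $\Theta=0$ when $b=\infty$), which is self-adjoint on its maximal multiplication domain because the symbol is real-valued. One then reads off $\Lambda^{B}_{z}=(\Theta-\Gamma_{z})^{-1}=\big(\tfrac{1}{B(p)}-\sgn(\Im z)\tfrac{i\,m}{2|p|}\big)^{-1}$, matching the definition preceding the statement, and the commutation $\Pi'\Lambda^{B}_{z}=\Lambda^{B}_{z}\Pi'$ reconciles the abstract middle factor $\Pi'\Lambda^{B}_{z}\Pi'$ with the expression $\Lambda^{B}_{z}\Pi'$ written in $\dom(L_{B})$. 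Feeding the data $(L_{0},\tau,\Theta)$ into \cite[Thm.~2.1]{P2001} then yields simultaneously that $R^{B}_{z}$ is the resolvent of a self-adjoint extension $L_{B}$, the stated description of $\dom(L_{B})$, and the action $(L_{B}-z)f=(L_{0}-z)f_{z}$.

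The hardest part will be the rigorous justification of the Weyl-function step: since $G_{z}\phi\notin\dom(L_{0})$ one must extract $\Gamma_{z}$ from the regularised boundary value of the discontinuous kernel $g_{z}$ at $q=0$ and then establish the difference identity $\Gamma_{z}-\Gamma_{w}=(z-w)\Pi'G^{*}_{\bar w}G_{z}\Pi'$ as an identity between operators on the weighted spaces $L^{2}(\RE,|p|^{\pm1}dp)$, where the weights $|p|^{\pm1}$ must be tracked carefully throughout. A secondary technical point is the self-adjointness and invertibility of $\Theta-\Gamma_{z}$ for finite $b$, where $\Theta=1/(bp^{2})$ is an unbounded multiplication operator; the quadratic factor in $B(p)=bp^{2}$, tied to the odd projection $\Pi'$, is precisely what renders $\Lambda^{B}_{z}$ bounded between these weighted spaces and is the structural feature distinguishing the present $\delta'$ construction from the constant-$\beta$, even-projection construction of \cite{AMPA2020}.
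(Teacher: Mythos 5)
Your proposal is correct and follows essentially the same route as the paper, which likewise obtains Theorem \ref{resolvent} by a direct application of \cite[Thm. 2.1]{P2001} with $\tau:=\Pi'\gamma$, deferring the verification of the hypotheses and the computation of the Weyl function to the nearly identical argument in \cite[Sec. 2]{AMPA2020}. Your filled-in details (density and closedness via $\ker(\tau)\supseteq\ker(\gamma)$, the regularised boundary value $\Gamma_{z}=\sgn(\Im z)\,\tfrac{im}{2|p|}$, and $\Theta=1/B$) are consistent with the definitions of $G_{z}$ and $\Lambda^{B}_{z}$ preceding the statement.
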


\begin{remark}\label{remBC}
Notice that the functions $f=f_{z}+G_{z}\phi$, $\phi\in\ran(\Pi')$, belonging to $\dom( L_{B})$ fulfill the boundary condition
$$
\phi(p)=B(p)(\Pi'\,\widehat{\gamma}\,f )(p)\,.
$$
where $\widehat{\gamma}$ is the extension of the trace map $\gamma$ defined as
$$
(\widehat{\gamma}\, f)(p) := {1 \over 2}\, \big(f(0_{+},p)\, + f(0_{-},p) \big) \,.
$$
Moreover, on account of the basic identity $(-i\,X_{0} \cdot \nabla- z) G_z \phi = \phi\,\delta_{\Sigma_{0}}$, 
where $\phi\delta_{\Sigma_{0}}$ is the distribution supported on the line ${\Sigma_{0}}=\{(q,p)\!\in\!\RE^{2}\,|\, q=0\}$ defined by 
$$
\big(\phi\,\delta_{\Sigma_{0}}\big)(\varphi):=\int_{\Sigma_{0}}\!\!dp\;\phi(p)\,\varphi(0,p)\,,\quad\; \mbox{for any\, $\varphi\in \C^{\infty}_{c}(\RE^{2})$}\,,
$$
from Theorem \ref{resolvent} one can readily infer that
\begin{equation*}
L_{B} f = -i\,X_{0} \cdot \nabla f - \phi\,\delta_{\Sigma_{0}} \equiv 
-i\,X_{0}\cdot \nabla f - (B\Pi'\,\widehat{\gamma}\,f )\,\delta_{\Sigma_{0}}\,.
\end{equation*}
\end{remark}
By functional calculus and by \eqref{krein}, the action of the unitary group $e^{- i t L_{B}}$ ($t \in \RE$) describing the dynamics induced by $L_{B}$ can be explicitly characterized (the proof coincides with the one for \cite[Prop. 2.4]{AMPA2020} by noticing that all the integrals appearing there regard the $q$-variable only):
\begin{proposition}\label{exp-cl} Let $L_{B}$ be as in Theorem \ref{resolvent} and $f \in L^2(\RE^2)$. Then 
\begin{equation}\label{eq: timeevcl}
\big(e^{- i t L_{B}} f\big)(q,p) = \big(e^{- i t L_{0}} f\big)(q,p) \,-\, \frac{\HE(t\,q\,p)\; \HE\big({|p\,t| \over m} - |q|\big)}{1 +\sgn(t)\,{ 2i\,|p| \over mB(p)}}\, \big(e^{- i t L_{0}} f_{odd}\big)(q,p)\;,
\end{equation}
where $f_{odd}(q,p):=f(q,p)-f(-q,-p)$ and $e^{- i t L_{0}}$ denotes the free unitary group
\begin{equation*}
\big(e^{- i t L_{0}} f\big)(q,p) = f\Big(q - {p t \over m} , p\Big)\,.
\end{equation*}
\end{proposition}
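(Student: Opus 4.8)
The plan is to recover the group $e^{-itL_B}$ from the Krein resolvent formula \eqref{krein} by inverting the relation between resolvent and unitary group, the free group $e^{-itL_0}$ being already explicit. The decisive structural observation---exactly the one flagged after the statement---is that the three building blocks $G_z$, $\Lambda^B_z$ and $\Pi'$ act on the $p$-variable only as multiplication (by the $p$-weights $g_z(\cdot,p)$, by the $z$-independent, \emph{even} multiplier $\Lambda^B_z(p)=\big(\tfrac1{B(p)}-\sgn(\Im z)\tfrac{im}{2|p|}\big)^{-1}$, and by an odd symmetrization), so that the whole perturbation term $M_z:=G_z\Pi'\Lambda^B_z\Pi'G^*_{\bar z}$ is, for each fixed $p$, an operator acting in the $q$-variable alone. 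All nontrivial integrations therefore concern $q$, and the argument reduces to a one-dimensional contour computation with $p$ (and $-p$, coupled through $\Pi'$) carried along as a parameter.

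First I would make $M_z$ explicit as a kernel in $q$. Using $G^*_{\bar z}=\gamma R^0_z$ one gets $(G^*_{\bar z}f)(p)=\int_\RE dq'\,g_z(-q',p)f(q',p)$; since $\Lambda^B_z$ is even in $p$ it commutes with $\Pi'$ and the two projectors collapse to a single odd symmetrization, giving, for $\Im z>0$,
\begin{equation*}
(M_z f)(q,p)=-\frac{m^2}{2p^2}\,\HE(qp)\,\Lambda^B_z(p)\!\left[\int_\RE\! dq'\,\HE(-q'p)\,e^{imz(q-q')/p}f(q',p)-\int_\RE\! dq'\,\HE(q'p)\,e^{imz(q+q')/p}f(q',-p)\right].
\end{equation*}
Note that $\Lambda^B_z(p)$ depends on $z$ only through $\sgn(\Im z)$, so the spectral parameter enters the kernel exclusively through the exponentials $e^{imz(q\mp q')/p}$.

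Next I would invert. For $t>0$ I would use the Laplace representation $R^0_z=i\int_0^\infty e^{izt}e^{-itL_0}\,dt$ (valid for $\Im z>0$) and its analogue for $R^B_z$, which identify $M_z$ with $i$ times the Laplace transform of $e^{-itL_B}-e^{-itL_0}$; Fourier inversion then gives $e^{-itL_B}-e^{-itL_0}=\slim_{\varepsilon\to0^+}\tfrac1{2\pi i}\int_\RE e^{-i\lambda t}M_{\lambda+i\varepsilon}\,d\lambda$ for $t>0$. Since $\Lambda^B_z(p)$ is constant in $\lambda$, the $\lambda$-integral hits only the exponentials and produces $\tfrac1i\,\tfrac{|p|}{m}\,\delta\big(q'-(q-\tfrac{pt}{m})\big)$ in the first term and $\tfrac1i\,\tfrac{|p|}{m}\,\delta\big(q'-(\tfrac{pt}{m}-q)\big)$ in the second: the first delta fixes the source point at the free back-trajectory $q-\tfrac{pt}{m}$, giving $f\big(q-\tfrac{pt}{m},p\big)=(e^{-itL_0}f)(q,p)$, while the second gives $f\big(-(q-\tfrac{pt}{m}),-p\big)$, so that the two contributions assemble into $(e^{-itL_0}f_{odd})(q,p)$. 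The Heaviside supports $\HE(qp)$, $\HE(\pm q'p)$ evaluated at the fixed source collapse to the single geometric constraint $\HE(qp)\,\HE(p^2t/m-qp)=\HE(tqp)\,\HE(|pt|/m-|q|)$ (the requirement that the free trajectory actually cross the line $q=0$ within time $t$), and the scalar prefactor simplifies via $\tfrac{im}{2|p|}\Lambda^B_z(p)=-\big(1+\tfrac{2i|p|}{mB(p)}\big)^{-1}$, yielding \eqref{eq: timeevcl} for $t>0$. The case $t<0$ is identical upon using the Laplace representation from the lower half plane, which flips $\sgn(\Im z)$: this reverses the Heaviside supports in $g_z$ (producing $\HE(tqp)$ with the correct sign) and changes the sign of the $\tfrac{im}{2|p|}$ term in $\Lambda^B_z$, hence the factor $\sgn(t)$ in the denominator.

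The main obstacle is the bookkeeping of the two independent signs, $\sgn(\Im z)$ in the resolvent and $\sgn(t)$ in the group, and checking that the boundary value of $z$ is taken from the correct half plane in each case, so that the exponential weights $e^{-m\varepsilon(q-q')/p}$ decay exactly on the support selected by the Heaviside factors; this is also what justifies passing the inverse transform through the $q'$-integration to obtain the $\delta$'s. Finally, since the emergent formula involves only the bounded free translation acting on $f$ and $f_{odd}$, it extends pointwise to $f\in L^\infty(\RE^2)$, showing that $e^{-itL_B}$ is realized there as asserted. Because all these manipulations touch the $p$-variable only through the even multiplier $\Lambda^B_z$ and the odd projector $\Pi'$, the passage from the even projector and constant weight of \cite[Prop. 2.4]{AMPA2020} to the present odd projector and weight $B(p)=bp^2$ alters nothing in the $q$-integrations, and the argument there transfers verbatim.
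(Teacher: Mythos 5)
Your argument is correct and follows essentially the same route as the paper's (which defers to the proof of \cite[Prop. 2.4]{AMPA2020}): starting from the Krein formula \eqref{krein}, exploiting that $G_z$, $\Lambda^B_z$ and $\Pi'$ act in $p$ only by (even) multiplication and odd symmetrization so that everything reduces to $q$-integrations, and recovering the group from the resolvent by Laplace inversion, with the $\sgn(\Im z)$/$\sgn(t)$ bookkeeping producing the Heaviside factors and the denominator $1+\sgn(t)\,\tfrac{2i|p|}{mB(p)}$. The explicit kernel of $M_z$, the collapse of the two deltas onto the free back-trajectory, and the identity $\tfrac{im}{2|p|}\Lambda^B_z(p)=-\big(1+\tfrac{2i|p|}{mB(p)}\big)^{-1}$ all check out.
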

\begin{remark} Formula \eqref{eq: timeevcl} shows that $e^{-itL_{B}}$ defines a group of evolution in 
 $L^{\infty}(\RE^2)$.
\end{remark}
\begin{remark}\label{RemDir}
Notice that while the free operator $e^{- i t L_{0}}$ maps real-valued functions into real-valued functions, the same is not true for  $e^{- i t L_{B}}$, unless $b = \infty$, which corresponds to a complete reflection. In this particular case, Eq. \eqref{eq: timeevcl} reduces to
\begin{align*}
&\big(e^{-it L_{B}}f\big)(q,p) \\
& \overset{\text{$(b=\infty)$}}{=} \left[1 - \HE(t\,q\,p)\;\HE\!\left({|p\,t| \over m} - |q|\right)\!\right] f\Big(q - {pt \over m},p\Big)
  +\,\HE(t\,q\,p)\; \HE\!\left({|p\,t| \over m} - |q|\right) f\Big(\!-q + {pt \over m},-\,p\Big)\,.
\end{align*}
\end{remark}
Defining the classical wave operators by 
\be\label{w-class}
W^{\pm}_{B}f:=\lim_{t\to\pm\infty}e^{   i tL_{0}}e^{ - i  t  L_{B}}f\,,
\ee
one then has the following (compare with \cite[Prop. 2.8]{AMPA2020})
\begin{proposition}\label{prop-wave-cl} The limits in \eqref{w-class} exist pointwise for any $\xi\equiv(q,p)\!\in\! \RE^2$ with $qp\not=0$ and in $L^{2}(\RE^2)$ for any $f\in L^{2}(\RE^2)$: 
\begin{equation}\label{eq: waveopcl}
\big(W_{B}^{\pm} f\big)(q,p) 
= f(q,p) - {\HE(\mp q p) \over 1 \pm {2 i\, |p| \over mB(p)}}\,f_{odd}(q,p)\,.
\end{equation}
Furthermore, the classical scattering operator $S^{cl}_{B}  := (W_B^{+})^{*}\, W_B^{-}$ is given by 
\begin{equation}\label{eq: scattopcl}
\big( S^{cl}_{B} f\big)(q,p) = f(q,p) - {f_{odd}(q,p) \over 1 - {2 i\, |p| \over mB(p)}}\;.
\end{equation}
\end{proposition}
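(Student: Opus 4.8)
The plan is to compute $e^{itL_0}e^{-itL_B}f$ explicitly, pass to the pointwise limit, and then upgrade to $L^2$-convergence by dominated convergence. Starting from Proposition \ref{exp-cl} and applying $e^{itL_0}$, which acts as the translation $(e^{itL_0}g)(q,p)=g(q+\tfrac{pt}{m},p)$, the free contributions cancel and the shift is reabsorbed in the odd part, $(e^{-itL_0}f_{odd})(q+\tfrac{pt}{m},p)=f_{odd}(q,p)$; since the prefactor $\big(1+\sgn(t)\tfrac{2i|p|}{mB(p)}\big)^{-1}$ depends only on $p$ and $\sgn(t)$, one finds
\[
(e^{itL_0}e^{-itL_B}f)(q,p)=f(q,p)-\frac{\HE\!\big(tp\,(q+\tfrac{pt}{m})\big)\,\HE\!\big(\tfrac{|pt|}{m}-|q+\tfrac{pt}{m}|\big)}{1+\sgn(t)\,\tfrac{2i|p|}{mB(p)}}\,f_{odd}(q,p)\,.
\]

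The heart of the matter is the limit of the two Heaviside factors as $t\to\pm\infty$, for fixed $(q,p)$ with $qp\neq0$. Writing the argument of the first factor as $tpq+\tfrac{p^2t^2}{m}$, the quadratic term dominates, so this factor equals $1$ for all sufficiently large $|t|$. For the second factor, a case analysis on the four sign combinations of $p$ and $t$ shows that $\HE\!\big(\tfrac{|pt|}{m}-|q+\tfrac{pt}{m}|\big)\to\HE(-qp)$ as $t\to+\infty$ and $\to\HE(qp)$ as $t\to-\infty$, that is $\to\HE(\mp qp)$; together with $\sgn(t)\to\pm1$ this gives exactly the pointwise formula \eqref{eq: waveopcl}. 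To promote this to convergence in $L^2(\RE^2)$, I would use that $B(p)$ is real, so the denominator $1+\sgn(t)\tfrac{2i|p|}{mB(p)}$ has modulus $\geq1$ and the prefactor is bounded by $1$ uniformly in $t$; hence the integrand is dominated by a constant times $|f(q,p)|^2+|f(-q,-p)|^2\in L^1(\RE^2)$, and since $qp\neq0$ holds almost everywhere, dominated convergence yields the $L^2$ limit.

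For the scattering operator I would first determine $(W_B^+)^*$. Abbreviating $c_\pm(p):=\big(1\pm\tfrac{2i|p|}{mB(p)}\big)^{-1}$, so that $\overline{c_+}=c_-$ and $c_\pm(-p)=c_\pm(p)$, the change of variables $(q,p)\mapsto(-q,-p)$ in the pairing $\langle W_B^+f,g\rangle$, combined with $f_{odd}(-q,-p)=-f_{odd}(q,p)$, gives $(W_B^+)^*g=g-\HE(-qp)\,c_-(p)\,g_{odd}$. Composing with $W_B^-f=f-\HE(qp)\,c_-(p)\,f_{odd}$ and using the identity $(W_B^-f)_{odd}=\big(1-2\HE(qp)\,c_-(p)\big)f_{odd}$, the relations $\HE(qp)\HE(-qp)=0$ and $\HE(qp)+\HE(-qp)=1$ (valid for $qp\neq0$) collapse every term, leaving $S_B^{cl}f=f-c_-(p)\,f_{odd}$, which is \eqref{eq: scattopcl}.

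The step I expect to be the main obstacle is the sign bookkeeping in the limit of the second Heaviside factor: one must carefully track all four combinations of $\sgn(p)$ and $\sgn(t)$ to confirm the clean outcome $\HE(\mp qp)$ and to check that the first Heaviside factor does not spoil it. Once the correct symmetry relations $c_\pm(-p)=c_\pm(p)$ and $\overline{c_+}=c_-$ are isolated, the dominated-convergence estimate and the adjoint/composition algebra for the scattering operator are routine.
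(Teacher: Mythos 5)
Your proof is correct: the explicit formula from Proposition \ref{exp-cl}, the pointwise limit of the two Heaviside factors (giving $\HE(\mp qp)$), the dominated-convergence upgrade to $L^{2}$ using that the prefactor has modulus at most $1$, and the adjoint/composition algebra for $S^{cl}_{B}$ all check out. This is essentially the same argument the paper relies on, namely the verbatim adaptation of \cite[Prop.~2.8]{AMPA2020} with the even part replaced by the odd part, which is why the paper does not reproduce it.
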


\begin{remark}\label{r:Wpm-unitary} On account of Eq. \eqref{eq: waveopcl}, it is easy to check that 
\[
W_{B}^{+} W_{B}^{-} f = W_{B}^{-} W_{B}^{+} f\,.
\]
Moreover, from the identity 
\[\frac1{1 + {2i\,|p| \over mB(p)}} + \frac1{1 - {2i\,|p| \over mB(p)}}  =\frac{2}{ \big|1 + {2i\,|p| \over mB(p)}\big|^{2}}
\] 
 and a straightforward calculation it follows that 
\[
W_{B}^{\pm} (W_{B}^{\pm})^{*} f = (W_{B}^{\pm})^{*}\, W_{B}^{\pm} f = f\,.
\]
Hence, in particular, $ S^{cl}_{B} W_{B}^{+} f =  W_{B}^- f$.
\end{remark}

\begin{remark} By arguments similar to those used in the proof of Proposition \ref{prop-wave-cl}, one gets that the limits 
$$
\breve W^{\pm}_{B}f:=\lim_{t\to\pm\infty}e^{   i tL_{B}}e^{ - i  t L_{0}}f
$$ 
exist in $L^{2}(\RE^2)$ for any $f\in L^{2}(\RE^2)$ and
$$
\big(\breve W^{\pm}_{B}f\big)(p,q) = f(q, p) - \; \frac{\HE(\mp q\,p)}{1 \mp {2i\,|p| \over mB(p)}}\  f_{odd}(q, p)\,.
$$
Therefore, by \cite[Ch. X, Thm. 3.5]{Kato}, both $W^{\pm}_{B}$ and $\breve W^{\pm}_{B}$ are complete, and the absolutely continuous part of $L_{B}$ is unitarily equivalent to the absolutely continuous part of $L_{0}$, i.e., to  $L_{0}$ itself; thus 
$$
\sigma_{ac}(L_{B})=\sigma_{ac}(L_{0})=\RE
$$ 
and $L_{B}$ is unitarily equivalent to $L_{0}$.
\end{remark}

\section{The quantum Hamiltonian with a delta-prime potential}\label{ss:deltaprime}
Here we recall the definition and main properties of the operator $H_{\beta}$, $\beta\in \RE\cup{\infty}$, defined as a self-adjoint extension of the symmetric operator given by the restriction of the free Hamiltonian 
$$
\H_{0}:H^{2}(\RE)\subset L^{2}(\RE)\to L^{2}(\RE)\,,\quad \H_{0}:= - \,{\hbar^2 \over 2m}\,\frac{d^{2}\ }{dx^{2}}\,,
$$
to the set $\{\psi\in H^{2}(\RE):\psi(0)=0\}$, where $H^{2}(\RE)$ denotes the usual Sobolev space of order two, namely $H^{2}(\RE):=\{\psi\!\in\! L^{2}(\RE)\,|\,\psi''\!\in\! L^{2}(\RE)\}$. In more detail, one has (see \cite[Thms. 4.2 and 4.3]{AGHH})
\begin{gather}
\dom (H_{\beta}) := \Big\{ \psi \!\in\! H^2(\RE \backslash \{0\}) \,\Big|\, \psi'(0^{+}) = \psi'(0^{-})=\psi'(0), \ \psi(0^{+}) - \psi(0^{-}) = {2m \beta \over \hbar^2}\,\psi'(0) \Big\}\,, \nonumber 
\\
H_{\beta} \psi=  - \,{\hbar^2\over 2m}\, \psi''+\beta\,\psi'(0)\,\delta_{0}'\,. \nonumber
\end{gather}
Moreover,
$$
\sigma_{ac}(H_{\beta})=[0,+\infty)\,,\qquad\sigma_{sc}(H_{\beta})=\varnothing\,,
$$
\begin{equation*}
\sigma_{p}(H_{\beta}) = \left\{\!\begin{array}{ll}
\varnothing & \mbox{if $\beta \geqslant 0$ or $\beta = \infty$\,,} \\
\displaystyle{\Big\{\lambda_{\beta} \equiv -\, {\hbar^6 \over 2 m^3\,\beta^2 }\Big\} } & \mbox{if $\beta < 0$}\,.
\end{array}\right.
\end{equation*}
The normalized eigenfunction associated to the negative eigenvalue for $\beta < 0$ reads
\begin{equation*}
\vfi_{\beta}(x) = {\hbar \over \sqrt{m\,|\beta|}} \,\sgn(x)\,e^{-{\hbar^2 \over m\,|\beta|}\,|x|}\,.
\end{equation*}

\begin{remark} The possible eigenvalue approaches the absolutely continuous spectrum from below in the semiclassical limit, \emph{i.e.}, $\lambda_{\beta} \to 0^{-}$ for $\hbar \to 0^{+}$; correspondingly, the associated eigenfunction vanishes almost everywhere, namely $\|\vfi_{\beta} \|_{L^{\infty}(\RE)} \to 0$. This marks a noteworthy difference with respect to the case of a delta potential discussed in \cite{AMPA2020}, where the possible eigenvalue moves away from the absolutely continuous part of the spectrum and the associated eigenfunction becomes sharply peaked at one point for $\hbar \to 0^{+}$. As a consequence, many of the arguments employed in \cite{AMPA2020} cannot be implemented in the present setting.
\end{remark}
A complete set of generalized eigenfunctions associated to the absolutely continuous part of the spectrum is given by (compare with \cite[Eq. (4.23)]{AGHH})
\begin{gather}\label{eq: eigenfdeltap}
\vfi^{\pm}_{k}(x) := {e^{i \,k\,x} \over \sqrt{2\pi}} + R_{\pm}(k)\,\sgn(x)\,{e^{\mp i \,|k|\,|x|} \over \sqrt{2\pi}}  \qquad (k \in \RE \backslash \{0\}) \,, \\
R_{\pm}(k) := {{i m\,\beta\,k \over \hbar^2} \over 1 \pm  {i m \,\beta\, |k| \over \hbar^2}} = \pm\, {k \over |k| \mp {i \hbar^2 \over m \beta}} \,. \label{eq: Rpm}
\end{gather}
Notice that
\begin{equation}\label{basic}
R_{+}(k) - R_{-}(k) = 2\,\sgn(k)\,|R_{+}(k)|^2\,.
\end{equation}
For any $\beta \in \RE$, taking into account the above spectral decomposition of $\H_{\beta}$, let us consider the bounded operators 
\begin{gather}
\FF : L^2(\RE) \to L^2(\RE)\,, \qquad 
(\FF\, \psi)(k) := \int_{\RE}\! dx\; {e^{-i k x} \over \sqrt{2\pi}}\; \psi(x) \,, \nonumber \\
\FF_{\pm} : L^2(\RE) \to L^2(\RE)\,, \qquad 
(\FF_{\pm}\, \psi)(k) := \int_{\RE}\! dx\; \overline{\vfi^\pm_k(x)}\; \psi(x) \,. \label{eq: FFpm}
\end{gather}
Correspondingly, we introduce the orthogonal projectors
\begin{gather}
P_{ac} : L^2(\RE) \to L^2(\RE)\,, \qquad (P_{ac} \psi)(x) := \int_{\RE}\! dk\;\vfi_k^+(x)\; (\FF_{+}\, \psi)(k) \,, \label{defPac} \\
P_{\beta} : L^2(\RE) \to L^2(\RE)\,, \qquad (P_{\beta} \psi)(x) := \theta(-\beta)\, \vfi_{\beta}(x) \int_{\RE}\! dy\; \vfi_{\beta}(y)\; \psi(y) \,; \label{defPpp}
\end{gather}
these are such that
\begin{equation}
P_{ac} + P_{\beta} = \uno\,. \label{orto}
\end{equation}
Eqs. \eqref{defPac} and \eqref{defPpp} reduce to
\begin{equation*}
P_{ac} = \uno\,, \quad P_{\beta} = 0 \qquad \mbox{for $\beta \geqslant 0$}\,. 
\end{equation*}

For any $\beta \!\in\! \RE$ the time evolution of any state $\psi \!\in\! L^2(\RE)$ induced by the unitary group $e^{- i {t \over \hbar} \H_\beta}$ can be characterized as
\begin{equation}\label{qevol}
\big(e^{-i \frac{t}{\hbar} \H_\beta}\, \psi \big)(x) = \int_{\RE}\! dk\; e^{-i {t \over \hbar}\frac{\hbar^2 k^2}{2m}}\; \vfi_{k}^+(x)\, (\FF_{+}\, \psi)(k) + e^{-i {t \over \hbar}\, \lambda_\beta}\, (P_{\beta}\psi)(x)\,.
\end{equation}
In the definition of $P_{ac}$ and in Eq. \eqref{qevol}, one could equivalently use the generalized eigenfunctions $\vfi_k^-$ and the bounded operator $\FF_{-}$, respectively in place of $\vfi_k^+$ and $\FF_{+}$.  
\par
Since $(H_{\beta}-z)^{-1}-(H_{0}-z)^{-1}$, $z\in\CO\backslash\RE$, is a rank-one operator (see \cite[Thm. 4.1]{AGHH})  existence and completeness of the wave operators 
$$
\Omega_\beta^\pm:=\slim_{t\to\pm\infty}e^{i \frac{t}{\hbar} \H_\beta}e^{-i \frac{t}{\hbar} \H_0}
$$ follows from \cite[p. 550, Thm. 4.12]{Kato}; in particular, 
$(\Omega_\beta^{\pm})^{*} {\Omega}_\beta^{\pm} = P_{ac}$. The corresponding scattering operator
is defined, as usual, by $$S_{\beta}:=(\Omega_{\beta}^{+})^{*}\Omega_{\beta}^{-}\,.
$$ 
Moreover, one has
\begin{equation}\label{st-rep}
\Omega_\beta^\pm = \FF_{\pm}^{*} \FF \,.
\end{equation}
Relation \eqref{st-rep} is well known in the case of perturbations by regular potentials and can also be proved, by essentially the same kind of proof, in the case of a singular perturbation (see the proof of \cite[Thm. 5.5]{JST}).

\section{Convergence of the dynamics}
We focus our attention on coherent states of the form
\begin{equation}\label{CohSt}
\psih(x) \equiv \psih(\sigmaq,\sigmap,q,p;x) = \frac{1}{(2\pi \hbar)^{1/4} \sqrt{\sigmaq}}\; e^{- \frac{\sigmap}{4\hbar\sigmaq} (x-q)^2 + i \frac{p}{\hbar} (x-q)} \qquad (x \in \RE)\,,
\end{equation}
where $(q,p) \in \RE^2$ and $\sigmaq,\sigmap \in \CO$ are such that
\begin{equation}\label{sxsp1}
\Re \sigmaq > 0\,, \qquad \Re \sigmap > 0\,, \qquad
\Re \!\big[\overline{\sigmaq}\,\sigmap\big] = 1\,.
\end{equation}
The Fourier transform  with respect to $x$ of any state $\psih$ of the form \eqref{CohSt} reads
\begin{align}
\big(\FF\psih\big)(k)\equiv \psiht (\sigmaq,\sigmap,q,p;k) 
& = {1 \over \sqrt{\sigmap}} \left({2 \hbar \over \pi}\right)^{\!\!1/4} e^{-{\hbar \sigmaq \over \sigmap}(k- p/\hbar)^2 -  i k q}\,. \label{psih}
\end{align}

From now on we fix $\sigmazero>0$, $\sigmap=\sigmazero^{-1}$ and define, for any $\sigma\in\CO$ with $\Re \sigma = \sigmazero$, $\xi\equiv(q,p)\in\RE^{2}$, the state $\psi^{\hbar}_{\sigma,\xi}$ as in Eq. \eqref{initial}; notice that $\psi^{\hbar}_{\sigma,\xi} \equiv  \psih(\sigmaq,\sigmazero^{-1},q,p)$. 

In the sequel we analyze the time evolution, generated by the unitary group $ e^{-i \frac{t}{\hbar} \H_\beta}$ ($t \in \RE$), of an initial state of the form
\begin{equation}\label{psi0}
 \psih_{\sigmazero,\xi}(x)  = \psih(\sigmazero,\sigmazero^{-1},q,p;x)
 \qquad \big(\xi = (q,p)\big)\,.
\end{equation}


\begin{proposition}\label{propSemi}
For any $\psih \in L^2(\RE)$ of the form \eqref{CohSt} with $qp\not=0$, there holds
\begin{align}
\big(e^{-i \frac{t}{\hbar} \H_\beta} \psih \big)(x) & = \big(e^{-i \frac{t}{\hbar} \H_0} \psih \big)(x) + \HE(q p)\sgn(x) \, F^{\hbar}_{+,t}\big(\!-\sgn(q) |x|\big) + \HE(-q p)\sgn(x) \, F^{\hbar}_{-,t}\big(\!- \sgn(q) |x|\big) \nonumber \\
& \qquad + E^{\hbar}_{1,t}(x) + E^{\hbar}_{2,t}\big(x\big) + E^{\hbar}_{{\beta},t}(x) \,, \label{psital}
\end{align}
where we set
\begin{align}
F^{\hbar}_{\pm,t}(x) \,:=\, & \,\frac1{\sqrt{2\pi}} \int_{\RE}\! dk\; e^{-i \frac{\hbar t}{2m}\, k^2} e^{i k x}\, R_{\pm}(k)\; \psiht(k)\,, 
\label{F+t}
\end{align}
\begin{align}
E^{\hbar}_{1,t}(x) 
& := \frac1{{2\pi}} \int_{\RE}\! dk\; e^{-i \frac{\hbar t}{2m}\, k^2} \Big(\!\sgn(x)\, e^{-i|k||x|}\, |R_{+}(k)|^2 -\, e^{ikx}\, R_{-}(k)\Big)\; \times \nonumber \\
& \hspace{4cm} \times \int_{\RE}\! dy\; \big(\sgn(y)\, e^{i|k||y|} - \sgn(q)\, e^{i \sgn(q) |k| y}\big)\, \psih (y)\,, \label{E1}
\end{align}
\begin{equation}\label{E2}
E^{\hbar}_{2,t}(x) := \frac{\sgn(q p x)}{\sqrt{2\pi}} \int_\RE \! dk\; e^{-i \frac{\hbar t}{2m}\, k^2}\, e^{-i\sgn(q)k|x|}\, \theta\big(\!-\sgn(p)k\big) \, \big[ R_{-}(k) - R_{+}(k)\big]\,\psiht(k)\,,
\end{equation}
and
\begin{equation*}
E^{\hbar}_{{\beta},t}(x) := e^{-i {t \over \hbar}\, \lambda_\beta}\, (P_{\beta}\psi^{\hbar})(x)\,. 
\end{equation*}
\end{proposition}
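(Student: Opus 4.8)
The plan is to start from the spectral decomposition \eqref{qevol} of the quantum flow and to reorganize its absolutely continuous part into the free evolution, the two ``scattering'' contributions and the error terms $E^{\hbar}_{1,t},E^{\hbar}_{2,t}$; the point‑spectrum summand of \eqref{qevol} is by definition exactly $E^{\hbar}_{\beta,t}$, so that term requires no work. First I would insert into the integral in \eqref{qevol} the explicit generalized eigenfunction $\vfi^+_k$ of \eqref{eq: eigenfdeltap} and the transform $(\FF_+\psih)(k)$ of \eqref{eq: FFpm}. Using $\overline{\vfi^+_k(x)} = \tfrac{1}{\sqrt{2\pi}}\,e^{-ikx} + \tfrac{\overline{R_+(k)}}{\sqrt{2\pi}}\,\sgn(x)\,e^{i|k||x|}$ and $(\FF\psih)(k)=\psiht(k)$, one gets $(\FF_+\psih)(k) = \psiht(k) + \tfrac{\overline{R_+(k)}}{\sqrt{2\pi}}\,I(k)$ with $I(k):=\int_\RE \sgn(y)\,e^{i|k||y|}\,\psih(y)\,dy$, so that the product $\vfi^+_k(x)\,(\FF_+\psih)(k)$ breaks into four pieces. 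The piece carrying $\tfrac{1}{\sqrt{2\pi}}\,e^{ikx}\,\psiht(k)$, once weighted by $e^{-i\frac{\hbar t}{2m}k^2}$ and integrated in $k$, reproduces $\big(e^{-i\frac{t}{\hbar}\H_0}\psih\big)(x)$ and hence the free term.

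The decisive move is to linearize the trace integral $I(k)$ about the localization point $q$: since $\sgn(y)\,e^{i|k||y|} = \sgn(q)\,e^{i\sgn(q)|k|y}$ for $y$ near $q$, I would write $I(k) = \sqrt{2\pi}\,\sgn(q)\,\psiht(-\sgn(q)|k|) + I_{\mathrm{err}}(k)$, where $I_{\mathrm{err}}(k):=\int_\RE\big(\sgn(y)\,e^{i|k||y|} - \sgn(q)\,e^{i\sgn(q)|k|y}\big)\,\psih(y)\,dy$ is precisely the inner $y$‑integral appearing in \eqref{E1}. The two pieces proportional to $I_{\mathrm{err}}$ — one from the $e^{ikx}$ branch weighted by $\overline{R_+(k)}$, one from the $\sgn(x)\,e^{-i|k||x|}$ branch weighted by $|R_+(k)|^2$ — combine, via the elementary relation $\overline{R_+(k)}=-\,R_-(k)$ (read off from \eqref{eq: Rpm}), into exactly $E^{\hbar}_{1,t}$ as written in \eqref{E1}.

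There remain three ``clean'' contributions: the term $\tfrac{\sgn(x)}{\sqrt{2\pi}}\,e^{-i|k||x|}R_+(k)\psiht(k)$ and the two terms in which $I(k)$ has been replaced by $\sqrt{2\pi}\,\sgn(q)\,\psiht(-\sgn(q)|k|)$. For these I would split each $k$‑integral into $\{k>0\}$ and $\{k<0\}$, use $|k|=\pm k$, and on the half‑lines where $\psiht$ carries the argument $-\sgn(q)|k|$ perform the substitution $k\mapsto-k$, exploiting that $e^{-i\frac{\hbar t}{2m}k^2}$ is even and $R_\pm$ are odd, so as to reduce every integrand to $\psiht(k)$. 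Rewriting $e^{\pm ikx}=e^{\pm ik\sgn(x)|x|}$ to expose the global factor $\sgn(x)$, the coefficient of the ``reflected'' exponential $e^{+i\sgn(q)k|x|}$ on the half‑line $\{\sgn(k)=-\sgn(q)\}$ collapses to zero by the identity \eqref{basic}, $R_+(k)-R_-(k)=2\sgn(k)\,|R_+(k)|^2$, and the clean contributions reduce to
\[
\frac{\sgn(x)}{\sqrt{2\pi}}\int_\RE e^{-i\frac{\hbar t}{2m}k^2}\,e^{-i\sgn(q)k|x|}\Big[\HE(\sgn(q)k)\,R_+(k) + \HE(-\sgn(q)k)\,R_-(k)\Big]\psiht(k)\,dk\,.
\]
Recognizing $F^{\hbar}_{\pm,t}(-\sgn(q)|x|)$ from \eqref{F+t} inside this integral, the asserted split into $\HE(qp)\,\sgn(x)\,F^{\hbar}_{+,t}(-\sgn(q)|x|) + \HE(-qp)\,\sgn(x)\,F^{\hbar}_{-,t}(-\sgn(q)|x|) + E^{\hbar}_{2,t}$ then follows from the purely sign‑theoretic pointwise identity
\[
\HE(\sgn(q)k)R_+ + \HE(-\sgn(q)k)R_- = \HE(qp)R_+ + \HE(-qp)R_- + \sgn(qp)\,\HE(-\sgn(p)k)\,\big[R_--R_+\big]\,,
\]
checked on each of the four sign combinations of $(q,p,k)$ together with $\sgn(qpx)=\sgn(qp)\,\sgn(x)$.

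The main obstacle is this last bookkeeping: one must track simultaneously the $|k|$‑versus‑$k$ splitting, the argument $-\sgn(q)|k|$ versus $k$ of $\psiht$, and the $|x|$‑versus‑$x$ (hence $\sgn(x)$) distinction, and invoke \eqref{basic} at exactly the right half‑line so that the spurious reflected exponentials cancel and precisely the stated $\HE(qp)/\HE(-qp)$ dichotomy survives, with the single remainder $E^{\hbar}_{2,t}$. The interchanges of the $k$‑ and $y$‑integrations needed to define $I(k)$ and to assemble $E^{\hbar}_{1,t}$ are justified by the Gaussian decay of $\psih$ and $\psiht$, which makes all integrals absolutely convergent; the computation for $q<0$ is the mirror image of the one for $q>0$. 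No approximation is made anywhere: Proposition \ref{propSemi} is an exact rearrangement of \eqref{qevol}, and all smallness statements are deferred to the subsequent estimates of $E^{\hbar}_{1,t}, E^{\hbar}_{2,t}, E^{\hbar}_{\beta,t}$.
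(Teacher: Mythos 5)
Your proposal is correct and follows essentially the same route as the paper's proof: both start from the spectral decomposition \eqref{qevol}, expand $\vfi^{+}_{k}(x)(\FF_{+}\psih)(k)$ into four pieces, linearize the trace integral about $q$ to isolate $E^{\hbar}_{1,t}$ via $\overline{R_{+}}=-R_{-}$, and then use the change of variables $k\mapsto\pm\sgn(q)k$, the oddness of $R_{\pm}$, Identity \eqref{basic}, and the Heaviside identity $\HE(\pm\sgn(q)k)=\HE(\pm qp)\mp\sgn(qp)\HE(-\sgn(p)k)$ to split the remaining terms into $\HE(\pm qp)\sgn(x)F^{\hbar}_{\pm,t}(-\sgn(q)|x|)$ plus $E^{\hbar}_{2,t}$. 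Your intermediate display for the ``clean'' contributions coincides exactly with the paper's penultimate formula, so the only differences are in the order of the elementary substitutions.
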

\begin{proof} 
Firstly recall the definitions \eqref{eq: eigenfdeltap} for $\vfi^{\pm}_{k}$ and \eqref{eq: Rpm} for $R_{\pm}(k)$. Besides, notice that $\overline{R_{\pm}(k)} = - \,R_{\mp}(k)$.
Taking as well into account the results of Section \ref{ss:deltaprime} (see, in particular, Eq. \eqref{qevol}), for any $\psih \in L^2(\RE)$ of the form \eqref{CohSt} with $qp\not=0$, we obtain
\begin{equation}\label{psit1}
\begin{aligned} 
\big(e^{-i \frac{t}{\hbar} \H_\beta}\, \psih \big)(x) 
& = \big(e^{-i \frac{t}{\hbar} \H_0}\, \psih \big)(x) 
- \frac1{{2\pi}} \int_{\RE}\! dk\; e^{-i \frac{\hbar t}{2m}\, k^2}\, e^{ikx}\, R_{-}(k) \int_{\RE}\! dy\; \sgn(y)\, e^{i|k||y|}\, \psih (y)  \\ 
& \qquad + \frac{\sgn(x)}{{2\pi}} \int_{\RE}\! dk\; e^{-i \frac{\hbar t}{2m}\, k^2} e^{-i|k||x|}\, R_{+}(k) \int_{\RE}\! dy\; e^{-iky}\, \psih (y) \\ 
& \qquad + \frac{\sgn(x)}{{2\pi}} \int_{\RE}\! dk\; e^{-i \frac{\hbar t}{2m}\, k^2}\,  e^{-i|k||x|}\, |R_{+}(k)|^2 \int_{\RE}\! dy\; \sgn(y)\, e^{i|k||y|}\, \psih (y) \\
& \qquad + e^{-i {t \over \hbar}\, \lambda_\beta}\, (P_{\beta}\psih)(x)\,.
\end{aligned}
\end{equation}
Using the elementary identity 
\begin{align*}
& \int_{\RE}\! dy\; \sgn(y)\, e^{i|k||y|}\, \psih (y) \\
& = \sgn(q) \int_{\RE}\! dy\; e^{i \sgn(q) |k| y}\, \psih (y) + \int_{\RE}\! dy\; \big(\sgn(y)\, e^{i|k||y|} - \sgn(q)\, e^{i \sgn(q) |k| y}\big)\, \psih (y)\,,
\end{align*}
Eq. \eqref{psit1} can be reformulated as follows:
\begin{align*} 
\big(e^{-i \frac{t}{\hbar} \H_\beta}\, \psih \big)(x) 
& = \big(e^{-i \frac{t}{\hbar} \H_0}\, \psih \big)(x) 
- \frac{\sgn(q)}{\sqrt{2 \pi}} \int_{\RE}\! dk\; e^{-i \frac{\hbar t}{2m}\, k^2}\, e^{ikx}\, R_{-}(k)\, \psiht\big(\!-\!\sgn(q) |k|\big) \nonumber \\ 
& \qquad + \frac{\sgn(x)}{\sqrt{2\pi}} \int_{\RE}\! dk\; e^{-i \frac{\hbar t}{2m}\, k^2}\, e^{-i|k||x|}\, R_{+}(k)\; \psiht(k) \nonumber \\ 
& \qquad + \frac{\sgn(q x)}{\sqrt{2\pi}} \int_{\RE}\! dk\; e^{-i \frac{\hbar t}{2m}\, k^2}\, e^{-i|k||x|}\, |R_{+}(k)|^2 \; \psiht\big(\!-\!\sgn(q) |k|\big) 
+ E^{\hbar}_{1,t}(x) + E^{\hbar}_{{\beta},t}(x)\,.
\end{align*}
Noting that $R_{\pm}(s\,k) = s\,R_{\pm}(k)$ for $s \in \{\pm 1\}$, by elementary changes of the integration variables we obtain the following identities:
\[\begin{aligned}
&\frac{\sgn(q)}{\sqrt{2\pi}} \int_{\RE}\! dk\; e^{-i \frac{\hbar t}{2m}\, k^2}\, e^{ikx}\, R_{-}(k)\, \psiht\big(\!-\!\sgn(q) |k|\big) \\
& = -\,\frac{\sgn(q x)}{\sqrt{2\pi}} \int_{\RE}\! dk\; e^{-i \frac{\hbar t}{2m}\, k^2} e^{-ik|x|}\, R_{-}(k)\, \psiht\big(\!-\!\sgn(q) |k|\big)\,; 
\end{aligned}
\]
\[\begin{aligned}
& \frac{\sgn(x)}{\sqrt{2\pi}} \int_{\RE}\! dk\; e^{-i \frac{\hbar t}{2m}\, k^2}\, e^{-i|k||x|}\, R_{+}(k)\; \psiht(k)  \\
& = - \,\frac{\sgn(q x)}{\sqrt{2\pi}} \int_{\RE}\! dk\; e^{-i \frac{\hbar t}{2m}\, k^2}\,  e^{-i|k||x|}\, R_{+}(k)\, \psiht\big(\!-\!\sgn(q) k\big)\,; 
\end{aligned}
\]
\[
\begin{aligned}
& \frac{\sgn(q x)}{\sqrt{2\pi}} \int_{\RE}\!\! dk\; e^{-i \frac{\hbar t}{2m}\, k^2}\, e^{-i|k||x|}\, |R_{+}(k)|^2\, \psiht\big(\!-\!\sgn(q) |k|\big) \\ 
& = \frac{2\,\sgn(qx)}{\sqrt{2\pi}} \int_{0}^\infty\!\!\! dk\; e^{-i \frac{\hbar t}{2m}\, k^2}\, e^{-ik|x|}\, |R_{+}(k)|^2\, \psiht\big(\!-\!\sgn(q) k\big)\,. 
\end{aligned}
\]
From the above relations we infer
\begin{align*} 
\big(e^{-i \frac{t}{\hbar} \H_\beta}\, \psih \big)(x) 
& = \big(e^{-i \frac{t}{\hbar} \H_0}\, \psih \big)(x) 
+ \frac{\sgn(q x)}{\sqrt{2\pi}} \int_{\RE}\! dk\; e^{-i \frac{\hbar t}{2m}\, k^2}\,  e^{-ik|x|}\, R_{-}(k)\, \psiht\big(\!-\!\sgn(q) |k|\big) \nonumber \\ 
& \qquad - \frac{\sgn(q x)}{\sqrt{2\pi}} \int_{\RE}\! dk\; e^{-i \frac{\hbar t}{2m}\, k^2}\, e^{-i|k||x|}\, R_{+}(k)\, \psiht\big(\!-\!\sgn(q) k \big) \nonumber \\ 
& \qquad + \frac{2\,\sgn(q x)}{\sqrt{2\pi}} \int_{0}^\infty\!\!\! dk\; e^{-i \frac{\hbar t}{2m}\, k^2}\, e^{-ik|x|}\, |R_{+}(k)|^2\, \psiht\big(\!-\!\sgn(q) k\big) 
+ E^{\hbar}_{1,t}(x) + E^{\hbar}_{{\beta},t}(x)\,.
\nonumber 
\end{align*}

Hence, taking into account the basic Identity \eqref{basic} we obtain
\begin{align*}
\big(e^{-i \frac{t}{\hbar} \H_\beta}\, \psih \big)(x) 
& = \big(e^{-i \frac{t}{\hbar} \H_0}\, \psih \big)(x) 
+ \frac{\sgn(q x)}{\sqrt{2\pi}} \int_{-\infty}^0\! dk\; e^{-i \frac{\hbar t}{2m}\, k^2}\, e^{-ik|x|}\, R_{-}(k)\, \psiht\big(\!-\!\sgn(q) |k|\big) \nonumber \\ 
& \qquad - \frac{\sgn(q x)}{\sqrt{2\pi}} \int_{-\infty}^0\! dk\; e^{-i \frac{\hbar t}{2m}\, k^2}\, e^{-i|k||x|}\, R_{+}(k)\, \psiht\big(\!-\!\sgn(q) k \big) 
+ E^{\hbar}_{1,t}(x) + E^{\hbar}_{{\beta},t}(x) \nonumber \\ 
& = \big(e^{-i \frac{t}{\hbar} \H_0}\, \psih \big)(x) 
- \frac{\sgn(q x)}{\sqrt{2\pi}} \int_0^{\infty}\! dk\; e^{-i \frac{\hbar t}{2m}\, k^2}\, e^{ik|x|}\, R_{-}(k)\, \psiht\big(\!-\!\sgn(q) k\big) \nonumber \\ 
& \qquad + \frac{\sgn(q x)}{\sqrt{2\pi}} \int_0^{\infty}\! dk\; e^{-i \frac{\hbar t}{2m}\, k^2}\, e^{-ik|x|}\, R_{+}(k)\, \psiht\big(\!\sgn(q) k \big) 
+ E^{\hbar}_{1,t}(x) + E^{\hbar}_{{\beta},t}(x)\,. \nonumber 
\end{align*}
Recalling once more that $R_{\pm}(s\,k) = s\,R_{\pm}(k)$ for $s \in \{\pm 1\}$ and using the basic identity
\begin{equation*}
\HE\big(\pm \sgn(q) k\big)  = \HE(\pm q ) \mp \sgn(q)\, \HE\big(\!- k\big) = \HE(\pm q p) \mp \sgn(q p)\,  \HE\big(\!-\sgn(p) k\big)\,,
\end{equation*}
by a few elementary manipulations we obtain
\begin{align*}
\big(e^{-i \frac{t}{\hbar} \H_\beta}\, \psih \big)(x) 
& = \big(e^{-i \frac{t}{\hbar} \H_0}\, \psih \big)(x) 
- \frac{\sgn(q x)}{\sqrt{2\pi}} \int_\RE \! dk\; e^{-i \frac{\hbar t}{2m}\, k^2}\, e^{ik|x|}\, \theta(k)\,R_{-}(k)\, \psiht\big(\!-\!\sgn(q) k\big) \nonumber \\ 
& \qquad + \frac{\sgn(q x)}{\sqrt{2\pi}} \int_\RE \! dk\; e^{-i \frac{\hbar t}{2m}\, k^2}\, e^{-ik|x|}\, \theta(k)\,R_{+}(k)\, \psiht\big(\!\sgn(q) k \big) 
+ E^{\hbar}_{1,t}(x) + E^{\hbar}_{{\beta},t}(x) \nonumber \\
& = \big(e^{-i \frac{t}{\hbar} \H_0}\, \psih \big)(x) 
+ \frac{\sgn(x)}{\sqrt{2\pi}} \int_\RE \! dk\; e^{-i \frac{\hbar t}{2m}\, k^2}\,e^{-i\sgn(q)k|x|}\,\theta\big(-\sgn(q)k\big)\, R_{-}(k)\,\psiht(k) \nonumber \\ 
& \qquad + \frac{\sgn(x)}{\sqrt{2\pi}} \int_\RE \! dk\; e^{-i \frac{\hbar t}{2m}\, k^2}\, e^{-i\sgn(q) k|x|}\,\theta\big(\sgn(q) k\big)\, R_{+}(k)\, \psiht(k) 
+ E^{\hbar}_{1,t}(x) + E^{\hbar}_{{\beta},t}(x) \nonumber \\ 
& = \big(e^{-i \frac{t}{\hbar} \H_0}\, \psih \big)(x) 
+ \frac{\theta(qp)\,\sgn(x)}{\sqrt{2\pi}} \int_\RE \! dk\; e^{-i \frac{\hbar t}{2m}\, k^2}\, e^{-i\sgn(q)k|x|}\, R_{+}(k)\;\psiht(k) \nonumber \\ 
& \qquad + \frac{\theta(-qp)\,\sgn(x)}{\sqrt{2\pi}} \int_\RE \! dk\; e^{-i \frac{\hbar t}{2m}\, k^2}\, e^{-i\sgn(q)k|x|}\, R_{-}(k)\;\psiht(k)\nonumber \\ 
& \qquad -\, \frac{\sgn(q p x)}{\sqrt{2\pi}} \int_\RE \! dk\; e^{-i \frac{\hbar t}{2m}\, k^2}\, e^{-i\sgn(q)k|x|}\, \theta\big(\!-\sgn(p)k\big) \, \big[ R_{+}(k) - R_{-}(k)\big]\,\psiht(k) \nonumber \\ 
& \qquad + E^{\hbar}_{1,t}(x) + E^{\hbar}_{{\beta},t}(x) \,.
\end{align*}
The proof is concluded noting that the latter identity is equivalent to Eq. \eqref{psital}. 
\end{proof}

\begin{lemma}\label{LemmaEpbis}  There exists a constant $C>0$ such that, for any $\psih \in L^2(\RE)$ of the form \eqref{CohSt} with $qp\not=0$,  for all $t\in\RE$ and for all $\eta \in (0,1)$, there holds 
\begin{align}\label{sgnPsit}
\big\|F^{\hbar}_{\pm,t} - R_{\pm}(p/\hbar)\;e^{-i \frac{t}{\hbar} \H_0}\, \psih \big\|_{L^2(\RE)}
\leq C \left[ {\eta \over (1- \eta)}\, \Big({\hbar^3 \over m |\beta p|}\Big) + e^{-\,\eta^2 {p^2 \over 2\hbar |\sigmap|^2}} \right] .
\end{align}
\end{lemma}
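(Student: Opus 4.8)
The plan is to pass to Fourier space, where both $F^{\hbar}_{\pm,t}$ and $e^{-i\frac{t}{\hbar}\H_0}\psih$ are explicit, and to reduce the whole bound to a weighted $L^2$ estimate that no longer depends on $t$. From the definition \eqref{F+t} one has $F^{\hbar}_{\pm,t}=\FF^{-1}\big[e^{-i\frac{\hbar t}{2m}(\cdot)^2}R_{\pm}\,\psiht\big]$, while by \eqref{psih} and the discussion in Section \ref{ss:deltaprime} the free evolution is $e^{-i\frac{t}{\hbar}\H_0}\psih=\FF^{-1}\big[e^{-i\frac{\hbar t}{2m}(\cdot)^2}\psiht\big]$. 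Subtracting, using that $\FF$ is unitary and that $\big|e^{-i\frac{\hbar t}{2m}k^2}\big|=1$, Plancherel's theorem gives
\[
\big\|F^{\hbar}_{\pm,t} - R_{\pm}(p/\hbar)\,e^{-i\frac{t}{\hbar}\H_0}\psih\big\|_{L^2(\RE)}^2
= \int_{\RE}\big|R_{\pm}(k)-R_{\pm}(p/\hbar)\big|^2\,\big|\psiht(k)\big|^2\,dk\,.
\]
In particular the $t$-dependence cancels, which is precisely why the resulting bound will be uniform in $t$.

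Next I would record the two explicit ingredients. Writing $a:=\hbar^2/(m\beta)$, a direct computation from \eqref{eq: Rpm} (using $R_{\pm}(-k)=-R_{\pm}(k)$ to reduce both arguments to positive reals) yields the factorized difference
\[
\big|R_{\pm}(k)-R_{\pm}(p/\hbar)\big|
= \frac{|a|\,\big|k-p/\hbar\big|}{\sqrt{k^2+a^2}\;\sqrt{(p/\hbar)^2+a^2}}\,,
\]
together with the uniform bound $|R_{\pm}(k)|\le 1$. From \eqref{psih} and the constraints \eqref{sxsp1}, which give $\Re[\sigmaq/\sigmap]=1/|\sigmap|^2$ and hence $\|\psih\|_{L^2(\RE)}=1$, the weight is the normalized Gaussian
\[
\big|\psiht(k)\big|^2 = \frac{1}{|\sigmap|}\sqrt{\tfrac{2\hbar}{\pi}}\;e^{-\frac{2\hbar}{|\sigmap|^2}\,(k-p/\hbar)^2}\,.
\]

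Then I would split the integral at $|k-p/\hbar|=\eta\,|p|/\hbar$. On the near region $|k-p/\hbar|\le\eta\,|p|/\hbar$ one has $|k|\ge(1-\eta)|p|/\hbar>0$, so bounding the two square roots below by $|k|$ and $|p|/\hbar$ respectively in the factorized formula gives
\[
\big|R_{\pm}(k)-R_{\pm}(p/\hbar)\big|
\le \frac{|a|\,|k-p/\hbar|}{|k|\,|p|/\hbar}
\le \frac{\eta}{1-\eta}\,\frac{|a|}{|p|/\hbar}
= \frac{\eta}{1-\eta}\,\frac{\hbar^3}{m\,|\beta p|}\,;
\]
multiplying by $\int_{\RE}|\psiht|^2\le 1$ controls the near part. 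On the far region I would use $\big|R_{\pm}(k)-R_{\pm}(p/\hbar)\big|\le 2$ and a standard Gaussian tail estimate of the form $\int_{|u|>M}e^{-\alpha u^2}du\le\sqrt{2}\,e^{-\alpha M^2/2}\int_{\RE}e^{-\alpha u^2}du$, with $\alpha=2\hbar/|\sigmap|^2$ and $M=\eta\,|p|/\hbar$, obtaining a tail mass $\le C\,e^{-\eta^2 p^2/(\hbar|\sigmap|^2)}$. Taking square roots halves the exponent and yields the term $e^{-\eta^2 p^2/(2\hbar|\sigmap|^2)}$; combining the two regions via $\sqrt{a+b}\le\sqrt{a}+\sqrt{b}$ produces the claimed estimate.

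The main obstacle is that $R_{\pm}$ is not differentiable at $k=0$, so a naive Taylor expansion around $p/\hbar$ is unavailable. The exact factorized difference formula circumvents this and, crucially, delivers the sharp constant $\eta/(1-\eta)$: a cruder bound that estimates both denominators by $(1-\eta)|p|/\hbar$ would only give $\eta/(1-\eta)^2$, which does not match the stated inequality as $\eta\to 1$. The remaining care is purely bookkeeping — tracking the exact coefficient in the Gaussian exponent and handling the sign of $p$, the latter absorbed by $R_{\pm}(-k)=-R_{\pm}(k)$ and the corresponding symmetry of the Gaussian.
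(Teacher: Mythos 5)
Your argument is correct and follows essentially the same route as the paper's own proof: Plancherel reduces the claim to estimating $\int_{\RE}|R_{\pm}(k)-R_{\pm}(p/\hbar)|^2\,|\psiht(k)|^2\,dk$, the integral is split at $|\hbar k/p-1|=\eta$, the near region is controlled by an explicit bound on the difference of reflection coefficients giving the factor $\tfrac{\eta}{1-\eta}\tfrac{\hbar^3}{m|\beta p|}$, and the far region by a uniform bound plus a Gaussian tail. One inaccuracy should be flagged: your displayed identity
\[
\big|R_{\pm}(k)-R_{\pm}(p/\hbar)\big|=\frac{|a|\,\big|k-p/\hbar\big|}{\sqrt{k^2+a^2}\;\sqrt{(p/\hbar)^2+a^2}}
\]
does \emph{not} hold for all $k$; the oddness $R_{\pm}(-k)=-R_{\pm}(k)$ turns the difference into a \emph{sum} when $\sgn k\neq\sgn p$, and the correct squared modulus then carries the extra term $4k^2(p/\hbar)^2/\big[(k^2+a^2)\,((p/\hbar)^2+a^2)\big]$ — this is precisely the term $(m\beta/\hbar^2)^2\big(k|p|/\hbar-|k|p/\hbar\big)^2$ kept in the paper's computation, and it is why the paper's uniform bound is $5$ rather than your $4$. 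Your proof nonetheless goes through, because the identity is only invoked on the region $|k-p/\hbar|\le\eta|p|/\hbar$, where $|k|\ge(1-\eta)|p|/\hbar$ forces $\sgn k=\sgn p$ and the extra term vanishes, while on the far region you use only $|R_{\pm}|\le 1$; the identity should simply be stated under the restriction $\sgn k=\sgn p$.
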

\begin{proof} We essentially retrace the same arguments described in \cite[Proof of Lem. 3.3]{AMPA2020}.
Firstly, notice that by unitarity of the Fourier transform we have
\begin{align*}
& \big\|F^{\hbar}_{\pm,t} - R_{\pm}(p/\hbar)\;e^{-i \frac{t}{\hbar} \H_0}\, \psih \big\|_{L^2(\RE)}
= \left(\int_{\RE}\! dk\; \big|R_{\pm}(k) - R_{\pm}(p/\hbar)\big|^2\, \big|\psiht(k)\big|^2\right)^{1/2}\,.
\end{align*}
Recalling the definition of $R_{\pm}(k)$ given in Eq. \eqref{eq: Rpm}, by explicit computations we obtain 
\begin{align*}
\big|R_{\pm}(k) - R_{\pm}(p/\hbar)\big|^2
& = {\Big({m \beta \over \hbar^2}\Big)^{\!2} \left( \big(k - {p/\hbar}\big)^{2} + \Big({m \beta \over \hbar^2}\Big)^{\!2}\,\big(k\,|p|/\hbar - |k|\,p/\hbar \big)^2 \right) \over \left( 1 + \Big({m \beta \over \hbar^2} \Big)^{\!2} \,(p/\hbar)^{2} \right) \left( 1 + \Big({m \beta \over \hbar^2} \Big)^{\!2}\, k^2 \right)} \\
& = {\Big({m \beta p \over \hbar^3}\Big)^{\!2} \left( \big(\hbar k/p - 1\big)^{2} + \big(1 - \sgn(k\,p)\big)^2\,\Big({m \beta p \over \hbar^3}\Big)^{\!2}\, (\hbar\, k/p)^2 \right) \over \left( 1 + \Big({m \beta p \over \hbar^3} \Big)^{\!2} \right) \left( 1 + \Big({m \beta p \over \hbar^3} \Big)^{\!2}\, (\hbar k/p)^2 \right)}\,.
\end{align*}
Starting from here, one obtains the following for any $\eta \in (0,1)$:\footnote{On one hand notice that, for all $b \in \RE $, there holds
\begin{align*}
& \sup_{\xi \in \RE} {b^2\! \left( (\xi \!-\! 1)^{2} \!+\! (1 \!-\! \sgn\xi)^2\,b^2 \xi^2 \right) \over (1 + b^2)\,(1 + b^2 \xi^2)}
\leq {b^2 \over 1 \!+\! b^2} \, \sup_{\xi \in \RE} \left({(\xi - 1)^{2} \over 1 \!+\! b^2 \xi^2} + {4\,b^2 \xi^2 \over 1 \!+\! b^2 \xi^2}\right)\!
\leq 4 + {b^2 \over 1 \!+\! b^2}\,\max_{\xi \in \RE} {(\xi - 1)^{2} \over 1 \!+\! b^2 \xi^2} 
= 4 + {b^2 (1 \!+\!b^{-2}) \over 1 \!+\! b^2} = 5\,.
\end{align*}
On the other hand, for any given $b \in \RE $, $\eta \in (0,1)$ and for all $|\xi - 1| \leq \eta$ (which ensures $\xi > 0$) we have
\begin{align*}
{b^2 \left( (\xi - 1)^{2} + (1 - \sgn\xi)^2\,b^2 \xi^2 \right) \over (1 + b^2)\,(1 + b^2 \xi^2)} = {(\xi - 1)^{2} \over (1 + 1/b^2)\,(1 + b^2 \xi^2)}  \leq {(\xi - 1)^{2} \over b^2 \xi^2} \leq {(\xi - 1)^{2} \over (1-\eta)^2 b^2}\,.
\end{align*}
}
\begin{gather}
\big|R_{\pm}(k) - R_{\pm}(p/\hbar)\big|^2 \leq 5 \,, \qquad \mbox{for all\, $k \in \RE$}\;; \label{eq: RpmIneq1} \\
\big|R_{\pm}(k) - R_{\pm}(p/\hbar)\big|^2 \leq {1 \over (1- \eta)^2} \Big({\hbar^3 \over m \beta p}\Big)^{\!2}\, (\hbar k/p - 1)^2\,, \qquad \mbox{for\; $|\hbar k /p - 1| \leq \eta$}\; . \label{eq: RpmIneq2}
\end{gather}

So, let us fix $\eta \in (0,1)$ and note that 
\begin{align*}
\big\|F^{\hbar}_{\pm,t} - R_{\pm}(p/\hbar)\;e^{-i \frac{t}{\hbar} \H_0}\, \psih \big\|_{L^2(\RE)}
& \leq \left(\int_{\{\,|\,\hbar k/p  \,-\, 1\,| \,\leq\, \eta \,\}}\hspace{-0.1cm} dk\; \big|R_{\pm}(k) - R_{\pm}(p/\hbar)\big|^2\, \big|\psiht(k)\big|^2\right)^{\!\!1/2} \\
& \qquad + \left(\int_{\{\,|\,\hbar k/p \,-\, 1\,| \,\geq \, \eta \,\}}\hspace{-0.1cm} dk\; \big|R_{\pm}(k) - R_{\pm}(p/\hbar)\big|^2\, \big|\psiht(k)\big|^2\right)^{\!\!1/2} .
\end{align*}
Taking into account the inequalities \eqref{eq: RpmIneq1} and \eqref{eq: RpmIneq2}, we infer respectively
\begin{align*}
& \int_{\{\,|\,\hbar k/p  \,-\, 1\,| \,\geq\, \eta \,\}}\hspace{-0.4cm} dk\; \big|R_{\pm}(k) - R_{\pm}(p/\hbar)\big|^2\, \big|\psiht(k)\big|^2
\leq {5 \over |\sigmap|} \sqrt{{2 \hbar \over \pi}} \int_{\{\,|\,\hbar k/p  \,-\, 1\,| \,\geq\, \eta \,\}}\hspace{-0.5cm} dk\;  e^{-{2 \hbar \over |\sigmap|^2}\,(k- p/\hbar)^2} \\
& \leq {5 \over |\sigmap|} \sqrt{{2 \hbar \over \pi}}\; e^{-\eta^2 {p^2 \over \hbar |\sigmap|^2}} \int_{\RE}\! dk\;  e^{-{\hbar \over |\sigmap|^2}\,(k- p/\hbar)^2}
= 5\sqrt{2}\; e^{-\eta^2 {p^2 \over \hbar |\sigmap|^2}} \,,
\end{align*}
\begin{align*}
& \int_{\{\,|\,\hbar k/p  \,-\, 1\,| \,\leq\, \eta \,\}}\hspace{-0.5cm} dk\; \big|R_{\pm}(k) - R_{\pm}(p/\hbar)\big|^2\, \big|\psiht(k)\big|^2 
\leq {1 \over (1\!-\! \eta)^2} \Big({\hbar^3 \over m \beta p}\Big)^{\!2} \int_{\{\,|\,\hbar k/p  \,-\, 1\,| \,\leq\, \eta \,\}}\hspace{-0.5cm} dk\; (\hbar k/p - 1)^2\, \big|\psiht(k)\big|^2 \\
& \leq {\eta^2 \over (1- \eta)^2}\, \Big({\hbar^3 \over m \beta p}\Big)^{\!2} \int_{\RE}\! dk\; \big|\psiht(k)\big|^2
= {\eta^2 \over (1- \eta)^2}\, \Big({\hbar^3 \over m \beta p}\Big)^{\!2} \;.
\end{align*}
Summing up, the above arguments and the basic relation $\sqrt{a^2 + b^2} \leq a + b$ for $a,b\geq 0$ yield the bound \eqref{sgnPsit}.
\end{proof}

\begin{lemma}\label{LemmaProj} Let $P_{ac}$ and $P_{\beta}$ be defined, respectively, as in Eqs. \eqref{defPac} and \eqref{defPpp}. Then, there exists a constant $C > 0$ such that, for any $\psih \in L^2(\RE)$ of the form \eqref{CohSt} with $qp\not=0$,  there holds
\begin{equation*}
\begin{aligned}
& \big\|P_{\beta} \psih \big\|_{L^2(\RE)} = \big\|P_{ac} \psih - \psih\big\|_{L^2(\RE)} 
\\
& \leq C \left({\hbar^{5} |\sigmaq|^2 \over m^2\,\beta^2}\right)^{\!\!1/4} e^{{\hbar^5 |\sigmaq|^2 \over m^2 \beta^2}}
\left({1 \over \sqrt{|\sigmaq\,\sigmap|}}\, e^{- {p^2 \over \hbar |\sigmap|^2} - {\hbar^2 \over m |\beta|} \left(|q| + {2\, \sgn(q p)\, |p| \Im[\sigmap \overline{\sigmaq}] \over |\sigmap|^2}\right) 
- {\hbar^5 |\sigmaq|^2 \over m^2 \beta^2} \left(1 - {1 \over |\sigmaq\, \sigmap|^2} \right)}
+ e^{- \frac{q^2}{4\hbar|\sigmaq|^2}} \right) .
\end{aligned}
\end{equation*}
In particular, for any $\psih = \psih_{\sigmazero,\xi}$ of the form \eqref{psi0} with $qp\not=0$ there holds
\begin{equation}
\begin{aligned}
& \big\|P_{\beta} \psih \big\|_{L^2(\RE)} = \big\|P_{ac} \psih - \psih\big\|_{L^2(\RE)} \leq C \left({\hbar^{5} \sigma_0^2 \over m^2 \beta^2}\right)^{\!\!1/4} e^{{\hbar^5 \sigma_0^2 \over m^2 \beta^2}} \left( e^{- {\sigmazero^2 p^2 \over \hbar}} + e^{- \frac{q^2}{4\hbar \sigmazero^2}} \right) . \label{Pacsi0}
\end{aligned}
\end{equation}
\end{lemma}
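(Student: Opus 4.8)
The plan is to reduce the operator-norm estimate to a single scalar Gaussian integral and then to evaluate (for the main term) and bound (for the correction) that integral by completing the square.

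\emph{Reduction.} Since $P_{\beta}$ is the orthogonal rank-one projector onto the normalized, real eigenfunction $\vfi_{\beta}$ (recall \eqref{defPpp}), for $\beta<0$ one has $\|P_{\beta}\psih\|_{L^{2}} = |\langle\vfi_{\beta},\psih\rangle| = \big|\int_{\RE}\vfi_{\beta}(x)\,\psih(x)\,dx\big|$, while for $\beta\geq0$ both sides of the asserted inequality vanish; the identity $\|P_{\beta}\psih\|=\|P_{ac}\psih-\psih\|$ is immediate from \eqref{orto}. Inserting the explicit form of $\vfi_{\beta}$ and writing $a:=\hbar^{2}/(m|\beta|)$, the task becomes to estimate
\[
\frac{\hbar}{\sqrt{m|\beta|}}\int_{\RE}\sgn(x)\,e^{-a|x|}\,\psih(x)\,dx\,.
\]

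\emph{Splitting.} Assume $q>0$ (the case $q<0$ is symmetric). I would use the elementary identity
\[
\int_{\RE}\sgn(x)\,e^{-a|x|}\,\psih\,dx = \int_{\RE}e^{-ax}\,\psih\,dx \,-\, 2\int_{-\infty}^{0}\cosh(ax)\,\psih\,dx\,,
\]
which isolates a full-line integral (the main term) from a correction supported on the half-line opposite to $\sgn(q)$. The first integral is an exact Gaussian: completing the square in the exponent $-\tfrac{\sigmap}{4\hbar\sigmaq}(x-q)^{2}+i\tfrac{p}{\hbar}(x-q)-ax$, with leading coefficient $A=\sigmap/(4\hbar\sigmaq)$ (which has $\Re A>0$ by \eqref{sxsp1}), its modulus produces precisely the first exponential of the claim: the momentum factor $e^{-p^{2}/(\hbar|\sigmap|^{2})}$ comes from $\Re[-(p/\hbar)^{2}/(4A)]$, the factor $e^{-a|q|}$ from the shift $e^{-aq}$, the $\Im[\sigmap\overline{\sigmaq}]$ cross term from $\Re[-i(p/\hbar)a/(2A)]$, and the growing factor $e^{\Re[a^{2}/(4A)]}=e^{\hbar^{5}/(m^{2}\beta^{2}|\sigmap|^{2})}$, which I would rewrite as the product $e^{\hbar^{5}|\sigmaq|^{2}/(m^{2}\beta^{2})}\,e^{-\frac{\hbar^{5}|\sigmaq|^{2}}{m^{2}\beta^{2}}(1-1/|\sigmaq\sigmap|^{2})}$ appearing in the statement; the $\hbar$-powers and constants collect into the amplitude $(\hbar^{5}|\sigmaq|^{2}/(m^{2}\beta^{2}))^{1/4}/\sqrt{|\sigmaq\sigmap|}$.

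\emph{Correction and specialization.} For the second integral, on the wrong side $x\leq0$ (with $q>0$) I would use the pointwise bound $(x-q)^{2}\geq x^{2}+q^{2}$ to factor out $e^{-q^{2}/(4\hbar|\sigmaq|^{2})}$, and then bound the remaining $\int_{-\infty}^{0}e^{a|x|}e^{-x^{2}/(4\hbar|\sigmaq|^{2})}\,dx$ by extending it to a complete Gaussian, which yields the factor $e^{a^{2}/(4\mu)}=e^{\hbar^{5}|\sigmaq|^{2}/(m^{2}\beta^{2})}$ with $\mu=1/(4\hbar|\sigmaq|^{2})$; multiplied by the amplitude this gives the second term of the bound, and the triangle inequality combines the two contributions. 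Finally \eqref{Pacsi0} follows by specializing $\sigmaq=\sigmazero$, $\sigmap=\sigmazero^{-1}$: then $|\sigmaq\sigmap|=1$ (so the in-exponent correction vanishes), $\Im[\sigmap\overline{\sigmaq}]=0$ (so the cross term drops), $|\sigmap|^{-2}=\sigmazero^{2}$ and $|\sigmaq|^{2}=\sigmazero^{2}$, and one bounds $e^{-a|q|}\leq1$.

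\emph{Main obstacle.} The content is conceptually elementary --- two Gaussian integrals --- so the real work is bookkeeping: tracking the real and imaginary parts of the complex coefficient $A$ while extracting the several exponential rates, and checking that the single completing-the-square constant $e^{\Re[a^{2}/(4A)]}$ can be recast in the factored ``prefactor $\times$ correction'' form of the statement, so that the same growing factor $e^{\hbar^{5}|\sigmaq|^{2}/(m^{2}\beta^{2})}$ governs both the main term and the correction. Notably, no singular estimate (such as a complementary error function) is required, since the main term is an \emph{unrestricted} Gaussian and the correction is controlled by the inequality $(x-q)^{2}\geq x^{2}+q^{2}$ on the wrong half-line.
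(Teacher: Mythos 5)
Your proposal is correct and follows essentially the same route as the paper: reduce to the scalar overlap $\int\vfi_{\beta}\psi^{\hbar}$, split off the full-line exponential $e^{-ax}$ from a $\cosh$-correction supported on the half-line opposite to $\sgn(q)$, evaluate the former exactly by completing the square, and bound the latter via $(x-q)^{2}\ge x^{2}+q^{2}$ plus extension to a full Gaussian. The only cosmetic difference is that the paper substitutes $y=\sgn(q)x$ instead of assuming $q>0$, which is equivalent.
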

\begin{proof}
Recalling that $\vfi_{\beta}$ is a normalized eigenfunction such that $\|\vfi_{\beta}\|_{L^2(\RE)} = 1$, from Eq. \eqref{defPpp} it readily follows
$$
\big\|P_{\beta} \psih \big\|_{L^2(\RE)} =  \left|\int_{\RE}\! dx\; \vfi_{\beta}(x)\; \psih(x)\right| .
$$
By direct computations we obtain
\begin{align*}
& \int_{\RE}\! dx\; \vfi_{\beta}(x)\; \psih(x) 
= {\hbar \over (2\pi \hbar)^{1/4} \sqrt{m\,|\beta|\,\sigmaq}} \int_{\RE}\! dx\;\sgn(x)\;e^{-{\hbar^2 \over m\,|\beta|}\,|x|} \; e^{- \frac{\sigmap}{4\hbar\sigmaq} (x-q)^2 + i \frac{p}{\hbar} (x-q)} \\
& = {\sgn(q)\, \hbar \over (2\pi \hbar)^{1/4} \sqrt{m\,|\beta|\,\sigmaq}} \int_{\RE}\! dy\;\sgn(y)\;e^{-{\hbar^2 \over m\,|\beta|}\,|y|} \; e^{- \frac{\sigmap}{4\hbar\sigmaq} (y-|q|)^2 +  i\, \frac{\sgn(q)\,p}{\hbar} (y-|q|)} \\
& = {\sgn(q)\, \hbar \over (2\pi \hbar)^{1/4} \sqrt{m\,|\beta|\,\sigmaq}} 
\left[\int_{\RE}\! dy\;e^{-{\hbar^2 \over m\,|\beta|}\,y} \; e^{- \frac{\sigmap}{4\hbar\sigmaq} (y-|q|)^2 +  i\, \frac{\sgn(q)\,p}{\hbar} (y-|q|)} \right. \\
&\hspace{4cm} \left. +
\int_{\RE}\! dy \left(\sgn(y)\,e^{-{\hbar^2 \over m\,|\beta|}\,|y|} - e^{-{\hbar^2 \over m\,|\beta|}\,y}\right) e^{- \frac{\sigmap}{4\hbar\sigmaq} (y-|q|)^2 +  i\, \frac{\sgn(q)\,p}{\hbar} (y-|q|)}
\right] .
\end{align*}
On the one hand, keeping in mind our assumptions about the covariance parameters $\sigmaq,\sigmap$ and evaluating explicitly the Gaussian integral we get
\begin{align*}
& \left|{\sgn(q)\, \hbar \over (2\pi \hbar)^{1/4} \sqrt{m\,|\beta|\,\sigmaq}}  \int_{\RE}\! dy\;e^{-{\hbar^2 \over m\,|\beta|}\,y} \; e^{- \frac{\sigmap}{4\hbar\sigmaq} (y-|q|)^2 +  i\, \frac{\sgn(q)\,p}{\hbar} (y-|q|)} \right| \\
& = \left({8\pi\,\hbar^{5} \over m^2\,\beta^2\,|\sigmap|^2}\right)^{\!\!1/4} e^{- {p^2 \over \hbar |\sigmap|^2} - {\hbar^2 \over m |\beta|} \left(|q| + {2\, \sgn(q p)\, |p| \Im[\sigmap \overline{\sigmaq}] \over |\sigmap|^2}\right) + {\hbar^{5} \over m^2 \beta^2 |\sigmap|^2}}\,.
\end{align*}
On the other hand we have
\begin{align*}
& \left|{\sgn(q)\, \hbar \over (2\pi \hbar)^{1/4} \sqrt{m\,|\beta|\,\sigmaq}} 
\int_{\RE}\! dy \left(\sgn(y)\,e^{-{\hbar^2 \over m\,|\beta|}\,|y|} - e^{-{\hbar^2 \over m\,|\beta|}\,y}\right) e^{- \frac{\sigmap}{4\hbar\sigmaq} (y-|q|)^2 +  i\, \frac{\sgn(q)\,p}{\hbar} (y-|q|)}
\right| \\
& \leq {\hbar \over (2\pi \hbar)^{1/4} \sqrt{m\,|\beta|\,|\sigmaq|}} 
\int_{- \infty}^{0}\! dy \left(e^{{\hbar^2 \over m\,|\beta|}\,y} + e^{-{\hbar^2 \over m\,|\beta|}\,y}\right)
e^{- \frac{1}{4\hbar|\sigmaq|^2} (y-|q|)^2} \\
& \leq {2\,\hbar \over (2\pi \hbar)^{1/4} \sqrt{m\,|\beta|\,|\sigmaq|}}\,e^{- \frac{q^2}{4\hbar|\sigmaq|^2}} \int_{\RE}\! dy\; e^{- \frac{1}{4\hbar|\sigmaq|^2}\,y^2 + {\hbar^2 \over m\,|\beta|}\,y}
= \left({128\,\pi\,\hbar^{5} |\sigmaq|^2 \over m^2\,\beta^2}\right)^{\!\!1/4}\,e^{- \frac{q^2}{4\hbar|\sigmaq|^2} + {\hbar^5 |\sigmaq|^2 \over m^2 \beta^2}}\,.
\end{align*}
The above arguments suffice to infer the thesis.
\end{proof}

\begin{remark} If $p = 0$, recalling the definition of error function and the asymptotic expansions of the latter (see, \emph{e.g.}, \cite[Ch. 7]{NIST}), it can be shown by explicit computations that in the semiclassical limit there holds
\begin{equation*}
 \big\|P_{\beta} \psih \big\|_{L^2(\RE)} = \big\|P_{ac} \psih - \psih\big\|_{L^2(\RE)} = O\left( \!\Big({\hbar^{5} \sigma_0^2 \over m^2\,\beta^2}\Big)^{\!1/4}\right) . 
\end{equation*}
\end{remark}

\begin{lemma}\label{LemmaE1bis} 
 There exists a constant $C>0$ such that, for any $\psih \in L^2(\RE)$ of the form \eqref{CohSt} with $qp\not=0$,  and for all $t \in \RE$, there holds
\begin{equation*}
\big\|E^{\hbar}_{1,t}\big\|_{L^2(\RE)} \leq C\, e^{- {q^2 \over 4\hbar|\sigmaq|^2}}\,.
\end{equation*}
\end{lemma}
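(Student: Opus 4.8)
The plan is to bound $\|E^{\hbar}_{1,t}\|_{L^{2}(\RE)}$ \emph{uniformly in $t$} by first reducing it to the $L^{2}$-norm of the inner $y$-integral appearing in \eqref{E1}, and then to exploit the decisive structural feature of that integral: its integrand is supported on the half-line opposite in sign to $q$, i.e.\ precisely where $\psih$ is exponentially small. Throughout, the oscillatory factor $e^{-i\frac{\hbar t}{2m}k^{2}}$ has modulus one and so disappears in every Plancherel step, which is exactly what will produce a $t$-independent constant. Note also that, since the $y$-kernel $\sgn(y)\,e^{i|k||y|}-\sgn(q)\,e^{i\sgn(q)|k|y}$ depends on $k$ only through $|k|$, the inner integral $I(k):=\int_{\RE}dy\,\big(\sgn(y)\,e^{i|k||y|}-\sgn(q)\,e^{i\sgn(q)|k|y}\big)\,\psih(y)$ is even in $k$.

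I would split $E^{\hbar}_{1,t}=A_{1}+A_{2}$ according to the two terms $\sgn(x)\,e^{-i|k||x|}|R_{+}(k)|^{2}$ and $-e^{ikx}R_{-}(k)$ in \eqref{E1}. For $A_{2}(x)=-\frac{1}{2\pi}\int_{\RE}dk\,e^{-i\frac{\hbar t}{2m}k^{2}}e^{ikx}R_{-}(k)\,I(k)$ I recognize an inverse Fourier transform and invoke unitarity of $\FF$ together with the elementary bound $|R_{\pm}(k)|\le 1$ read off from \eqref{eq: Rpm}, obtaining $\|A_{2}\|_{L^{2}}\le C\|I\|_{L^{2}}$. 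For $A_{1}$ I first fold the $k$-integral onto $(0,\infty)$: since $|R_{+}|$ and $I$ are even, $A_{1}(x)=\sgn(x)\,G(|x|)$ with $G(u)=\frac{1}{\pi}\int_{0}^{\infty}dk\,e^{-i\frac{\hbar t}{2m}k^{2}}e^{-iku}|R_{+}(k)|^{2}I(k)$, whence $\|A_{1}\|_{L^{2}}^{2}=2\|G\|_{L^{2}(0,\infty)}^{2}$; a further Plancherel estimate applied to $G$ (extended by $0$ to $k<0$), again with $|R_{+}|\le 1$, yields $\|A_{1}\|_{L^{2}}\le C\|I\|_{L^{2}}$. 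Hence $\|E^{\hbar}_{1,t}\|_{L^{2}}\le C\,\|I\|_{L^{2}(\RE,dk)}$, uniformly in $t$.

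Next I would evaluate $I$ explicitly. A short case check on the sign of $y$ shows that the kernel vanishes identically on $\{\sgn(y)=\sgn(q)\}$ and equals $-2\sgn(q)\cos(|k|y)$ on $\{\sgn(y)=-\sgn(q)\}$; thus, for $k>0$, $I(k)=-2\sgn(q)\int_{\{yq<0\}}\cos(ky)\,\psih(y)\,dy$. Setting $\chi:=\mathbf 1_{\{yq<0\}}\,\psih$ and writing $\cos(ky)=\tfrac12(e^{iky}+e^{-iky})$, this identifies $I(k)$ on $(0,\infty)$ with $-\sgn(q)\sqrt{2\pi}\,\big((\FF\chi)(k)+(\FF\chi)(-k)\big)$; one last Plancherel estimate then gives $\|I\|_{L^{2}(\RE,dk)}\le C\,\|\chi\|_{L^{2}(\RE,dy)}$.

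It remains to estimate $\|\chi\|_{L^{2}}$, which is where the decay is actually produced and which I expect to be the only delicate point. From \eqref{CohSt} and the normalization $\Re[\overline{\sigmaq}\,\sigmap]=1$ one has $|\psih(y)|^{2}=\frac{1}{\sqrt{2\pi\hbar}\,|\sigmaq|}\,e^{-(y-q)^{2}/(2\hbar|\sigmaq|^{2})}$, and on the support $\{yq<0\}$ one has $(y-q)^{2}\ge q^{2}$. The subtle point is that a crude split such as $(y-q)^{2}\ge\tfrac12 q^{2}+\tfrac12(y-q)^{2}$ would only yield the weaker rate $e^{-q^{2}/(8\hbar|\sigmaq|^{2})}$; to capture the sharp exponent I would instead reduce the half-line integral to a Gaussian tail $\int_{a}^{\infty}e^{-r^{2}/2}\,dr$ with $a=|q|/(\sqrt{\hbar}\,|\sigmaq|)$ and use the uniform bound $\int_{a}^{\infty}e^{-r^{2}/2}\,dr\le C\,e^{-a^{2}/2}$, valid for all $a>0$ (it is immediate for $a\ge 1$ from $1\le r/a$, and trivial for $0<a\le 1$ since $e^{-a^{2}/2}\ge e^{-1/2}$). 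This gives $\|\chi\|_{L^{2}}^{2}\le C\,e^{-q^{2}/(2\hbar|\sigmaq|^{2})}$, hence $\|\chi\|_{L^{2}}\le C\,e^{-q^{2}/(4\hbar|\sigmaq|^{2})}$. Chaining the three reductions then yields the claimed bound.
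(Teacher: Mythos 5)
Your proof is correct, but it takes a genuinely different and more direct route than the paper's. You split $E^{\hbar}_{1,t}$ into its two summands, use only Plancherel together with the crude bound $|R_{\pm}(k)|\leq 1$ to reduce everything to $\|I\|_{L^{2}(\RE,dk)}$, and then observe that the $y$-kernel vanishes identically on $\{\sgn y=\sgn q\}$, so that $I$ is (up to reflection and constants) the Fourier transform of $\mathbf 1_{\{yq<0\}}\psih$; the Gaussian decay then comes out in one line from $(u+|q|)^{2}\geq u^{2}+q^{2}$ (or, as you do, from a uniform Gaussian-tail bound — either gives the sharp exponent). The paper instead first integrates by parts in $y$ to trade $\cos(ky)\,\psih$ for $\sin(ky)\,\partial_{y}\psih$ and absorb a factor of $k$ into the Lorentzian weight $\bigl(k^{2}+(\hbar^{2}/m\beta)^{2}\bigr)^{-1}$, evaluates the resulting $k$-integral explicitly as the exponential kernel $e^{-\frac{\hbar^{2}}{m\beta}|y-y'|}$, and then performs further integrations by parts on the double $y$-integral, relying on exact cancellations between the terms $\mathcal{I}^{\hbar}_{1}$ and $\mathcal{J}^{\hbar}_{1}$ to eliminate the dangerous contributions proportional to $\frac{m\beta}{\hbar^{2}}|\psih(0)|^{2}$. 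Your argument bypasses that cancellation mechanism entirely (the key structural input — support of the inner integral on the half-line opposite to $q$ — is the same in both proofs, since the paper's reformulation also reduces to $\int_{0}^{\infty}dy\,\cos(ky)\,\psih(-\sgn(q)y)$), and it yields the same exponent $e^{-q^{2}/(4\hbar|\sigmaq|^{2})}$ with a constant manifestly independent of $t$. All the intermediate steps you flag (evenness of $I$ in $k$, the folding of the $k$-integral for the $|R_{+}|^{2}$ term, $\|\psi(|\cdot|)\|_{L^{2}(\RE)}^{2}\leq 2\|\psi\|_{L^{2}(\RE)}^{2}$) check out.
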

\begin{proof} Firstly, let us remark that the definition \eqref{E1} of $E_{1,t}$ can be reformulated as follows, recalling that $R_{\pm}(s\,k) = s\,R_{\pm}(k)$ for $s \in \{\pm 1\}$ and using the basic Identity \eqref{basic}:
\begin{align*}
E^{\hbar}_{1,t}(x) 
& = -\,\frac{\sgn(q x)}{{2\pi}} \int_{\RE}\! dk\; e^{-i \frac{\hbar t}{2m}\, k^2} \Big( e^{-i|k||x|}\, |R_{+}(k)|^2 +  e^{-i k|x|}\, R_{-}(k)\Big)\; \times \nonumber \\
& \hspace{4cm} \times \int_{\RE}\! dy\; \big( \sgn(y)\, e^{i|k||y|} + e^{- i  |k| y}\big)\, \psih \big(\!-\sgn(q)y\big) \\
& = -\,\frac{\sgn(q x)}{{2\pi}} \int_{0}^{\infty}\! dk\; e^{-i \frac{\hbar t}{2m}\, k^2} \Big( e^{-i k |x|}\, 2\,|R_{+}(k)|^2 + (e^{-i k |x|} - e^{i k |x|})\, R_{-}(k)\Big)\; \times \nonumber \\
& \hspace{4cm} \times \int_{0}^{\infty}\! dy\; \big(e^{i k y} + e^{- i k y}\big)\, \psih \big(\!-\sgn(q)y\big) \\
& = -\,\frac{\sgn(q x)}{{\pi}} \int_{0}^{\infty}\! dk\; e^{-i \frac{\hbar t}{2m}\, k^2} \Big( e^{-i k |x|}\, R_{+}(k) - e^{i k |x|}\, R_{-}(k)\Big) \int_{0}^{\infty}\! dy\; \cos(k y)\, \psih \big(\!-\sgn(q)\,y\big)\,.
\end{align*}

To proceed, notice that
\begin{align*}
e^{- i k |x|}\, R_{+}(k) - e^{i k |x|}\,R_{-}(k) 
& = {k^2\, (e^{i k |x|} + e^{-i k |x|}) \over k^2 + \big({\hbar^2 \over m \beta}\big)^2}
- {i\,\hbar^2 \over m \beta}\, {k\, (e^{i k |x|} - e^{-i k |x|}) \over k^2 + \big({\hbar^2 \over m \beta}\big)^2} \qquad
\mbox{for\, $k > 0$}\,,
\end{align*}
and that the latter expression is an even function of $k$, for $k \in \RE$. Notice also that the integral w.r.t. $y$ gives an even function of $k$ as well. Thus, by symmetry arguments we obtain
\begin{align*}
E^{\hbar}_{1,t}(x) 
& = -\,\frac{\sgn(q x)}{{2\pi}} \int_{\RE}\! dk\; e^{-i \frac{\hbar t}{2m}\, k^2} \left({k^2\, (e^{i k |x|} + e^{-i k |x|}) \over k^2 + \big({\hbar^2 \over m \beta}\big)^2}
- {i\,\hbar^2 \over m \beta}\, {k\, (e^{i k |x|} - e^{-i k |x|}) \over k^2 + \big({\hbar^2 \over m \beta}\big)^2}\right)\; \times \nonumber \\
& \hspace{4cm} \times \int_{0}^{\infty}\! dy\; \cos(k y)\, \psih \big(\!-\sgn(q)\,y\big) \\
& = -\,\frac{\sgn(q x)}{{\pi}} \int_{\RE}\! dk\; e^{-i \frac{\hbar t}{2m}\, k^2}\,e^{i k |x|}\,{k^2 - {i\,\hbar^2 \over m \beta}\,k \over k^2 + \big({\hbar^2 \over m \beta}\big)^2} \int_{0}^{\infty}\! dy\; \cos(k y)\, \psih \big(\!-\sgn(q)\,y\big) \\
& = \frac{\sgn(q x)}{{\pi}} \int_{\RE}\! dk\; e^{-i \frac{\hbar t}{2m}\, k^2}\,e^{i k |x|}\,{k - {i\,\hbar^2 \over m \beta} \over k^2 + \big({\hbar^2 \over m \beta}\big)^2} \int_{0}^{\infty}\! dy\; \sin(k y)\; \partial_y\psih \big(\!-\sgn(q)\,y\big)\,.
\end{align*}
where the last identity is easily derived integrating by parts w.r.t. $y$ and noting that the boundary terms vanish.

Then, by the elementary inequality $\| \psi (|\,\cdot\,|)\|_{L^2(\RE)}^2 \leq 2\,\| \psi \|_{L^2(\RE)}^2$ and by the unitarity of the Fourier transform it follows that 
\begin{align*}
& \|E^{\hbar}_{1,t}\|_{L^2(\RE)}^2 
\leq 2 \left\| \sqrt{\frac{2}{\pi}}\;{1 \over k + {i\,\hbar^2 \over m \beta}} \int_{0}^{\infty}\! dy\; \sin(k y)\; \partial_y \psih \big(\!-\sgn(q)\,y\big) \right\|_{L^2(\RE,dk)}^2 \\ 
& = \frac{2}{\pi} \int_{\RE}\!\!\! dk\; {1 \over k^2 \!+\! \big({\hbar^2 \over m \beta}\big)^{\!2}}
\int_{0}^{\infty}\!\!\!\! dy \int_{0}^{\infty}\!\!\!\! dy' \Big(\!\cos\!\big(k(y\!-\!y')\big) - \cos\!\big(k(y\!+\!y')\big)\!\Big) \partial_{y}\psih \big(\!-\sgn(q)\,y\big) \overline{\partial_{y'}\psih \big(\!-\sgn(q)\,y'\big)}\, .
\end{align*}
From here and from the identity (see \cite[p. 424, Eq. 3.723.2]{GR})
\[
\int_{\RE}\!\! dk\; {\cos(k \xi) \over k^2 \!+\! \big({\hbar^2 \over m \beta}\big)^{\!2}} = {\pi\,m \beta \over \hbar^2}\, e^{-\, {\hbar^2 \over m \beta} |\xi|}\;,
\]
it follows
\begin{align}
\|E^{\hbar}_{1,t}\|_{L^2(\RE)}^2 \leq \mathcal{I}^{\hbar}_{1} +  \mathcal{J}^{\hbar}_{1} \,,\label{E1IJ}
\end{align}
where we put
\begin{align*}
 \mathcal{I}^{\hbar}_{1} &  :=  {2 m \beta \over \hbar^2} \int_{0}^{\infty}\!\! dy\, \int_{0}^{\infty}\!\! dy'\,e^{-\, {\hbar^2 \over m \beta}\, |y - y'|}\, \partial_{y}\psih \big(\!-\sgn(q)y\big)\, \overline{\partial_{y'}\psih \big(\!-\sgn(q)y'\big)}\;,  \\ 
  \mathcal{J}^{\hbar}_{1} & := - \,{2 m \beta \over \hbar^2} \int_{0}^{\infty}\!\! dy\, \int_{0}^{\infty}\!\! dy'\, e^{-\, {\hbar^2 \over m \beta}\, (y + y')}\, \partial_{y}\psih \big(\!-\sgn(q)y\big)\, \overline{\partial_{y'}\psih \big(\!-\sgn(q)y'\big)} \;.
\end{align*}
Via repeated integration by parts and a few elementary manipulations, the latter definitions can be rephrased as follows:
\begin{align*}
 \mathcal{I}^{\hbar}_{1} 
& = {2\, m \beta \over \hbar^2} \int_{0}^{\infty}\!\! dy\; \partial_{y}\psih \big(\!-\sgn(q)y\big) \left[
e^{-\, {\hbar^2 \over m \beta}\, y} \int_{0}^{y}\! dy'\,e^{{\hbar^2 \over m \beta}\, y'}\, \overline{\partial_{y'}\psih \big(\!-\sgn(q)y'\big)} \right. \\
& \left.\hspace{6cm} +\; e^{{\hbar^2 \over m \beta}\, y} \int_{y}^{\infty}\!\! dy'\,e^{-\, {\hbar^2 \over m \beta}\, y'}\, \overline{\partial_{y'}\psih \big(\!-\sgn(q)y'\big)}\,\right] \\
& = -\,2 \int_{0}^{\infty}\!\! dy\; \partial_{y}\psih \big(\!-\sgn(q)y\big) \left[
{m \beta \over \hbar^2}\,e^{-\, {\hbar^2 \over m \beta}\, y}\, \overline{\psih(0)}
+ e^{-\, {\hbar^2 \over m \beta}\, y} \int_{0}^{y}\! dy'\,e^{{\hbar^2 \over m \beta}\, y'}\, \overline{\psih \big(\!-\sgn(q)y'\big)}  \right. \\
& \left.\hspace{8.5cm} 
-\; e^{{\hbar^2 \over m \beta}\, y} \int_{y}^{\infty}\!\! dy'\,e^{-\, {\hbar^2 \over m \beta}\, y'}\, \overline{\psih \big(\!-\sgn(q)y'\big)} 
\right] \\
& = {2 m \beta \over \hbar^2}\,|\psih(0)|^2\!
+ 4 \int_{0}^{\infty}\!\!\! dy\, \big|\psih \big(\!-\sgn(q)y\big)\big|^2\! \\
& \qquad - 2 \int_{0}^{\infty}\!\!\! dy\, e^{-\, {\hbar^2 \over m \beta}\, y} \Big[\overline{\psih(0)}\,\psih \big(\!-\sgn(q)y\big) + \psih(0)\, \overline{\psih \big(\!-\sgn(q)y\big)} \Big] \\
& \qquad -\, {2\hbar^2 \over m \beta} \int_{0}^{\infty}\!\! dy\; \psih \big(\!-\sgn(q)y\big)\, e^{-\, {\hbar^2 \over m \beta}\, y} \int_{0}^{y}\! dy'\,e^{{\hbar^2 \over m \beta}\, y'}\, \overline{\psih \big(\!-\sgn(q)y'\big)}  \\
& \qquad - {2\hbar^2 \over m \beta} \int_{0}^{\infty}\!\! dy\; \psih \big(\!-\sgn(q)y\big)\,  e^{{\hbar^2 \over m \beta}\, y} \int_{y}^{\infty}\!\! dy'\,e^{-\, {\hbar^2 \over m \beta}\, y'}\, \overline{\psih \big(\!-\sgn(q)y'\big)}\;;
\end{align*}
\begin{align*}
\mathcal{J}^{\hbar}_{1} 
& = - \,{2 m \beta \over \hbar^2} \left|\int_{0}^{\infty}\!\! dy\; e^{-\, {\hbar^2 \over m \beta}\, y}\, \partial_{y}\psih \big(\!-\sgn(q)y\big)\right|^{2} \\
& = - \,{2 m \beta \over \hbar^2} \left| -\, \psih(0) + {\hbar^2 \over m \beta} \int_{0}^{\infty}\!\! dy\; e^{-\, {\hbar^2 \over m \beta}\, y}\, \psih \big(\!-\sgn(q)y\big) \right|^{2} \\
& = -\,{2 m \beta \over \hbar^2}\, |\psih(0)|^2
 + \,2 \int_{0}^{\infty}\!\! dy\; e^{-\, {\hbar^2 \over m \beta}\, y}\, \Big(\overline{\psih(0)}\,\psih \big(\!-\sgn(q)y\big) + \psih(0)\,\overline{\psih \big(\!-\sgn(q)y\big)}\Big) \\
& \qquad
-\, {2 \hbar^2 \over m \beta} \int_{0}^{\infty}\!\! dy \int_{0}^{\infty}\!\! dy'\, e^{-\, {\hbar^2 \over m \beta}\, (y+y')}\, \psih \big(\!-\sgn(q)y\big) \overline{\psih \big(\!-\sgn(q)y'\big)}\,.
\end{align*}
Noting that cancellations occur, from the above relations and from Eq.\,\eqref{E1IJ} we infer
\begin{equation*}
\|E^{\hbar}_{1,t}\|_{L^2(\RE)}^2 \leq \, \mathcal{U}^{\hbar}_{1} + \mathcal{V}^{\hbar}_{1} + \mathcal{W}^{\hbar}_{1}\,,
\end{equation*}
where we put
\begin{align*}
\mathcal{U}^{\hbar}_{1} & := 4 \int_{0}^{\infty}\!\!\! dy\, \big|\psih \big(\!-\sgn(q)y\big)\big|^2\,, \\
\mathcal{V}^{\hbar}_{1} & := -\, {2\hbar^2 \over m \beta} \int_{0}^{\infty}\!\! dy \int_{0}^{\infty}\!\! dy'\,e^{-\, {\hbar^2 \over m \beta}\, |y-y'|}\,\psih \big(\!-\sgn(q)y\big)\, \overline{\psih \big(\!-\sgn(q)y'\big)}\,, \\
\mathcal{W}^{\hbar}_{1} & := -\, {2 \hbar^2 \over m \beta} \int_{0}^{\infty}\!\! dy \int_{0}^{\infty}\!\! dy'\, e^{-\, {\hbar^2 \over m \beta}\, (y+y')}\, \psih \big(\!-\sgn(q)y\big)\, \overline{\psih \big(\!-\sgn(q)y'\big)}\,.
\end{align*}
Now, keeping in mind the basic identity (cf. Eq. \eqref{CohSt} and the related comments)
\begin{gather*}
\big|\psih\big(\!-\!\sgn(q) y\big)\big| = \frac{1}{(2\pi \hbar)^{1/4} \sqrt{|\sigmaq|}}\; e^{- {(y+|q|)^2 \over 4\hbar|\sigmaq|^2}}\,,
\end{gather*}
by arguments similar to those described in the proof of \cite[Lem. 3.5]{AMPA2020}, we infer the following inequalities:
\begin{align*}
& \big|\mathcal{U}^{\hbar}_{1}\big|
= \frac{4}{\sqrt{2\pi \hbar}\, |\sigmaq|} \int_{0}^{\infty}\!\! dy \; e^{- {(y+|q|)^2 \over 2\hbar|\sigmaq|^2}}
\leq \frac{4\; e^{- {|q|^2 \over 2\hbar|\sigmaq|^2}}}{\sqrt{2\pi \hbar}\, |\sigmaq|} \int_{0}^{\infty}\!\! dy \; e^{- {y^2 \over 2\hbar|\sigmaq|^2}}
= 2\,e^{- {|q|^2 \over 2\hbar|\sigmaq|^2}}\,;
\end{align*}
\begin{align*}
\big|\mathcal{V}^{\hbar}_{1}\big|
& \leq {2\hbar^2 \over m \beta} \int_{0}^{\infty}\!\! dy\;\big|\psih \big(\!-\sgn(q)y\big)\big| \left[e^{-\, {\hbar^2 \over m \beta}\, y}\, \int_{0}^{y}\! dy'\,e^{{\hbar^2 \over m \beta}\,y'} \big|\psih \big(\!-\sgn(q)y'\big)\big| \right. \\
& \hspace{6cm}\left. +\; e^{{\hbar^2 \over m \beta}\,y}\, \int_{y}^{\infty}\!\! dy'\,e^{-\, {\hbar^2 \over m \beta}\, y'} \big|\psih \big(\!-\sgn(q)y'\big)\big| \right] \\
& \leq {2\hbar^2 \over m \beta}\;\frac{e^{- {|q|^2 \over 4\hbar|\sigmaq|^2}}}{(2\pi \hbar)^{1/4} \sqrt{|\sigmaq|}} \int_{0}^{\infty}\!\!\! dy\;\big|\psih \big(\!-\sgn(q)y\big)\big| \left[ e^{-\, {\hbar^2 \over m \beta}\, y}  \int_{0}^{y}\!\! dy'\,e^{{\hbar^2 \over m \beta}\,y'}\! + e^{{\hbar^2 \over m \beta}\,y} \int_{y}^{\infty}\!\!\! dy'\,e^{-\, {\hbar^2 \over m \beta}\, y'}\right] \\
& \leq \frac{4\;e^{- {|q|^2 \over 4\hbar|\sigmaq|^2}}}{(2\pi \hbar)^{1/4} \sqrt{|\sigmaq|}} \int_{0}^{\infty}\!\!\! dy\;\big|\psih \big(\!-\sgn(q)y\big)\big|
\leq \frac{4\;e^{- {|q|^2 \over 2\hbar|\sigmaq|^2}}}{\sqrt{2\pi \hbar}\, |\sigmaq|} \int_{0}^{\infty}\!\!\! dy\;e^{- {y^2 \over 4\hbar|\sigmaq|^2}} 
= 2\sqrt{2}\; e^{- {|q|^2 \over 2\hbar|\sigmaq|^2}}\,;
\end{align*}
\begin{align*}
\big|\mathcal{W}^{\hbar}_{1}\big|
& = {2 \hbar^2 \over m \beta} \left|\int_{0}^{\infty}\!\! dy \, e^{-\, {\hbar^2 \over m \beta}\, y}\, \psih \big(\!-\sgn(q)y\big) \right|^2\!
\leq {2 \hbar^2 \over m \beta} \left(\int_{0}^{\infty}\!\! dy \, e^{-\, {2\hbar^2 \over m \beta}\, y}\! \right)\! \left(\int_{0}^{\infty}\!\! dy \, \big| \psih \big(\!-\sgn(q)y\big) \big|^2\!\right) \\
& \leq \frac{e^{- {|q|^2 \over 2\hbar|\sigmaq|^2}}}{\sqrt{2\pi \hbar}\, |\sigmaq|} \int_{0}^{\infty}\!\! dy\; e^{- {y^2 \over 2\hbar|\sigmaq|^2}} 
= {1 \over 2}\;e^{- {|q|^2 \over 2\hbar|\sigmaq|^2}}\;.
\end{align*}
Summing up, the above relations imply the thesis. 
\end{proof}

\begin{lemma}\label{LemmaE2bis}
There exists a constant $C>0$ such that, for any $\psih \in L^2(\RE)$ of the form \eqref{CohSt} with $qp\not=0$ and for all $t\in\RE$, there holds
\begin{equation*}
\big\|E^{\hbar}_{2,t}\big\|_{L^2(\RE)}  \leq C\;e^{-{p^2 \over \hbar |\sigmap|^2}}\, .
\end{equation*}
\end{lemma}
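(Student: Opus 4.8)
The plan is to reduce $\|E^{\hbar}_{2,t}\|_{L^2(\RE)}$ to a one-dimensional Gaussian tail integral in the momentum variable $k$, using crucially that the integrand in \eqref{E2} is supported on the half-line where $k$ and $p$ have opposite signs.

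First I would note that $|E^{\hbar}_{2,t}(x)|$ is even in $x$: the prefactor $\sgn(qpx)$ is unimodular and the surviving integral depends on $x$ only through $|x|$. Hence $\|E^{\hbar}_{2,t}\|^2_{L^2(\RE)} = 2\int_0^\infty |E^{\hbar}_{2,t}(x)|^2\,dx$. Setting
\[
g(k) := e^{-i\frac{\hbar t}{2m}k^2}\,\theta(-\sgn(p)k)\,\big[R_-(k)-R_+(k)\big]\,\psiht(k)
\]
and $h(x) := \tfrac{1}{\sqrt{2\pi}}\int_\RE e^{-i\sgn(q)kx}\,g(k)\,dk$, one has $E^{\hbar}_{2,t}(x)=\sgn(qp)\,h(x)$ for $x>0$. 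Since $k\mapsto\sgn(q)k$ is a unitary reflection, Plancherel's theorem gives $\|h\|_{L^2(\RE)}=\|g\|_{L^2(\RE)}$, and therefore $\int_0^\infty|E^{\hbar}_{2,t}|^2=\int_0^\infty|h|^2\le\|g\|^2_{L^2(\RE)}$, i.e. $\|E^{\hbar}_{2,t}\|^2_{L^2(\RE)}\le 2\,\|g\|^2_{L^2(\RE)}$. Note the $t$-dependence sits entirely in the unimodular factor $e^{-i\frac{\hbar t}{2m}k^2}$, which drops out of $|g|^2$, so all bounds will be uniform in $t$.

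Next, since the phases are unimodular, $\|g\|^2_{L^2(\RE)}=\int_{\{\sgn(p)k<0\}}|R_-(k)-R_+(k)|^2\,|\psiht(k)|^2\,dk$. The basic Identity \eqref{basic} yields $|R_-(k)-R_+(k)|^2=4\,|R_+(k)|^4\le 4$, because $|R_+(k)|^2=k^2/\big(k^2+(\hbar^2/(m\beta))^2\big)\le 1$. On the support $\{\sgn(p)k<0\}$ one has $(k-p/\hbar)^2\ge p^2/\hbar^2$, so substituting the explicit Gaussian $|\psiht(k)|^2=\tfrac{1}{|\sigmap|}\sqrt{2\hbar/\pi}\;e^{-\frac{2\hbar}{|\sigmap|^2}(k-p/\hbar)^2}$ from \eqref{psih} leaves a one-sided Gaussian tail.

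The main obstacle is to extract the \emph{full} exponent $e^{-2p^2/(\hbar|\sigmap|^2)}$ — needed since the claimed bound squared is $e^{-2p^2/(\hbar|\sigmap|^2)}$ — with a constant that remains bounded as $p\to0$ and $\hbar\to0$. A naive splitting of the Gaussian either loses a factor $2$ in the exponent or introduces a prefactor $(\hbar|\sigmap|^2/p^2)^{1/2}$ that blows up in the small-momentum regime. The clean remedy is the sharp tail estimate $\int_a^\infty e^{-cu^2}\,du\le\tfrac12\sqrt{\pi/c}\;e^{-ca^2}$ (equivalently $\mathrm{erfc}(y)\le e^{-y^2}$ for $y\ge0$), applied with $a=|p|/\hbar$ and $c=2\hbar/|\sigmap|^2$. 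This gives $\int_{\{\sgn(p)k<0\}}|\psiht(k)|^2\,dk\le\tfrac12\,e^{-2p^2/(\hbar|\sigmap|^2)}$, whence $\|g\|^2_{L^2(\RE)}\le 2\,e^{-2p^2/(\hbar|\sigmap|^2)}$ and finally $\|E^{\hbar}_{2,t}\|_{L^2(\RE)}\le 2\,e^{-p^2/(\hbar|\sigmap|^2)}$, with a constant uniform in $t$, $\hbar$ and $\xi$.
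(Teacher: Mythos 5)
Your proposal is correct and follows essentially the same route as the paper: reduce to $2\|g\|_{L^2}^2$ via the evenness in $x$ and Plancherel, bound $|R_--R_+|^2=4|R_+|^4\le 4$ using \eqref{basic}, and then extract the full Gaussian factor $e^{-2p^2/(\hbar|\sigmap|^2)}$ from the one-sided tail integral. Your invocation of the sharp tail bound $\int_a^\infty e^{-cu^2}\,du\le\tfrac12\sqrt{\pi/c}\,e^{-ca^2}$ is exactly the paper's step $(k+|p|/\hbar)^2\ge k^2+p^2/\hbar^2$ on the half-line in different clothing (so the ``naive splitting'' you warn against is not actually an obstacle here), and the final constant works out the same way.
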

\begin{proof}
Recalling the definition of $E^{\hbar}_{2,t}$ (see Eq.\! \eqref{E2}), by the elementary inequality $\| \psi (\pm |\,\cdot\,|)\|_{L^2(\RE)}^2 \!\leq\! 2\,\| \psi \|_{L^2(\RE)}^2$ and by unitarity of the Fourier transform, we have
\begin{align*}
\|E^{\hbar}_{2,t}\|_{L^2(\RE)}^2
& \leq  2 \left\|\frac1{\sqrt{2\pi}} \int_{\RE}\! dk\; e^{-i \frac{\hbar t}{2m}\, k^2}\,e^{i \sgn(q p) k x}\;\HE(k) \Big[R_{-}(k) - R_{+}(k)\Big] \psiht\big(\!- \sgn(p) k\big)\right\|_{L^2(\RE)}^2 \\
& = 2 \int_{0}^{\infty}\!\!dk\; \big|R_{-}(k) - R_{+}(k)\big|^{2}\, \big|\psiht\big(\!- \sgn(p) k\big)\big|^2\,.
\end{align*}
Moreover, from Eqs. \eqref{eq: Rpm} and \eqref{basic} it follows
\begin{align*}
\big|R_{-}(k) - R_{+}(k)\big|^2 =\, 4\,\big|R_{+}(k)\big|^4 =\, {4\, \big({m \beta k \over \hbar^2}\big)^{4} \over \big(1 + \big({m \beta |k| \over \hbar^2}\big)^2 \big)^2} \leq 4\;\sup_{\xi \in \RE}\, {\xi^4 \over (1+\xi^2)^2} = 4\,.
\end{align*}
Thus, taking into account Identity \eqref{psih} for $\psiht$, we infer
\begin{align*}
\big\|E^{\hbar}_{2,t}\big\|_{L^2(\RE)}^2 
& \leq {8 \over |\sigmap|} \sqrt{{2 \hbar \over \pi}} \int_{0}^{\infty}\!\!dk\; e^{-{2\hbar  (k + |p|/\hbar)^2 \over |\sigmap|^2}}
\leq {8 \over |\sigmap|} \sqrt{{2 \hbar \over \pi}}\; e^{-{2\, p^2 \over \hbar |\sigmap|^2}} \int_{0}^{\infty}\!\!dk\; e^{-{2\hbar k^2 \over |\sigmap|^2}}
= 8 \; e^{-{2\, p^2 \over \hbar |\sigmap|^2}} \,,
\end{align*}
which yields the thesis.
\end{proof}

In the next lemma we collect all the results of the previous lemmata. 

\begin{lemma}\label{l:main} There exists a constant $C>0$ such that for any $\psih \in L^2(\RE)$ of the form \eqref{CohSt} with $qp\not=0$, for all $t\in\RE$,  and for all $\eta \in (0,1)$, there holds
\begin{equation}\label{timedependent1}
\begin{aligned}
& \big\|e^{-i \frac{t}{\hbar} \H_\beta} \psih - \upsilonh_t \big\|_{L^2(\RE)}  \\ 
& \leq  C \Bigg[ {\eta \over (1- \eta)}\, \Big({\hbar^3 \over m |\beta p|}\Big) + e^{-\,\eta^2 {p^2 \over 2\hbar |\sigmap|^2}} + e^{- {q^2 \over 4\hbar|\sigmaq|^2}} + e^{-{p^2 \over \hbar|\sigmap|^2}} \\
& \hspace{1cm} + \left({\hbar^{5} |\sigmaq|^2 \over m^2\,\beta^2}\right)^{\!\!1/4} e^{{\hbar^5 |\sigmaq|^2 \over m^2 \beta^2}}
\left({1 \over \sqrt{|\sigmaq\,\sigmap|}}\, e^{- {p^2 \over \hbar |\sigmap|^2} - {\hbar^2 \over m |\beta|} \left(|q| + {2\, \sgn(q p)\, |p| \Im[\sigmap \overline{\sigmaq}] \over |\sigmap|^2}\right) 
- {\hbar^5 |\sigmaq|^2 \over m^2 \beta^2} \left(1 - {1 \over |\sigmaq\, \sigmap|^2} \right)}
+ e^{- \frac{q^2}{4\hbar|\sigmaq|^2}}\right)\!\Bigg]\,,
\end{aligned}
\end{equation}
where 
\begin{equation}
\begin{aligned}\label{phiht2}
 \upsilonh_t (x) & := \big( e^{-i \frac{t}{\hbar} \H_0} \psih \big)(x)  + \HE(qp)\, R_{+}(p/\hbar)\;\sgn(x)\,\big(e^{-i \frac{t}{\hbar} \H_0}\psih\big)\big(\!-\sgn(q)|x|\big)\\ 
& \qquad + \HE(-qp)\, {R_{-}(p/\hbar)}\;\sgn(x)\,\big(e^{-i \frac{t}{\hbar} \H_0}\psih\big)\big(\!-\sgn(q)|x|\big)\,.
\end{aligned}
\end{equation}
\end{lemma}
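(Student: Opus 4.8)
The plan is to assemble the statement from the exact representation of Proposition \ref{propSemi} together with the four norm bounds already proved in Lemmata \ref{LemmaEpbis}, \ref{LemmaProj}, \ref{LemmaE1bis} and \ref{LemmaE2bis}, the only glue being the triangle inequality. First I would subtract $\upsilonh_t$ (as defined in \eqref{phiht2}) from the identity \eqref{psital}. The free-evolution summands $\big(e^{-i\frac{t}{\hbar}\H_0}\psih\big)(x)$ cancel identically, and the two Heaviside-weighted blocks pair up with the corresponding terms of $\upsilonh_t$, giving
\[
e^{-i\frac{t}{\hbar}\H_\beta}\psih - \upsilonh_t = \HE(qp)\,G_{+} + \HE(-qp)\,G_{-} + E^{\hbar}_{1,t} + E^{\hbar}_{2,t} + E^{\hbar}_{\beta,t}\,,
\]
where $G_\pm(x) := \sgn(x)\big[F^{\hbar}_{\pm,t}\big(\!-\sgn(q)|x|\big) - R_\pm(p/\hbar)\big(e^{-i\frac{t}{\hbar}\H_0}\psih\big)\big(\!-\sgn(q)|x|\big)\big]$.

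Next I would dispose of the reflection present in $G_\pm$. Since $qp\neq 0$, exactly one of $\HE(qp)$, $\HE(-qp)$ equals one, so no mixing of the two regimes occurs. For any $g\in L^2(\RE)$ a change of variables in the $x<0$ region gives $\big\|\sgn(\cdot)\,g(-\sgn(q)|\cdot|)\big\|_{L^2(\RE)}^2 = 2\int_0^\infty |g(-\sgn(q)x)|^2\,dx \leq 2\,\|g\|_{L^2(\RE)}^2$, so the composition $x\mapsto -\sgn(q)|x|$ together with the factor $\sgn(x)$ inflates the $L^2$ norm by at most $\sqrt2$. Taking $g = F^{\hbar}_{\pm,t} - R_\pm(p/\hbar)\,e^{-i\frac{t}{\hbar}\H_0}\psih$ and invoking Lemma \ref{LemmaEpbis} therefore bounds each $\|G_\pm\|_{L^2(\RE)}$ by $\sqrt2\,C\big[\tfrac{\eta}{1-\eta}\tfrac{\hbar^3}{m|\beta p|} + e^{-\eta^2 p^2/(2\hbar|\sigmap|^2)}\big]$, which is precisely the first line of the right-hand side of \eqref{timedependent1}.

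I would then estimate the three error terms directly from the corresponding lemmata: $\|E^{\hbar}_{1,t}\|_{L^2(\RE)}$ by Lemma \ref{LemmaE1bis}, $\|E^{\hbar}_{2,t}\|_{L^2(\RE)}$ by Lemma \ref{LemmaE2bis}, and, since $E^{\hbar}_{\beta,t}=e^{-it\lambda_\beta/\hbar}\,P_\beta\psih$ differs from $P_\beta\psih$ only by a unit-modulus phase, $\|E^{\hbar}_{\beta,t}\|_{L^2(\RE)} = \|P_\beta\psih\|_{L^2(\RE)}$ by Lemma \ref{LemmaProj}. Collecting these three contributions supplies the remaining summands of \eqref{timedependent1}. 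Summing all five pieces with the triangle inequality and absorbing the factor $\sqrt2$ and the various individual constants into a single $C>0$ yields the claimed bound.

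Since all the substantive analytic work has already been carried out in the preceding lemmata, there is no genuine obstacle at this stage: the statement is an assembly step. The only point demanding a little care is the bookkeeping of the reflection $x\mapsto-\sgn(q)|x|$ and the accompanying sign $\sgn(x)$, where one must verify both that passing to the $L^2$ norm costs merely the harmless factor $\sqrt2$ and that the two mutually exclusive cases selected by $\HE(\pm qp)$ do not interact. Once this is checked, the triangle inequality delivers \eqref{timedependent1} immediately.
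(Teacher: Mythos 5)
Your proposal is correct and is essentially the paper's own proof: the authors simply state that the lemma "follows immediately" from Proposition \ref{propSemi} combined with Lemmata \ref{LemmaEpbis}, \ref{LemmaProj}, \ref{LemmaE1bis} and \ref{LemmaE2bis} via the triangle inequality, which is exactly your assembly. Your explicit check that the reflection $x\mapsto -\sgn(q)|x|$ together with the factor $\sgn(x)$ costs only a factor $\sqrt{2}$ in the $L^2$ norm is a detail the paper leaves implicit, and it is verified correctly.
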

\begin{proof} The claim \eqref{timedependent1} follows immediately from Eq.\,\eqref{psital}, together with  the expansions of the terms $F^{\hbar}_{\pm,t}$ in Lemma \ref{LemmaEpbis}, and  the bounds on the remainders $E^{\hbar}_{{\beta},t}$, $E^{\hbar}_{1,t}$, $E^{\hbar}_{2,t}$ in Lemmata \ref{LemmaProj}, \ref{LemmaE1bis}, \ref{LemmaE2bis}.
\end{proof}

\begin{lemma}\label{lemma: sgnpsi} For any $\psih \in L^2(\RE)$ of the form \eqref{CohSt} with $qp\not=0$,  there holds
\begin{gather}
\left\|\sgn(\cdot)\,\psih\big(\sigmaq,\sigmap,q,p; \sgn (q)\,|\cdot|\big) - \sgn(q) \Big(\psih(\sigmaq,\sigmap,q,p;\,\cdot\,) - \psih(\sigmaq,\sigmap,-q,-p;\,\cdot\,)\!\Big) \right\|_{L^2(\RE)}\! \leq e^{- \frac{q^2}{4\hbar|\sigmaq|^2}}\,,\label{psi0mod1} \\
\left\|\psih\big(\sigmaq,\sigmap,q,p;- \sgn (q)\,|\cdot|\big) \right\|_{L^2(\RE)} \leq  e^{- \frac{q^2}{4\hbar|\sigmaq|^2}}\,. \label{psi0mod2}
\end{gather}
\end{lemma}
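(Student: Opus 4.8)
The plan is to reduce both estimates to a single elementary half-line Gaussian integral, using two structural facts about \eqref{CohSt} (throughout I suppress the fixed covariances $\sigmaq,\sigmap$ and write $\psih(q,p;\cdot)$). First I would record the modulus
\[
\big|\psih(q,p;x)\big| = \frac{1}{(2\pi\hbar)^{1/4}\sqrt{|\sigmaq|}}\; e^{-\frac{(x-q)^2}{4\hbar|\sigmaq|^2}}\,,
\]
which follows from $\Re[\sigmap/\sigmaq] = \Re[\sigmap\,\overline{\sigmaq}]/|\sigmaq|^2 = 1/|\sigmaq|^2$ by the normalization \eqref{sxsp1}; and second the reflection symmetry
\[
\psih(q,p;x) = \psih(-q,-p;-x)\,,
\]
which is immediate from \eqref{CohSt} since both $(x-q)^2$ and $\tfrac{p}{\hbar}(x-q)$ are invariant under $(q,p,x)\mapsto(-q,-p,-x)$.

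For \eqref{psi0mod2} I would compute the norm directly. By the modulus formula one has $\big|\psih(q,p;-\sgn(q)|x|)\big| = (2\pi\hbar)^{-1/4}|\sigmaq|^{-1/2}\,e^{-(|x|+|q|)^2/(4\hbar|\sigmaq|^2)}$ (the square $(\sgn(q)|x|+q)^2=(|x|+|q|)^2$ holding for either sign of $q$), so that
\[
\big\|\psih(q,p;-\sgn(q)|\cdot|)\big\|_{L^2(\RE)}^2 = \frac{2}{\sqrt{2\pi\hbar}\,|\sigmaq|}\int_0^\infty e^{-\frac{(x+|q|)^2}{2\hbar|\sigmaq|^2}}\,dx\,.
\]
Using $(x+|q|)^2 \geq x^2 + q^2$ for $x\geq 0$ I would factor out $e^{-q^2/(2\hbar|\sigmaq|^2)}$ and evaluate the residual half-Gaussian to $\tfrac12\sqrt{2\pi\hbar}\,|\sigmaq|$; the prefactors cancel, leaving $e^{-q^2/(2\hbar|\sigmaq|^2)}$, whose square root is exactly the claimed bound.

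For \eqref{psi0mod1} the plan is to first observe that the function inside the norm, say $\Phi(x)$, is invariant under $(q,p)\mapsto(-q,-p)$: this is a one-line check from the reflection symmetry, the flipped prefactor $\sgn(-q)=-\sgn(q)$ compensating the swap of $\psih(q,p;\cdot)$ and $\psih(-q,-p;\cdot)$. Hence I may assume $q>0$, so $\sgn(q)=1$, and evaluate $\Phi$ piecewise. For $x>0$ the term $\sgn(x)\psih(q,p;|x|)=\psih(q,p;x)$ cancels the $\psih(q,p;x)$ in the second group, leaving $\Phi(x)=\psih(-q,-p;x)$; for $x<0$ the reflection symmetry rewrites $-\psih(q,p;-x)=-\psih(-q,-p;x)$, which cancels the $\psih(-q,-p;x)$ term and leaves $\Phi(x)=-\psih(q,p;x)$. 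After substituting $x\mapsto -x$ on the negative half-line, the modulus formula turns both surviving contributions into the same integrand, and the norm squared collapses to exactly the half-Gaussian $\frac{2}{\sqrt{2\pi\hbar}\,|\sigmaq|}\int_0^\infty e^{-(x+q)^2/(2\hbar|\sigmaq|^2)}\,dx$ already estimated for \eqref{psi0mod2}, yielding the identical bound.

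The computation is entirely elementary; the only genuinely substantive step is the sign bookkeeping that produces the \emph{exact} cancellation of the ``on-support'' pieces in \eqref{psi0mod1}, leaving only the exponentially small wrong-sign tails. This is precisely where the reflection symmetry is indispensable, and it mirrors the corresponding step in \cite{AMPA2020}, from which the argument transcribes essentially verbatim.
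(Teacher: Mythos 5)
Your proposal is correct and follows essentially the same route as the paper: both arguments isolate the exact cancellation of the ``on-support'' pieces (leaving only $\psih$ evaluated on the wrong side of the origin), and then bound the resulting half-line Gaussian tail via $(x+|q|)^2\ge x^2+q^2$, with the same explicit evaluation of the residual integral. The only cosmetic difference is that you reduce to $q>0$ by symmetry and argue piecewise in $x$, whereas the paper performs the identical case split algebraically with the Heaviside identities $1=\HE(qx)+\HE(-qx)$ and $\sgn(qx)=\HE(qx)-\HE(-qx)$.
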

\begin{proof} 
Taking into account that $\psih(\sigmaq,\sigmap,-q,-p;x) = \psih(\sigmaq,\sigmap,q,p;-x) \equiv \psih(-x)$, using the elementary identities $\sgn(q)\,|x| = \sgn(q x)\,x$, $1 = \HE(q x) + \HE(- q x)$ and $\sgn(q x) = \HE(q x) - \HE(-q x)$, by direct computations we get
\begin{align*}
& \left\|\sgn(\cdot)\,\psih\big(\sigmaq,\sigmap,q,p; \sgn (q)\,|\cdot|\big) - \sgn(q) \Big(\psih(\sigmaq,\sigmap,q,p;\,\cdot\,) - \psih(\sigmaq,\sigmap,-q,-p;\,\cdot\,)\!\Big) \right\|_{L^2(\RE)}^2 \\
& = \int_{\RE}\! dx \left|\psih\big(\sigmaq,\sigmap,q,p; \sgn (q x)\,x\big) - \sgn(q x) \Big(\psih(\sigmaq,\sigmap,q,p;x) - \psih(\sigmaq,\sigmap,q,p;-x)\!\Big) \right|^2 \\
& = \int_{\RE}\! dx \left|\HE(q\, x)\, \psih(\sigmaq,\sigmap,q,p;-x) + \HE(- q\, x)\, \psih(\sigmaq,\sigmap,q,p;x) \right|^2\\
& = 2 \int_{\RE}\! dx\; \HE(q\, x)\, \big|\psih(\sigmaq,\sigmap,q,p;-x)\big|^2 
	= 2 \int_{0}^{+\infty}\!\! dy\; \Big|\psih\big(\sigmaq,\sigmap,q,p;-\sgn(q)\,y\big)\Big|^2 \,.
\end{align*}
From here, noting the identity $\Re(\sigmap/\sigmaq) = |\sigmaq|^{-2}$ (see Eq. \eqref{sxsp1}) and using the inequality $e^{-(a+b)^2} \leq e^{-a^2-b^2}$ for $a,b\geq 0 $, we infer
\begin{align*}
& \left\|\sgn(\cdot)\,\psih\big(\sigmaq,\sigmap,q,p; \sgn (q)\,|\cdot|\big) - \sgn(q) \Big(\psih(\sigmaq,\sigmap,q,p;\,\cdot\,) - \psih(\sigmaq,\sigmap,-q,-p;\,\cdot\,)\!\Big) \right\|_{L^2(\RE)}^2 \\
& = \frac{2}{\sqrt{2\pi \hbar}\; |\sigmaq|} \int_{0}^{+\infty}\!\!\! dy\; e^{- \frac{(y + |q|)^2}{2\hbar|\sigmaq|^2}}
	\leq \frac{2\, e^{- \frac{q^2}{2\hbar|\sigmaq|^2}}}{\sqrt{2\pi \hbar}\; |\sigmaq|} \int_{0}^{+\infty}\!\!\! dy\;  e^{- \frac{y^2}{2\hbar|\sigmaq|^2}} = e^{- \frac{q^2}{2\hbar|\sigmaq|^2}}\,,
\end{align*}
which proves Eq. \eqref{psi0mod1}. Eq. \eqref{psi0mod2} can be derived by similar arguments (cf. \cite[Lem. 3.8]{AMPA2020}).
\end{proof}

\subsection{Proof of Theorem \ref{t:1}\label{ss:3.1}}

At first, in the following proposition we give an explicit formula for the semiclassical limit  evolution of a coherent state. 
\begin{proposition} \label{p:blacksun}
Let ${B}(p) := - \,(2  \beta/\hbar^3)\,p^2$. Then, under the assumptions of Theorem \ref{t:1} there holds
\begin{align*}
& e^{\frac{i}{\hbar}\Szero_{t}} \big(e^{itL_{{B}}}\phi^{\hbar}_{\sigma_{t},x}\big)(\xi)
= \big(e^{-i\frac{t}{\hbar}\H_{0}}\,\psi^{\hbar}_{\sigmazero,\xi}\big)(x) \\
&\qquad -\, \sgn(q)\,\HE(- qp)\,\HE \!\left(t+\frac{m q}{p}\right)R_{-}(p/\hbar)\, \Big( \big(e^{-i\frac{t}{\hbar}\H_{0}}\,\psi^{\hbar}_{\sigmazero,\xi}\big)(x) - \big(e^{-i\frac{t}{\hbar}\H_{0}}\,\psi^{\hbar}_{\sigmazero,\xi}\big)(-x) \Big) \nonumber \\
&\qquad -\, \sgn(q)\,\HE(qp)\,\HE \!\left(-t-\frac{mq}{p}\right) R_{+}(p/\hbar)\, \Big( \big(e^{-i\frac{t}{\hbar}\H_{0}}\,\psi^{\hbar}_{\sigmazero,\xi}\big)(x) - \big(e^{-i\frac{t}{\hbar}\H_{0}}\,\psi^{\hbar}_{\sigmazero,\xi}\big)(-x) \Big) \,. \nonumber 
\end{align*}
\end{proposition}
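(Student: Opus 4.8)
The plan is to start from the explicit description of the classical group $e^{itL_{B}}$ given in Eq.\,\eqref{Lbeta-group} (equivalently, Proposition \ref{exp-cl} after reversing the sign of $t$), feed it the phase-space function $\phi^{\hbar}_{\sigma_{t},x}$, and eliminate the free pieces through the exact quantum--classical correspondence \eqref{free2}. Applying \eqref{Lbeta-group} with $f=\phi^{\hbar}_{\sigma_{t},x}$ and multiplying by $e^{\frac{i}{\hbar}\Szero_{t}}$, the leading term $e^{\frac{i}{\hbar}\Szero_{t}}\big(e^{itL_{0}}\phi^{\hbar}_{\sigma_{t},x}\big)(\xi)$ is, by \eqref{free2}, exactly $\big(e^{-i\frac{t}{\hbar}\H_{0}}\psi^{\hbar}_{\sigmazero,\xi}\big)(x)$, which reproduces the first term on the right-hand side of the claimed identity.

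For the remaining contribution I would first rewrite the phase-space odd part. From \eqref{initial}--\eqref{phidef} one checks the elementary symmetry $\phi^{\hbar}_{\sigma,x}(-\xi)=\psi^{\hbar}_{\sigma,-\xi}(x)=\psi^{\hbar}_{\sigma,\xi}(-x)=\phi^{\hbar}_{\sigma,-x}(\xi)$, so that $(\phi^{\hbar}_{\sigma_{t},x})_{odd}(\xi)=\phi^{\hbar}_{\sigma_{t},x}(\xi)-\phi^{\hbar}_{\sigma_{t},-x}(\xi)$. Since $e^{itL_{0}}$ and $\Szero_{t}$ act only through the phase-space variable $\xi$ and are insensitive to the spatial label $x$, applying \eqref{free2} once at $x$ and once at $-x$ yields $e^{\frac{i}{\hbar}\Szero_{t}}\big(e^{itL_{0}}(\phi^{\hbar}_{\sigma_{t},x})_{odd}\big)(\xi)=\big(e^{-i\frac{t}{\hbar}\H_{0}}\psi^{\hbar}_{\sigmazero,\xi}\big)(x)-\big(e^{-i\frac{t}{\hbar}\H_{0}}\psi^{\hbar}_{\sigmazero,\xi}\big)(-x)$. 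Thus the whole correction equals this position-space difference multiplied by the single coefficient $\HE(-tqp)\,\HE(|pt|/m-|q|)\big/\big(1-\sgn(t)\tfrac{2i|p|}{mB(p)}\big)$.

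It then remains to match this coefficient against the two reflection/transmission terms in the statement, and this sign bookkeeping is the only delicate point. Inserting $B(p)=-(2\beta/\hbar^{3})p^{2}$ gives $\tfrac{2i|p|}{mB(p)}=-\tfrac{i\hbar^{3}}{m\beta|p|}$, whence $1-\sgn(t)\tfrac{2i|p|}{mB(p)}=\big(|p|+\sgn(t)\,i\hbar^{3}/(m\beta)\big)/|p|$, while \eqref{eq: Rpm} yields $R_{\pm}(p/\hbar)=\pm p/\big(|p|\mp i\hbar^{3}/(m\beta)\big)$. I would split according to the sign of $qp$. When $qp<0$ the collision time $-mq/p$ is positive and $\HE(-tqp)\,\HE(|pt|/m-|q|)$ is nonvanishing precisely for $t>-mq/p$ (which forces $\sgn(t)=1$), exactly the support of $\HE(-qp)\,\HE(t+mq/p)$; using $\sgn(q)\,p=|p|\sgn(qp)=-|p|$ one checks $\sgn(q)R_{-}(p/\hbar)=|p|/(|p|+i\hbar^{3}/(m\beta))$, which coincides with $\big(1-\sgn(t)\tfrac{2i|p|}{mB(p)}\big)^{-1}$ at $\sgn(t)=1$. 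Symmetrically, when $qp>0$ the factor is supported on $t<-mq/p$ (forcing $\sgn(t)=-1$), matching $\HE(qp)\,\HE(-t-mq/p)$, and $\sgn(q)R_{+}(p/\hbar)=|p|/(|p|-i\hbar^{3}/(m\beta))$ reproduces the coefficient at $\sgn(t)=-1$. Since these two regimes are mutually exclusive and jointly exhaust the support of $\HE(-tqp)\,\HE(|pt|/m-|q|)$, the coefficient equals $\sgn(q)\HE(-qp)\HE(t+mq/p)R_{-}(p/\hbar)+\sgn(q)\HE(qp)\HE(-t-mq/p)R_{+}(p/\hbar)$, and the stated formula follows.
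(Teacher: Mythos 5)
Your proposal is correct and follows essentially the same route as the paper's proof: the explicit formula for $e^{itL_{B}}$, the exact free correspondence \eqref{free2}, the coherent-state parity identity $\psi^{\hbar}_{\sigma,-\xi}(x)=\psi^{\hbar}_{\sigma,\xi}(-x)$, and the case analysis on $\sgn(qp)$ (hence on $\sgn(t)$) to rewrite the Heaviside prefactor as $\sgn(q)\,\HE(\mp qp)\,R_{\pm}(p/\hbar)$. The only cosmetic difference is that you apply the parity symmetry to $\phi^{\hbar}_{\sigma_{t},x}$ before the free evolution and then invoke \eqref{free2} twice, whereas the paper evolves first and then identifies $\big(e^{-i\frac{t}{\hbar}\H_{0}}\psi^{\hbar}_{\sigmazero,-\xi}\big)(x)$ with $\big(e^{-i\frac{t}{\hbar}\H_{0}}\psi^{\hbar}_{\sigmazero,\xi}\big)(-x)$.
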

\begin{proof}
Recall that $\xi = (q,p)$. We start by noticing that, by Eq.\,\eqref{eq: timeevcl}, 
\begin{equation*}
\big(e^{it L_{{B}}} \phi^{\hbar}_{\sigma_{t},x}\big)(\xi)=
\big(e^{i t L_{0}} \phi^{\hbar}_{\sigma_{t},x}\big)(q,p) \,-\, \frac{\HE(- t\,q\,p)\; \HE\big({|p\,t| \over m} - |q|\big)}{1 - \sgn(t)\,{2i\,|p| \over m\,{B}(p)}}\, \Big(\big(e^{itL_{0}} \phi_{\sigma_{t},x}\big)(q,p) \,-\, \big(e^{itL_{0}}\phi_{\sigma_{t},x}\big)(-q,-p)\Big)\,,
\end{equation*}
hence, on account of Identity \eqref{free2}, we infer 
\begin{align*}
e^{\frac{i}{\hbar}\Szero_{t}} \big(e^{itL_{{B}}}\phi^{\hbar}_{\sigma_{t},x}\big)(\xi)
& = \big(e^{-i\frac{t}{\hbar}\H_{0}}\,\psi^{\hbar}_{\sigmazero,\xi}\big)(x) \\
& \qquad -\, \frac{\HE(- t\,q\,p)\; \HE\big({|p\,t| \over m} - |q|\big)}{1 - \sgn(t)\,{2i\,|p| \over m\,{B}(p)}}\, \Big( \big(e^{-i\frac{t}{\hbar}\H_{0}}\,\psi^{\hbar}_{\sigmazero,\xi}\big)(x) - \big(e^{-i\frac{t}{\hbar}\H_{0}}\,\psi^{\hbar}_{\sigmazero,-\xi}\big)(x) \Big)\,. 
\end{align*}

We note that 
\[\begin{aligned}
\big(e^{-i\frac{t}{\hbar}\H_{0}}\,\psi^{\hbar}_{\sigmazero,-\xi}\big)(x) 
=\, &\,  e^{\frac{i}{\hbar}\Szero_{t}} \psi^{\hbar}\Big(\sigmazero+\frac{it}{2m\sigmazero},\sigmazero^{-1},- q-\frac{pt}{m},-p;x\Big)  \\ 
=\, & \,e^{\frac{i}{\hbar}\Szero_{t}}\psi^{\hbar} \Big(\sigmazero+\frac{it}{2m\sigmazero},\sigmazero^{-1},q+\frac{pt}{m},p;-x\Big)  = \big(e^{-i\frac{t}{\hbar}\H_{0}}\,\psi^{\hbar}_{\sigmazero,\xi}\big)(-x) \,,
\end{aligned}
\]
whence, 
\begin{align*}
e^{\frac{i}{\hbar}\Szero_{t}} \big(e^{itL_{{B}}}\phi^{\hbar}_{\sigma_{t},x}\big)(\xi)
& = \big(e^{-i\frac{t}{\hbar}\H_{0}}\,\psi^{\hbar}_{\sigmazero,\xi}\big)(x) \\
& \qquad -\, \frac{\HE(- t\,q\,p)\; \HE\big({|p\,t| \over m} - |q|\big)}{1 - \sgn(t)\,{2i\,|p| \over m\,{B}(p)}}\, \Big( \big(e^{-i\frac{t}{\hbar}\H_{0}}\,\psi^{\hbar}_{\sigmazero,\xi}\big)(x) - \big(e^{-i\frac{t}{\hbar}\H_{0}}\,\psi^{\hbar}_{\sigmazero,\xi}\big)(-x) \Big)\,. 
\end{align*}

To conclude the proof we observe that 
\begin{align*}
\frac{\HE (-tqp)\,\HE \!\left(\frac{|pt|}{m}-|q|\right)}{1 - \sgn(t)\,{2i\,|p| \over m\,{B}(p)}} 
=\, &\, \frac{\HE (t)\,\HE (-qp)\,\HE \!\left(\frac{|p|t}{m}-|q|\right)}{1 - {2i\,|p| \over m\,{B}(p)}} + \frac{\HE(-t)\,\HE(qp)\,\HE \!\left(-\frac{|p|t}{m}-|q|\right)}{1 +\,{2i\,|p| \over m\,{B}(p)}} \nonumber \\
=\, &\, \frac{\HE (-qp)\,\HE \!\left(t-\frac{m |q|}{|p|}\right)}{1-{2i\,|p| \over m\,{B}(p)}} + \frac{\HE (qp)\,\HE \!\left(-t-\frac{m|q|}{|p|}\right)}{1+ {2i\,|p| \over m\,{B}(p)}}
 \nonumber \\ 
=\, &\, \frac{\HE (-qp)\,\HE \!\left(t+\frac{m q}{p}\right)}{1-{2i\,|p| \over m\,{B}(p)}} + \frac{\HE (qp)\,\HE \!\left(-t-\frac{mq}{p}\right)}{1+{2i\,|p| \over m\,{B}(p)}}\;. 
\end{align*}
Notably, setting ${B}(p) := -\,(2  \beta/\hbar^3)\,p^2$ and recalling the definition \eqref{eq: Rpm} of $R_{\pm}(k)$ we obtain
$$
{\HE(\pm qp) \over 1 \pm {2i|p| \over m \,{B}(p)}}
= \HE(\pm qp)\,\Big(\!\pm\,\sgn(p)\,R_{\pm}(p/\hbar)\Big)
= \sgn(q)\,\HE(\pm qp)\,R_{\pm}(p/\hbar)\,.
$$
Summing up, the arguments described above imply the thesis.
\end{proof}

We are now ready to prove Theorem \ref{t:1}. 
\begin{proof}[Proof of Theorem \ref{t:1}] 
First, we use  Lemma \ref{l:main} to approximate the state $e^{-i\frac{t}{\hbar}\H_{\beta}}\,\psi^{\hbar}_{\sigmazero,\xi}$ with $\upsilonh_{\sigmazero,\xi,t}$ defined according to Eq. \eqref{phiht2} by
\begin{equation*}
\begin{aligned}
\upsilonh_{\sigmazero,\xi,t} (x) & := \big( e^{-i \frac{t}{\hbar} \H_0} \psih_{\sigmazero,\xi} \big)(x)  + \HE(qp)\, R_{+}(p/\hbar)\;\sgn(x)\,\big(e^{-i \frac{t}{\hbar} \H_0}\psih_{\sigmazero,\xi}\big)\big(\!-\sgn(q)\,|x|\big)\\ 
& \qquad + \HE(-qp)\, {R_{-}(p/\hbar)}\;\sgn(x)\,\big(e^{-i \frac{t}{\hbar} \H_0}\psih_{\sigmazero,\xi}\big)\big(\!-\sgn(q)\,|x|\big)\,.
\end{aligned}
\end{equation*}
Next, we compare $\upsilonh_{\sigmazero,\xi,t}$ with the expression for $e^{\frac{i}{\hbar}\Szero_{t}}\,e^{itL_{{B}}} \phi^{\hbar}_{\sigma_{t},x}(\xi)$ from Proposition \ref{p:blacksun}. Retracing the arguments described in \cite[Proof of Thm. 1.1)]{AMPA2020}, we infer
\vspace{-0.1cm}
\[\begin{aligned}
\upsilonh_{\sigmazero,\xi,t}(x) & = \big( e^{-i \frac{t}{\hbar} \H_0} \psih_{\sigmazero,\xi} \big)(x) 
+ \HE\Big(t + {m q \over p}\Big)\, \HE(qp)\, R_{+}(p/\hbar)\;\sgn(x)\,\big(e^{-i \frac{t}{\hbar} \H_0}\psih_{\sigmazero,\xi}\big)\big(\!-\sgn(q_t)\,|x|\big)\\ 
& \qquad + \HE\Big(t + {m q \over p}\Big)\, \HE(-qp)\, {R_{-}(p/\hbar)}\;\sgn(x)\,\big(e^{-i \frac{t}{\hbar} \H_0}\psih_{\sigmazero,\xi}\big)\big(\sgn(q_t)\,|x|\big) \\
& \qquad + \HE\Big(\!-t - {m q \over p}\Big)\, \HE(qp)\, R_{+}(p/\hbar)\;\sgn(x)\,\big(e^{-i \frac{t}{\hbar} \H_0}\psih_{\sigmazero,\xi}\big)\big(\sgn(q_t)\,|x|\big)\\ 
& \qquad + \HE\Big(\!-t - {m q \over p}\Big)\, \HE(-qp)\, {R_{-}(p/\hbar)}\;\sgn(x)\,\big(e^{-i \frac{t}{\hbar} \H_0}\psih_{\sigmazero,\xi}\big)\big(\!-\sgn(q_t)\,|x|\big)\,.
\end{aligned}\]

Recall that 
\[
\big(e^{-i\frac{t}{\hbar}\H_{0}}\,\psi^{\hbar}_{\sigmazero,\xi}\big)(x) =   e^{\frac{i}{\hbar}\Szero_{t}}\, \psi^{\hbar}\big(\sigma_t,\sigmazero^{-1},q_t,p;x\big)\,,
\]
with $\sigma_t = \sigmazero+\frac{it}{2m\sigmazero} $ and $q_t =  q + \frac{pt}{m}$. Hence, by Lemma \ref{lemma: sgnpsi}, we deduce:
\begin{gather*}
\left\|\sgn(\cdot)\,\big(e^{-i \frac{t}{\hbar} \H_0}\psih_{\sigmazero,\xi}\big)\big(\!\sgn(q_t)|\cdot|\big)- \sgn(q_t)\Big(
\big(e^{-i\frac{t}{\hbar}\H_{0}}\,\psi^{\hbar}_{\sigmazero,\xi}\big)(x) - \big(e^{-i\frac{t}{\hbar}\H_{0}}\,\psi^{\hbar}_{\sigmazero,\xi}\big)(-x)\Big)\right\|_{L^2(\RE)} \leq e^{- \frac{q_t^2}{4\hbar|\sigma_t|^2}} \,; \\
\left\|\big(e^{-i \frac{t}{\hbar} \H_0}\psih_{\sigmazero,\xi}\big)\big(\!-\sgn(q_t)|\cdot|\big)\right\|_{L^2(\RE)} \leq  e^{- \frac{q_t^2}{4\hbar|\sigma_t|^2}}\,. 
\end{gather*}

Noting that
\begin{align*}
\HE\Big(\!\pm t \pm {m q \over p}\Big)\, \HE(\mp qp)\, \sgn(q_t) = \pm\, \HE\Big(\!\pm t \pm {m q \over p}\Big)\, \HE(qp)\, \sgn(p) = -\,\HE\Big(\!\pm t \pm {m q \over p}\Big)\, \HE(\mp qp)\, \sgn(q)\,, \\
\end{align*}
the previous bounds, together with Proposition \ref{p:blacksun}, imply 
\[
\left\| \upsilonh_{\sigmazero,\xi,t} (x)  - e^{\frac{i}{\hbar}\Szero_{t}} \big(e^{itL_{{B}}}\phi^{\hbar}_{\sigma_{t},(\cdot)}\big)(\xi)\right\|_{L^2(\RE)}  \leq 4\, e^{- \frac{q_t^2}{4\hbar|\sigma_t|^2}}\,. 
\]
Besides the exponential  $e^{- \frac{q_t^2}{4\hbar|\sigma_t|^2}}$, the remaining  terms on the r.h.s. of Eq. \eqref{t1} are a consequence of the fact that we approximated $e^{-i\frac{t}{\hbar}\H_{\beta}}\,\psi^{\hbar}_{\sigmazero,\xi}$ with $\upsilonh_{\sigmazero,\xi,t}$ and of Lemma \ref{l:main} (here employed with $\sigmaq = \sigma_0$, $\sigmap = 1/\sigma_0$, and using as well the basic inequality $e^{- {\hbar^2 |q| \over m |\beta|}} \leq 1$).
\end{proof}

\subsection{Proof of Corollary \ref{c:1}.}\label{proof-coroll}
\begin{proof}
Fix $\eta = \underline{h}^{1/2-\lambda}$ in Theorem \ref{t:1}. Then, for $0 < \lambda < 1/2$ and fixed $(q,p) \in \RE^2$ with $q p \neq 0$, the time-independent part on the r.h.s. of inequality \eqref{t1} is bounded by\footnote{Especially, let us mention that we used the upper bound descending from the following chain of inequalities:
\[
\left({\hbar^{5} \sigma_0^2 \over m^2 \beta^2}\right)^{\!\!1/4} e^{{\hbar^5 \sigma_0^2 \over m^2 \beta^2}} \Bigg(e^{- {\sigma_0^2 p^2 \over \hbar}} \!+ e^{- \frac{q^2}{4\hbar \sigma_0^2}}\Bigg)
= \left({\hbar \over (m |\beta p|)^{1/3}}\right)^{\!\!3/2}\! \left({\sigma_0^2 p^2 \over \hbar}\right)^{\!\!1/4} e^{\big({\hbar \over (m |\beta p|)^{1/3}}\big)^{6} {\sigma_0^2 p^2 \over \hbar} } \Bigg(e^{- {\sigma_0^2 p^2 \over \hbar}} + e^{- \frac{q^2}{4\hbar \sigma_0^2}}\Bigg) \\
\]
\[
\leq \underline{h}^{3/2}\! \left(\!{\sigma_0^2 p^2 \over \hbar}\right)^{\!\!1/4} \Bigg[ e^{- (1 - \underline{h}^6)\,{\sigma_0^2 p^2 \over \hbar}} + e^{- \frac{q^2}{4 \sigma_0^4 p^2} \big(1 - \frac{4 \sigma_0^4 p^2}{q^2}\,\underline{h}^6 \big) {\sigma_0^2 p^2 \over \hbar}} \Bigg]
\leq {16 \over e}\,\max\Big\{1, \frac{4\sigma_0^4 p^2}{q^2}\Big\}\;\underline{h}^{3/2}\! \left(e^{- \frac{1}{4 \underline{h}}} + e^{- \frac{1}{16 \underline{h}}} \right),
\]
where the last inequality follows noting that $\min \big\{1 - \underline{h}^6, 1 -\frac{4 \sigma_0^4 p^2}{q^2}\,\underline{h}^6 \big\} \geq 1/2$ for $\underline{h}$ small enough, and that ${\sup_{\{\xi > 0\}}}\,\big( \xi^{-1/4}\,e^{-\mu/\xi}\big) \leq \xi^{-1/4}\,e^{-\mu/\xi}\big|_{\xi = \mu} = (4/e)\, \mu^{-1}$ for any $\mu > 0$.}
\begin{equation*}
\begin{aligned}
& C \Bigg[ {\underline{h}^{7/2-\lambda} \over 1- \underline{h}^{1/2-\lambda}} + e^{-\,{1 \over 2\underline{h}^{2\lambda}}} + e^{- {1 \over 4 \underline{h}}} + e^{-{1 \over \underline{h}}} + \underline{h}^{3/2}\! \left(e^{- \frac{1}{4 \underline{h}}} + e^{- \frac{1}{16 \underline{h}}} \right)\!\Bigg]
\leq C_{*}\, \underline{h}^{7/2-\lambda} \,,
\end{aligned}
\end{equation*}
for some $C_*>0$ and for all $\underline h < h_*$ with $h_*$ small enough.

On the other hand, to take into account the time-dependent term on the r.h.s. of inequality \eqref{t1} it is enough to show that if $|t-t_{coll}(\xi)| \ge c_0\, |t_{coll}(\xi)|\,\sqrt{(7/2-\lambda)\,\underline h\,|\ln \underline h|}$ (for some $c_0 > 0$), then $ q_t^2/(4\hbar|\sigma_t|^2) \ge (7/2-\lambda)\,|\ln \underline h|$. Setting $y = 1 - t/t_{coll}(\xi)$, $a = {4\hbar\sigmazero^2 \over q^2}\,(7/2-\lambda)\,|\ln \underline h|$ and $b = {\hbar \over \sigma_0^2 p^2}\,(7/2-\lambda)\,|\ln \underline h|$, the latter relation can be rephrased as $y^2/(a + b\, (1-y)^2) \ge 1$; a simple calculation shows that this inequality is fulfilled if
\begin{equation}\label{y}
a,b \in (0,1) \qquad \mbox{and} \qquad |y| \ge {b + \sqrt{a + b - a b} \over 1 - b}\,.
\end{equation}
Taking into account that $a + b - ab \le a + b$, $a \leq 4 (7/2-\lambda)\, \underline h\,|\ln \underline h|$ and $b \leq (7/2-\lambda)\, \underline h\,|\ln \underline h|$, it is easy to convince oneself that when $\underline h$ is small enough Eq. \eqref{y} holds true as soon as $|y| > c_0\,\sqrt{(7/2-\lambda)\,\underline h\,|\ln \underline h|}$ for some $c_0 > \sqrt{5}$, which proves Eq. \eqref{ttcoll}.
\end{proof}

\section{Convergence of the wave and scattering operators}

\begin{lemma}\label{lemma: Ompm}
For any $\psih \in L^2(\RE)$ of the form \eqref{CohSt} with $qp\not=0$, there holds 
\begin{align}
\label{Omega_beta}
\big(\Omega_\beta^\pm\, \psih\big) (x) =\, &  \,\psih (x) + \sgn(x) \left[\HE(q p)\, F^{\hbar}_{\pm,0}\big(\!\mp\sgn(q)\,|x|\big)\! + \HE(-q p)\, F^{\hbar}_{\pm,0}\big(\!\pm\sgn(q)\,|x|\big)\right] +  E^{\hbar}_{3,\pm}(x)\,, 
\end{align}
where $F^{\hbar}_{\pm,0} \equiv F^{\hbar}_{\pm,t = 0}$, see Eq. \eqref{F+t}, and
\begin{gather*}
E^{\hbar}_{3,\pm}(x) := \pm\, {\sgn(q\,x) \over \sqrt{2\pi}} \int_{\RE}\! dk \left( e^{i \sgn(q p)\,k\,|x|} - \,e^{- i \sgn(q p)\,k\,|x|} \right) \HE(k)\,R_{\pm}(k)\, \psiht\big(-\sgn(p)\,k\big)\,.
\end{gather*}
\end{lemma}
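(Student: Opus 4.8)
The starting point is the stationary representation \eqref{st-rep}, $\Omega_\beta^\pm = \FF_\pm^*\,\FF$. Since $\FF\psih = \psiht$ and, by definition \eqref{eq: FFpm}, the adjoint acts as $(\FF_\pm^*\phi)(x) = \int_\RE dk\;\vfi_k^\pm(x)\,\phi(k)$, I would first record the exact kernel identity
\[
(\Omega_\beta^\pm\,\psih)(x) = \int_\RE dk\;\vfi_k^\pm(x)\,\psiht(k)\,.
\]
Inserting the explicit form \eqref{eq: eigenfdeltap} of the generalized eigenfunctions, the plane-wave part $e^{ikx}/\sqrt{2\pi}$ reconstructs $\FF^*\FF\psih = \psih(x)$, and the reflected part leaves the single scattering integral
\[
J_\pm(x) := \frac{\sgn(x)}{\sqrt{2\pi}}\int_\RE dk\; R_\pm(k)\,e^{\mp i|k||x|}\,\psiht(k)\,.
\]
Everything then reduces to rewriting $J_\pm$ as the bracketed $F^{\hbar}_{\pm,0}$-terms plus $E^{\hbar}_{3,\pm}$. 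I would stress here that, in contrast with Proposition \ref{propSemi}, no bound-state contribution arises (the operator $\FF_\pm^*\FF$ lives entirely on the absolutely continuous subspace) and there is no propagation phase $e^{-i\frac{\hbar t}{2m}k^2}$; this is precisely why the present expansion is shorter and produces a single error term.

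The plan is to follow the same sign manipulations used in the proof of Proposition \ref{propSemi}. Splitting $J_\pm$ according to $\sgn(k)$, writing $|k|=\sgn(k)\,k$, and using the homogeneity $R_\pm(s\,k)=s\,R_\pm(k)$ for $s\in\{\pm1\}$, the change of variable $k\to -k$ on the negative half-line collapses $J_\pm$ into an integral over $k>0$ of $R_\pm(k)\,e^{\mp ik|x|}$ against the odd part $\psiht(k)-\psiht(-k)$. The clean contribution is then isolated by comparing with $F^{\hbar}_{\pm,0}\big(\mp\sgn(q)|x|\big)$ and $F^{\hbar}_{\pm,0}\big(\pm\sgn(q)|x|\big)$, each likewise reduced to a $k>0$ integral by the same substitution; which of the two plays the role of the ``clean'' reflected term is dictated by $\sgn(p)$ through the partition $1 = \HE(qp)+\HE(-qp)$. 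Upon subtracting the surviving $F^{\hbar}_{\pm,0}$, the exponentials recombine so that the remaining integrand carries the factor $e^{i\sgn(qp)k|x|}-e^{-i\sgn(qp)k|x|}$ together with the argument $\psiht(-\sgn(p)k)$, which is exactly $E^{\hbar}_{3,\pm}$.

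Concretely I would verify the four sign cases $\sgn(q),\sgn(p)\in\{\pm1\}$ in turn: in each of them exactly one of $\HE(\pm qp)$ survives, and a direct cancellation between $J_\pm$ and that surviving $F^{\hbar}_{\pm,0}$ leaves precisely $E^{\hbar}_{3,\pm}$, with the prefactor $\sgn(qx)$ and the outer sign $\pm$ emerging from the oddness of $R_\pm$ and from $\sgn(x)\sgn(q)=\sgn(qx)$. The main obstacle is purely the bookkeeping of the interlocking signs of $q$, $p$, $k$ and $x$, namely relating the $\sgn(q)$ appearing in the clean exponent to the $\sgn(qp)$ and $\sgn(p)$ appearing in $E^{\hbar}_{3,\pm}$; this is handled by the same identity $\HE(\pm\sgn(q)k)=\HE(\pm qp)\mp\sgn(qp)\,\HE(-\sgn(p)k)$ already exploited in Proposition \ref{propSemi}. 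Once this is organized, every remaining step is an elementary change of variable.
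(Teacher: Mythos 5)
Your proposal is correct and follows essentially the same route as the paper: both start from the stationary representation $\Omega_\beta^\pm=\FF_\pm^*\FF$, reduce to the single reflected integral, and then separate the clean $F^{\hbar}_{\pm,0}$ contribution from $E^{\hbar}_{3,\pm}$ via the identity $\HE(\pm\sgn(q)k)=\HE(\pm qp)\mp\sgn(qp)\,\HE(-\sgn(p)k)$ together with the homogeneity $R_\pm(sk)=sR_\pm(k)$. The only difference is organizational (you reduce to half-line integrals and check the four sign cases, while the paper applies the Heaviside identities directly), and your identification of the bookkeeping of signs as the sole obstacle is accurate.
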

\begin{proof} 
First notice that, from \eqref{st-rep}, \eqref{eq: eigenfdeltap} and \eqref{eq: FFpm} it follows
\[
\big(\Omega_\beta^{\pm}\, \psih\big) (x) 
 = \psih(x) + {\sgn(x) \over \sqrt{2\pi}} \int_{\RE}\! dk\;e^{\mp i \,|k|\,|x|}\, R_{\pm}(k)\, \psiht(k)\,.
\]
From here and from the identities $\HE\big(\!\pm \sgn(q) k\big) + \HE\big(\!\mp \sgn(q) k\big) = 1$ and $ \HE\big(\!\pm \sgn(q) k\big)  = \HE(\pm q p) \mp \sgn(q p)\, \HE\big(\!-\sgn(p) k\big)$, we obtain
\begin{align*}
& \big(\Omega_\beta^{\pm}\, \psih\big) (x) 
	= \psih(x) + {\sgn(x) \over \sqrt{2\pi}} \int_{\RE}\! dk\;e^{\mp i \,|k|\,|x|}\Big[\HE\big(\!\pm \sgn(q) k\big) + \HE\big(\! \mp \sgn(q) k\big) \Big]\, R_{\pm}(k)\, \psiht(k) \\
& = \psih(x) + {\sgn(x) \over \sqrt{2\pi}} \int_{\RE}\! dk\,\Big[\HE\big(\!\pm \sgn(q) k\big)\,e^{- i \sgn(q) k |x|} + \HE\big(\!\mp \sgn(q) k\big)\,e^{i \sgn(q) k |x|} \Big]\, R_{\pm}(k)\, \psiht(k) \\
& = \psih(x) 
+ {\sgn(x) \over \sqrt{2\pi}} \int_{\RE}\! dk \left[ \HE(\pm q p)\,e^{-i \sgn(q) k\,|x|} + \HE(\mp q p)\, e^{i \sgn(q) k\,|x|} \right] R_{\pm}(k)\,\psiht(k) \\
& \qquad \mp {\sgn(x)\,\sgn(q p) \over \sqrt{2\pi}} \int_{\RE}\! dk \left( e^{-i \sgn(q) k\,|x|} - \, e^{i \sgn(q) k\,|x|} \right) \HE\big(\!-\sgn(p) k\big)\,R_{\pm}(k)\, \psiht(k) \,,
\end{align*}
which, noting that $R_{\pm}\big(\!-\sgn(p) k\big) = -\sgn(p)\,R_{\pm}(k)$ (see Eq. \eqref{eq: Rpm}), is equivalent to Eq. \eqref{Omega_beta}. 
\end{proof}

\begin{lemma}\label{LemmaE3} 
There exists a constant $C>0$ such that, for any $\psih \in L^2(\RE)$ of the form \eqref{CohSt} with $qp\not=0$, there holds
\begin{equation*}
\big\|E^{\hbar}_{3,\pm}\big\|_{L^2(\RE)} \leq C\, e^{- {p^2 \over \hbar |\sigmap|^{2}}}\,.
\end{equation*}
\end{lemma}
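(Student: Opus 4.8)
The plan is to follow almost verbatim the proof of Lemma~\ref{LemmaE2bis}, since $E^{\hbar}_{3,\pm}$ has the same structure: an oscillatory integral in $k$ against the Gaussian $\psiht(-\sgn(p)k)$, with the integration confined to the half-line $k>0$ by the factor $\HE(k)$. Writing $g_{\pm}(k):=\HE(k)\,R_{\pm}(k)\,\psiht(-\sgn(p)k)$, I would first note that $|\sgn(qx)|=1$ may be discarded pointwise and that the bracket $e^{i\sgn(qp)k|x|}-e^{-i\sgn(qp)k|x|}$ splits into two pieces; the triangle inequality then gives
$$\big\|E^{\hbar}_{3,\pm}\big\|_{L^2(\RE)} \leq \frac{1}{\sqrt{2\pi}}\Big\|\textstyle\int_{\RE}dk\; e^{i\sgn(qp)k|\cdot|}\,g_{\pm}(k)\Big\|_{L^2(\RE)} + \frac{1}{\sqrt{2\pi}}\Big\|\textstyle\int_{\RE}dk\; e^{-i\sgn(qp)k|\cdot|}\,g_{\pm}(k)\Big\|_{L^2(\RE)}.$$
Each of the two terms is, up to the constant $\sqrt{2\pi}$ and the sign $\sgn(qp)\in\{\pm1\}$, an inverse Fourier transform of $g_{\pm}$ evaluated at $\pm\sgn(qp)|x|$; hence, using $\|h(\pm|\cdot|)\|_{L^2(\RE)}^2\le 2\|h\|_{L^2(\RE)}^2$ together with the unitarity of $\FF$, each term is bounded by $\sqrt{2}\,\sqrt{2\pi}\,\|g_{\pm}\|_{L^2(\RE)}$; together with the two prefactors $1/\sqrt{2\pi}$ this gives $\|E^{\hbar}_{3,\pm}\|_{L^2(\RE)}\le 2\sqrt{2}\,\|g_{\pm}\|_{L^2(\RE)}$.

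It then remains to estimate $\|g_{\pm}\|_{L^2(\RE)}$. Here I would invoke the uniform bound $|R_{\pm}(k)|\le 1$, which is immediate from \eqref{eq: Rpm}, to get $\|g_{\pm}\|_{L^2(\RE)}^2\le\int_0^\infty |\psiht(-\sgn(p)k)|^2\,dk$. Inserting the explicit form \eqref{psih} and recalling that $\Re(\sigmaq/\sigmap)=|\sigmap|^{-2}$ (a consequence of \eqref{sxsp1}), one finds $|\psiht(-\sgn(p)k)|^2=\frac{1}{|\sigmap|}\sqrt{2\hbar/\pi}\;e^{-\frac{2\hbar}{|\sigmap|^2}(k+|p|/\hbar)^2}$.

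The essential point — exactly as in Lemma~\ref{LemmaE2bis} — is that this Gaussian is centered at $k=-|p|/\hbar<0$, whereas the factor $\HE(k)$ restricts the integral to $k>0$; this mismatch is the source of the decay. Concretely, for $k\ge 0$ one estimates $(k+|p|/\hbar)^2\ge k^2+p^2/\hbar^2$, extracts the factor $e^{-2p^2/(\hbar|\sigmap|^2)}$, and computes the remaining Gaussian integral over $[0,\infty)$ to obtain $\|g_{\pm}\|_{L^2(\RE)}^2\le\frac12\,e^{-2p^2/(\hbar|\sigmap|^2)}$. Combined with the reduction above this yields $\|E^{\hbar}_{3,\pm}\|_{L^2(\RE)}\le 2\,e^{-p^2/(\hbar|\sigmap|^2)}$, i.e.\ the claim with $C=2$. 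I do not expect any real obstacle: the argument is a transcription of the one for $E^{\hbar}_{2,t}$, the only bookkeeping being the modulus-one prefactor, the splitting of the sine into two exponentials (which here replaces the factor $|R_--R_+|\le 2$ of Lemma~\ref{LemmaE2bis} by $|R_{\pm}|\le 1$), and the correct location of the Gaussian's center.
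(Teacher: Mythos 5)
Your proof is correct and follows essentially the same route as the paper's: discard the unimodular prefactor, reduce via the triangle inequality and the bound $\|h(\pm|\cdot|)\|_{L^2(\RE)}^2\le 2\,\|h\|_{L^2(\RE)}^2$ to the $L^2$-norm of $\HE(k)R_{\pm}(k)\,\psiht(-\sgn(p)k)$ through Plancherel, then use $|R_{\pm}(k)|\le 1$ and the mismatch between the Gaussian's center $k=-|p|/\hbar$ and the half-line $k>0$ to extract the factor $e^{-2p^2/(\hbar|\sigmap|^2)}$. Only the bookkeeping of absolute constants differs, which is immaterial for the statement.
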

\begin{proof}
By the elementary inequality $\| \psi ( |\cdot|)\|_{L^2(\RE)}^2+\| \psi (-|\cdot|)\|_{L^2(\RE)}^2  \leq  4\,\| \psi \|_{L^2(\RE)}^2$, by unitarity of the Fourier transform, and by the basic bound $| R_{\pm}(k)|\leq 1$, we infer that 
\begin{align*}
\big\|E^{\hbar}_{3,\pm}\big\|_{L^2(\RE)}^2 
& \leq 4  \left\|\frac{1}{\sqrt{2\pi}} \int_\RE\! dk\;  e^{-ik x}\, \HE(k)\,R_{\pm}(k)\, \psiht\big(\!-\sgn(p)k\big) \right\|_{L^2(\RE)}^2 \\
& = 4  \int_{0}^{\infty}\!\!\!dk\; \big| R_{\pm}(k)\big|^{2}\, \big|\psiht\big(\!-\sgn(p) k\big)\big|^2 \leq 4 \int_{0}^{\infty}\!\!\!dk\; \big|\psiht\big(\!-\sgn(p) k\big)\big|^2 \\ 
& = {4 \over |\sigmap|} \left({2 \hbar \over \pi}\right)^{\!\!1/2} \int_{0}^{\infty}\!\!\!dk\; e^{-2 {\hbar (k + |p|/\hbar)^2 \over |\sigmap|^{2}}} 
\leq {4 \over |\sigmap|} \left({2 \hbar \over \pi}\right)^{\!\!1/2} e^{- {2 p^2 \over \hbar |\sigmap|^{2}}} \int_{0}^{\infty}\!\!\!dk\; e^{- {2 \hbar k^2 \over |\sigmap|^{2}}}
= 2\, e^{- {2 p^2 \over \hbar |\sigmap|^{2}}}\,,
\end{align*}
which yields the thesis.
\end{proof}

\subsection{Proof of Theorem \ref{thm: Ompm}}\label{convWops}
We first prove claim \eqref{t2_1}.

Preliminarily we apply the classical wave operators $W_{{B}}^{\pm}$, with ${B}(p) := -\,(2 \beta/\hbar^3)\, p^2$ to the state $\phi^{\hbar}_{\sigma_0,\,(\cdot)}(\xi)$, with $\xi = (q,p)$. Recalling the definition \eqref{eq: Rpm} of $R_{\pm}(k)$ and noting the basic identity $\mp \sgn(p) \HE(\mp q p) = \sgn(q)\, \HE(\mp q p)$, from Eq. \eqref{eq: waveopcl} we infer:
\begin{align*}
\big(W_{{B}}^{\pm} \phi^{\hbar}_{\sigmazero,x} \big)(\xi) 
& = \phi^{\hbar}_{\sigma,x}(\xi) - {\HE(\mp q p)\, \over 1 \mp {i\,\hbar^3 \over m \beta |p|}}\, \big(\phi^{\hbar}_{\sigma,x}(\xi) - \phi^{\hbar}_{\sigma,x}(-\xi)\big) \\
& = \phi^{\hbar}_{\sigma,x}(\xi) \mp \sgn(p)\, \HE(\mp q p)\,  R_{\pm}(p/\hbar)\, \big(\phi^{\hbar}_{\sigma,x}(\xi) - \phi^{\hbar}_{\sigma,x}(-\xi)\big) \\
& = \phi^{\hbar}_{\sigma,x}(q,p) + \sgn(q)\, \HE(\mp q p)\,  R_{\pm}(p/\hbar)\, \big(\phi^{\hbar}_{\sigma,x}(\xi) - \phi^{\hbar}_{\sigma,x}(-\xi)\big) \,.
\end{align*}
On the other hand, recalling the Identity \eqref{Omega_beta} established in Lemma \ref{lemma: Ompm}, noting the basic inequality $\big|R_{\pm}(k)\big| \leq 1$ and using the estimates reported in Lemmata \ref{LemmaEpbis}, \ref{lemma: sgnpsi} and \ref{LemmaE3}, we obtain the following for any $\eta \in (0,1)$:
\begin{align*}
& \left\|\Omega_\beta^{\pm}\, \psi^{\hbar}_{\sigmazero,\xi} - \Big[\psi^{\hbar}_{\sigmazero,\xi} + \sgn(q)\, \HE(\mp q p)\,  R_{\pm}(p/\hbar)\, \big(\psi^{\hbar}_{\sigmazero,\xi} - \psi^{\hbar}_{\sigmazero,-\xi}\big) \Big] \right\|_{L^2(\RE)}\\
& \leq C \left[ {\eta \over (1- \eta)}\, \Big({\hbar^3 \over m |\beta p|}\Big) + \Big(1 + {\hbar^3 \over m |\beta p|}\Big)\, e^{-\,\eta^2 {\sigmazero^2 p^2 \over 2\hbar}}	+ e^{- \frac{q^2}{4\hbar \sigmazero^2}} + e^{- {\sigmazero^2 p^2 \over \hbar}} \right] .
\end{align*}
Then, the proof is concluded recalling that $\psi^{\hbar}_{\sigma_0,\xi}(x) = \phi^{\hbar}_{\sigma,x}(\xi)$ (see Eq. \eqref{phidef}).

Next we prove claim \eqref{t2_2}.
To begin with, we apply the classical scattering operator with ${B}(p) := -(2  \beta/\hbar^3)\,p^2$ to the state $\phi^{\hbar}_{\sigmazero,(\cdot)}(\xi)$. Recalling the definition \eqref{eq: Rpm} of $R_{\pm}(k)$, from Eq. \eqref{eq: scattopcl} we obtain
\begin{align}
\big(S^{cl}_{B} \phi^{\hbar}_{\sigmazero,(\cdot)}\big)(\xi) =\, &\,  \phi^{\hbar}_{\sigmazero,(\cdot)}(\xi) + \sgn(p)\, {R_{-}(p/\hbar)} \big(\phi^{\hbar}_{\sigmazero,(\cdot)}(\xi) - \phi^{\hbar}_{\sigmazero,(\cdot)}(-\xi)\big)  \nonumber \\ 
=\, &\,  \psi^{\hbar}_{\sigmazero,\xi} + \sgn(p)\, R_{-}(p/\hbar)\,\big(\psi^{\hbar}_{\sigmazero,\xi} - \psi^{\hbar}_{\sigmazero,-\xi}\big)\,. \label{S-phi}
\end{align}
On the other hand, recalling the basic Identity \eqref{orto}, by simple addition and subtraction arguments and by the triangular inequality we get
\begin{equation*}
\begin{aligned}
 \big\|(\Omega_\beta^{+})^{*} \Omega_\beta^-\, \psih_{\sigmazero,\xi} - \big(S^{cl}_{B} \phi^{\hbar}_{\sigmazero,(\cdot)}\big)(\xi) \big\|_{L^2(\RE)}   
\leq\, &  \,\left\|(\Omega_\beta^{+})^{*} \Big(\Omega_\beta^-\, \psih_{\sigmazero,\xi} - \big(W_{{B}}^- \phih_{\sigmazero,(\cdot)}\big)(\xi) \Big)\right\|_{L^2(\RE)} \\ 
& + 
\big\|(\Omega_\beta^{+})^{*} P_{ac} \big(W_{{B}}^-\, \phih_{\sigmazero,(\cdot)}\big)(\xi) - P_{ac} \big(S^{cl}_{B} \phi^{\hbar}_{\sigmazero,(\cdot)}\big)(\xi) \big\|_{L^2(\RE)}
 \\
&+ \big\|(\Omega_\beta^{+})^{*} P_{\beta} \big(W_{{B}}^- \,\phih_{\sigmazero,(\cdot)}\big)(\xi) \big\|_{L^2(\RE)}
+ \big\|P_{\beta} \big(S^{cl}_{B} \phi^{\hbar}_{\sigmazero,(\cdot)}\big)(\xi) \big\|_{L^2(\RE)} \,. 
\end{aligned}
\end{equation*}

Firstly, from Identity \eqref{S-phi} and Lemma \ref{LemmaProj} (see, in particular, Eq. \eqref{Pacsi0}), noting once more the basic inequality $|R_{\pm}(p/\hbar)| \leq 1$ we infer
\begin{align}
\big\|P_{\beta}\big(S^{cl}_{B} \phi^{\hbar}_{\sigmazero,(\cdot)}\big)(\xi) \big\|_{L^2(\RE)}
& \leq \big\|P_{\beta} \psih_{\sigmazero,\xi}\big\|_{L^2(\RE)} + \big|R_{-}(p/\hbar)\big|\, \Big(\big\|P_{\beta} \psih_{\sigmazero,\xi}\big\|_{L^2(\RE)} + \big\|P_{\beta} \psih_{\sigmazero,-\xi}\big\|_{L^2(\RE)} \Big) \nonumber \\
& \leq C \left({\hbar^{5} \sigma_0^2 \over m^2 \beta^2}\right)^{\!\!1/4} e^{{\hbar^5 \sigma_0^2 \over m^2 \beta^2}} \left( e^{- {\sigmazero^2 p^2 \over \hbar}} + e^{- \frac{q^2}{4\hbar \sigmazero^2}} \right) . \label{cortez1}
\end{align}
Secondly, let us notice that $\|\Omega_{\beta}^+ \psi\|_{L^2(\RE)} \leq \|\psi\|_{L^2(\RE)}$,  since $\Omega_{\beta}^+$  is the strong limit of operators with unit norm; thus, the same holds true for the adjoint $(\Omega_{\beta}^{+})^{*}$. Hence, by arguments similar to those described above, in view of Eq. \eqref{eq: waveopcl} and of Lemma \ref{LemmaProj}, we have
\begin{align}
\big\|(\Omega_{\beta}^{+})^{*} P_{\beta} \big(W_{{B}}^- \phih_{\sigmazero,(\cdot)}\big)(\xi) \big\|_{L^2(\RE)} 
& \leq \big\|P_{\beta} \big(W_{{B}}^- \phih_{\sigmazero,(\cdot)}\big)(\xi)\big\|_{L^2(\RE)} \nonumber \\
& \leq \big\|P_{\beta} \psih_{\sigmazero,\xi } \big\|_{L^2(\RE)}
+ \HE(q p)\,\big|R_{-}(p/\hbar)\big|\, \Big( \big\|P_{\beta}\psih_{\sigmazero ,\xi }\big\|_{L^2(\RE)}
+ \big\|P_{\beta}\psih_{\sigmazero,-\xi }\big\|_{L^2(\RE)}\Big)  \nonumber \\
& \leq C \left({\hbar^{5} \sigma_0^2 \over m^2 \beta^2}\right)^{\!\!1/4} e^{{\hbar^5 \sigma_0^2 \over m^2 \beta^2}} \left( e^{- {\sigmazero^2 p^2 \over \hbar}} + e^{- \frac{q^2}{4\hbar \sigmazero^2}} \right) . \label{cortez2}
\end{align}
To say more, again from the bound on $(\Omega_{\beta}^{+})^{*}$, we infer 
\[
\Big\|(\Omega_{\beta}^{+})^{*} \Big(\Omega_{\beta}^-\,\psih_{\sigmazero,\xi} - \big(W_{{B}}^-\, \phih_{\sigmazero,(\cdot)}\big)(\xi) \Big)\Big\|_{L^2(\RE)} 
\leq  \big\|\Omega_{\beta}^-\,\psih_{\sigmazero,\xi} - \big(W_{{B}}^- \phih_{\sigmazero,(\cdot)}\big)(\xi) \big\|_{L^2(\RE)} \,, 
\]
which is bounded by Eq. \eqref{t2_1} (proven previously). 
\\
Finally, on account of the unitarity of $\Omega_{\beta}^{+}$ on $\ran(P_{ac})$, we obtain
\[
\begin{aligned}
&  \big\|(\Omega_{\beta}^{+})^{*} P_{ac}\big(W_{{B}}^- \phih_{\sigmazero,(\cdot)}\big)(\xi) - P_{ac}\big(S^{cl}_{B} \phi^{\hbar}_{\sigmazero,(\cdot)}\big)(\xi) \big\|_{L^2(\RE)}  \\ 
&  = \big\| P_{ac} \big(W_{{B}}^- \phih_{\sigmazero,(\cdot)}\big)(\xi) - \Omega_{\beta}^{+}\, P_{ac} \big(S^{cl}_{B} \phi^{\hbar}_{\sigmazero,(\cdot)}\big)(\xi) \big\|_{L^2(\RE)} \\
&  \leq  \big\| \big(W_{{B}}^- \phih_{\sigmazero,(\cdot)}\big)(\xi) - \Omega_{\beta}^{+} \big(S^{cl}_{B} \phi^{\hbar}_{\sigmazero,(\cdot)}\big)(\xi) \big\|_{L^2(\RE)}  +  \big\| P_{\beta}  \big(W_{{B}}^- \phih_{\sigmazero,(\cdot)}\big)(\xi) \big\|_{L^2(\RE)}  +  \big\|  \Omega_{\beta}^{+} P_{\beta} \big(S^{cl}_{B} \phi^{\hbar}_{\sigmazero,(\cdot)}\big)(\xi) \big\|_{L^2(\RE)} \,.
\end{aligned}
\]
On the one hand, using once more arguments analogous to those described in the proof of the bounds \eqref{cortez1} and \eqref{cortez2}, we get
\[
\big\| P_{\beta} \big(W_{{B}}^- \phih_{\sigmazero,(\cdot)}\big)(\xi) \big\|_{L^2(\RE)}  \leq 
C \left({\hbar^{5} \sigma_0^2 \over m^2 \beta^2}\right)^{\!\!1/4} e^{{\hbar^5 \sigma_0^2 \over m^2 \beta^2}} \left( e^{- {\sigmazero^2 p^2 \over \hbar}} + e^{- \frac{q^2}{4\hbar \sigmazero^2}} \right)
\]
and
\[
\big\|  \Omega_{\beta}^{+} P_{\beta} \big(S^{cl}_{B} \phi^{\hbar}_{\sigmazero,(\cdot)}\big)(\xi) \big\|_{L^2(\RE)}  \leq 
\big\| P_{\beta}  \big(S^{cl}_{B} \phi^{\hbar}_{\sigmazero,(\cdot)}\big)(\xi) \big\|_{L^2(\RE)}  \leq 
 C \left({\hbar^{5} \sigma_0^2 \over m^2 \beta^2}\right)^{\!\!1/4} e^{{\hbar^5 \sigma_0^2 \over m^2 \beta^2}} \left( e^{- {\sigmazero^2 p^2 \over \hbar}} + e^{- \frac{q^2}{4\hbar \sigmazero^2}} \right) .
\]
On the other hand, since $ \Omega_{\beta}^{+} S^{cl}_{B} \phi^{\hbar}_{\sigmazero,(\cdot)}(\xi)  =  S^{cl}_{B}  \Omega_{\beta}^{+} \phi^{\hbar}_{\sigmazero,(\cdot)}(\xi)$ (due to the fact that the operators $ \Omega_{\beta}^{+}$ and $S^{cl}_{B}$ act on different variables), on account of Remark \ref{r:Wpm-unitary} we infer 
\[\begin{aligned}
&  \big\| \big(W_{{B}}^- \phih_{\sigmazero,(\cdot)}\big)(\xi) - \Omega_{\beta}^{+}  \big(S^{cl}_{B} \phi^{\hbar}_{\sigmazero,(\cdot)}\big)(\xi) \big\|_{L^2(\RE)}   \\
& = \big\|\big(W_{{B}}^-\phih_{\sigmazero,(\cdot)}\big)(\xi)  - \big(S^{cl}_{B} \Omega_{\beta}^{+} \phi^{\hbar}_{\sigmazero,(\cdot)}\big)(\xi) \big\|_{L^2(\RE)}  \\ 
& =  \big\|\big(S^{cl}_{B} \big(\Omega_{\beta}^{+} - W_{{B}}^+\big)\phi^{\hbar}_{\sigmazero,(\cdot)}\big)(\xi) \big\|_{L^2(\RE)}  \\
& \leq C\left(\big\| \big(\big(\Omega_{\beta}^{+} - W_{{B}}^+\big)\phi^{\hbar}_{\sigmazero,(\cdot)}\big)(\xi) \big\|_{L^2(\RE)} + \big\| \big(\big(\Omega_{\beta}^{+} - W_{{B}}^+\big)\phi^{\hbar}_{\sigmazero,(\cdot)}\big)(-\xi) \big\|_{L^2(\RE)} \right)  \\ 
& \leq C \left[ {\eta \over (1- \eta)}\, \Big({\hbar^3 \over m |\beta p|}\Big) + \Big(1 + {\hbar^3 \over m |\beta p|}\Big)\, e^{-\,\eta^2 {\sigmazero^2 p^2 \over 2\hbar}}	+ e^{- \frac{q^2}{4\hbar \sigmazero^2}} + e^{- {\sigmazero^2 p^2 \over \hbar}} \right]\,,
 \end{aligned}
 \]
where in  the last two inequalities we used $ |\big(S^{cl}_{B} f\big)(q,p) | \leq C \big(|f(q,p)|+|f(-q,-p)|\big)$ (see Eq. \eqref{eq: scattopcl} and note that $\big|1 - {2i\,|p| \over m\,{B}(p)}\big|^{-1}\leq 1$) and the bound in Eq. \eqref{t2_1}. 

Summing up, the above estimates imply Eq. \eqref{t2_2}. 

The proof of Corollary \ref{cor: Ompm} is similar to that of Corollary \ref{c:1} and we omit it for brevity.

\end{document}